\documentclass[11pt]{amsart}

\usepackage{amsfonts,amssymb,stmaryrd,amscd,amsmath,latexsym,amsbsy}
\usepackage{graphicx}
\usepackage{amssymb}
\usepackage{amsfonts}
\usepackage{latexsym}

\newtheorem{theorem}{Theorem}[section]
\newtheorem{lemma}[theorem]{Lemma}

\newtheorem{corollary}[theorem]{Corollary}
\theoremstyle{definition}

\newtheorem{example}[theorem]{Example}

\newtheorem{remark}[theorem]{Remark}

%%%%%%%%%%%%%%% symbols shorthand %%%%%%%%%%%
%\renewcommand{\text}[1]{\mbox{{\rm #1}}}

%{\text{Tr}\,}

%\newcommand{\rtimes}{{>\!\!\!\triangleleft}}
%\newcommand{\ltimes}{{\triangleright\!\!\!<}}

%%%%%{\frac{1}{2}}

\newcommand{\ben}{\begin{enumerate}}
\newcommand{\een}{\end{enumerate}}

\theoremstyle{plain}

\newtheorem*{sol}{Solution}

\theoremstyle{definition}

\theoremstyle{remark}

\newcommand{\solu}[1]{\begin{sol}{\bf (\ref{#1})}}

\pagestyle{plain}

\begin{document}

\title{On the breakup of air bubbles in Hele-Shaw cell}

\author{Vladimir Entov}
\address{V.E.: Deceased}

\author{Pavel Etingof}
\address{P.E.: Department of Mathematics, Massachusetts Institute of
Technology, Cambridge, MA 02139, U.S.A.}
\email{etingof@math.mit.edu}

\maketitle

\section{Introduction}

The problem of contraction of an air bubble in a Hele-Shaw cell filled with a 
Newtonian fluid under the influence of suction of air from the
bubble has been intensively studied by many physicists and
mathematicians for more than 25 years, see \cite{GV} and
references therein. In particular, in \cite{EE1}, the authors 
suggested an analytic theory which allows one 
to give a complete description of the asymptotics of the 
bubble shape as its area goes to zero, and, in
particular, to find the point of its contraction. In a number of
applications of the contraction problem, for instance, in the
theory of gas recovery, the following question, only briefly 
discussed in \cite{EE1}, is important: will the air bubble
fall apart during contraction, or will it remain connected 
until all the air has been extracted? In this paper, we study
this question in detail. In particular, we propose some
sufficient conditions of breakup of the bubble, and ways 
to find the contraction points of its parts. In the theory of gas
recovery, these points are interpreted as the optimal positions of
the gas-producing wells. 

We note that all the results of this paper
(with the exception of explicit solutions) 
extend to flows in a curved Hele-Shaw cell, 
along the lines of \cite{EE2,EV}. 

The structure of the paper is as follows. In Section 2, we
describe the mathematical model of the contraction problem, and 
recall some of the results from \cite{EE1} on the connection
between the dynamics of the bubble and its gravity potential.
We also study the asymptotic shapes of bubbles contracting 
to a degenerate critical point of the potential. 
In Section 3, we define the points of partial contraction, which
are contraction points of the bubbles which 
appear as a result of breaking of the initial bubble, 
and contract before the full contraction occurs,
and extend to them the results about points of complete
contraction from \cite{EE1}. In Section 4, we give 
a sufficient condition of breakup of a symmetric bubble;
this is an extension of a result from \cite{EE1}. In Section 5, 
we consider the process of regulated contraction, 
which is simultaneous contraction of two or more bubbles 
under prescribed rates of extraction from each bubble. 
In Section 6, we consider the case of a bubble which breaks up
into two bubbles under contraction, and discuss the question
whether one can use regulated contraction to make these two
bubbles contract simultaneously (development of singlularities in
solutions may be a problem). A strategy of extraction 
which allows one to do so is called a synchronizing strategy, 
and we study such strategies in some detail. In particular, in
Section 7 we study the asymptotics of contraction under a
synchronizing strategy, and show that, like in the case of a
single bubble, the two bubbles contract at critical points of 
the potential (generically, nondegenerate local minima),
and the bubble shapes generically tend to ellipses, 
whose axes are determined by the eigenvalues of the Hessian 
of the potential at the minima. In Section 8, we characterize
domains that are on the boundary between those that admit a
synchronizing strategy and those that don't: we show that the
potential of such a domain should have a degenerate critical
point of the potential;  we also study the asymptotics 
of contraction to such a point. In Section 9 we characterize domains that are on
the boundary between those that break up and those that don't;
generically they develop an instantaneous $5/2$-cusp in the process of
contraction. In Section 10 we correct some computational errors
in the previous publications \cite{EE1,EV}.

{\bf Acknowledgments.} 
The work of the second author was  partially supported by the NSF grant
DMS-0504847. He thanks B. Gustafsson and D. Jerison 
for useful discussions. 

{\bf Note.} Sadly, the first author of this paper, Vladimir
M. Entov, passed away on April 10, 2008. 

\section{The mathematical model and its main properties}

\subsection{The model}
Let us recall the formulation of the problem and the main results
from the paper \cite{EE1}. Consider contraction of an air bubble
in an unbounded Hele-Shaw cell, which is filled with a Newtonian
fluid, under suction of air from the bubble. Let $B(t)$ be the
air domain at a time $t$. We assume that it is connected and
bounded, with (say) a smooth boundary. 
Its law of evolution in time is as
follows. In the fluid domain $B(t)^c$ (the complement of $B(t)$), 
which we assume to be connected, 
there is a potential vector field of fluid velocities, 
$v(x,y,t)=\nabla \Phi(x,y,t)$. The potential $\Phi$ is determined
at any time $t$ as a solution of the boundary value problem
\begin{equation}\label{e1}
\frac{\partial^2\Phi}{\partial
x^2}+\frac{\partial^2\Phi}{\partial y^2}=0,\ (x,y)\in B(t)^c;
\Phi|_{\partial B(t)}=0;\ \Phi(x,y)=-\frac{q}{2\pi}\log(r)+O(1),
r\to \infty,
\end{equation} 
where $r=(x^2+y^2)^{1/2}$, and $q>0$ is the rate of
suction. The velocity of the boundary is then equal 
to the velocity of the fluid particles on the boundary: 
\begin{equation}\label{e2}
v_b=\frac{\partial \Phi}{\partial n}.
\end{equation}

The contraction problem is to find the family of domains $B(t)$ 
for a given initial shape of the bubble $B(0)$ and given
rate $q$ of suction. It can be shown that for every 
initial domain with a smooth boundary, 
the solution of this problem exists and is unique 
on some interval of time $[0,\tau)$, $\tau>0$. 
It has an obvious monotonicity property $B(t_1)\subset B(t_2)$
for $t_1>t_2$. 

In a similar way one can define contraction of several
bubbles. In this case, the domain $B(t)$ is a union
of finitely many disjoint simply connected domains (bubbles). 

\subsection{Weak solutions}
Equations (\ref{e1},\ref{e2}) determine contraction 
as long as it reduces to continuous deformation of the boundaries
of the bubbles. However, in the process of contraction, the
boundary $B(t)$ may undergo topological transformations. 
For instance, parts of the boundary can collide (Fig.1). 

In this case, the above definition of the contraction does not apply
as it is, and needs clarification. 

Namely, let $S$ be the area of $B(0)$, and $t^*=S/q$ be 
the time of complete extraction of the air. Let us call a
family of bounded domains $B(t)$, $t\in [0,t^*)$, a {\it
weak solution} of the contraction problem if for every
$t\in [0,t^*)$, $B(t)$ is a disjoint union of finitely many
simply connected domains, so that 

(i) $B(t_1)\subset B(t_2)$ for $t_1>t_2$;

(ii) the area of $B(t)$ is $S-qt$; and

(iii) there exists a closed set of times $T\subset [0,t^*]$
(of topological transformations) containing $0$, $t^*$ 
such that for any $t\in T$, a small enough interval
$(t,t+\varepsilon)$ does not intersect $T$, and on the intervals of time
$(\tau_1,\tau_2)$ not intersecting with $T$ the domain $B(t)$
is a ``classical'' solution of the contraction problem, i.e. is
defined by equations (\ref{e1},\ref{e2}). 

In other words, a weak solution is ``glued'' from
usual (classical) solutions at points $\tau\in T$ where 
the domain $B(t)$ undergoes
topological transformations. It is natural to assume that 
it describes the actual process of contraction in the case when 
the breakup of the bubble does occur. 

It is known (see \cite{GV}) that the contraction problem has a unique weak
solution. An example of a weak solution is given on Fig.1; from this
example one can see two types of topological transformations that
can occur during contraction: 

1) breakup of a bubble into two pieces, and 

2) disappearance of a bubble.

As a result of these transformations, the number of bubbles
changes in the process of contraction. 

\begin{remark} It can be shown that the set $T$ is finite.
However, a detailed proof of this would be long and we do not
give it here. 
\end{remark}

\subsection{The gravity potential}
Let us define the gravity potential of a bounded domain $B$ to be
the function 
\begin{equation}\label{e3}
\Pi_B(\xi,\eta)=\frac{1}{2\pi}\int_B \log |z-\zeta|dxdy,
\end{equation}
where $\zeta=\xi+i\eta, z=x+iy$. 
This function satisfies the Poisson differential equation in
$\Bbb R^2$ with logarithmic asymptotics at infinity:
\begin{equation}\label{e4}
\Delta \Pi_B=\chi_B,\
\Pi_B(\xi,\eta)=\frac{S}{2\pi}\log|\zeta|+O(1),\
\zeta\to\infty,
\end{equation}
where $\chi_B(\xi,\eta)$ is the characteristic function of $B$,
and $S$ is the area of $B$. 
It is shown in \cite{EE1} that this function is closely related 
to the contraction problem. Namely, we have the following
theorem. 

\begin{theorem}\label{t1} Let $B(t)$ be a weak solution to the
contraction problem. Then 

(i) The gravity potential inside $B(t)$ 
changes by a constant under contraction:
\begin{equation}\label{e5}
\Pi_{B(0)}(\xi,\eta)-\Pi_{B(t)}(\xi,\eta)=C(t),\ (\xi,\eta)\in
B(t). 
\end{equation}

(ii) Let us set $\Phi(x,y,t)=0$ if $(x,y)\in B(t)$. 
Then we have 
\begin{equation}\label{e6}
\Pi_{B(0)}(\xi,\eta)=K-\int_0^{t^*}\Phi(\xi,\eta,t)dt.
\end{equation}
\end{theorem}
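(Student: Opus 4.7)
The plan is to differentiate $\Pi_{B(t)}(\zeta)$ in $t$ and identify the derivative with a pointwise expression involving $\Phi$ via Green's identity. By Reynolds' transport theorem applied to (\ref{e3}), together with the evolution law (\ref{e2}), for any $\zeta\notin\partial B(t)$,
\[
\frac{\partial}{\partial t}\Pi_{B(t)}(\zeta) \;=\; \frac{1}{2\pi}\int_{\partial B(t)}\log|z-\zeta|\,\frac{\partial\Phi}{\partial n}\,ds,
\]
where $n$ is the outward normal to $B(t)$.

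I would then evaluate this boundary integral by applying Green's second identity in $\Omega_R := B(t)^c\cap D_R$, with $u=\Phi(\cdot,t)$ and $v=\log|z-\zeta|$, and sending $R\to\infty$. On $\partial B(t)$ one has $u=0$, so that boundary contributes exactly the integral above. The bulk term $\int_{\Omega_R}(u\Delta v-v\Delta u)\,dA$ equals $2\pi\Phi(\zeta,t)$ when $\zeta\in B(t)^c$ (from $\Delta\log|z-\zeta|=2\pi\delta_\zeta$) and $0$ when $\zeta\in B(t)$. On $\partial D_R$, expanding $\Phi=-\tfrac{q}{2\pi}\log r+a(t)+O(1/r)$ (where $a(t)$ is the constant term in the multipole expansion at infinity) and $\log|z-\zeta|=\log r+O(1/r)$ produces two individually divergent $q\log R$ contributions in $u\,\partial_\nu v$ and $v\,\partial_\nu u$, which cancel and leave the finite limit $2\pi a(t)$. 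Combining the three contributions, and using the convention $\Phi(\zeta,t)=0$ on $B(t)$, both cases unify to
\[
\frac{\partial}{\partial t}\Pi_{B(t)}(\zeta) \;=\; \Phi(\zeta,t)-a(t).
\]

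Integrating from $0$ to $t$ then gives
\[
\Pi_{B(0)}(\zeta)-\Pi_{B(t)}(\zeta) \;=\; \int_0^t a(s)\,ds\;-\;\int_0^t \Phi(\zeta,s)\,ds.
\]
For $\zeta\in B(t)$ the monotonicity $B(t)\subset B(s)$ for $s\le t$ forces $\Phi(\zeta,s)=0$ throughout $[0,t]$, yielding (i) with $C(t)=\int_0^t a(s)\,ds$. Taking $t=t^*$ and noting that $\Pi_{B(t^*)}\equiv 0$ (since $B(t^*)$ has zero area) gives (ii) with $K=\int_0^{t^*}a(s)\,ds$.

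The main technical obstacle is the careful asymptotic cancellation of the two divergent $q\log R$ terms on $\partial D_R$; isolating the finite residual $2\pi a(t)$ requires the precise normalization imposed in (\ref{e1}) and a second-order expansion of $\log|z-\zeta|$ at infinity. A secondary point is that the differential computation above is strictly classical, carried out on the open intervals of $[0,t^*]\setminus T$; one then extends the integrated identity across the (finite, by Remark 2.1) set $T$ of topological transition times using continuity of both sides in $t$, which follows from the monotonicity of $B(\cdot)$ and dominated convergence.
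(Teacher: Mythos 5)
Your proof is correct. The paper does not give its own argument for Theorem \ref{t1}; it simply cites \cite{EE1}, so there is no in-text proof to compare against. Your argument is the standard Richardson/Baiocchi-type derivation for Hele-Shaw moment identities, and all of its steps check out: Reynolds' transport correctly reduces $\partial_t\Pi_{B(t)}(\zeta)$ to a boundary integral; Green's second identity on $B(t)^c\cap D_R$ (with the outward normal to that annular region being $-n$ on $\partial B(t)$, and using $\Phi|_{\partial B(t)}=0$) turns that boundary integral into the distributional bulk term plus the term at $\partial D_R$; and the two $-q\log R$ contributions indeed cancel, leaving $2\pi a(t)$ in the $R\to\infty$ limit, where $a(t)$ is the $O(1)$ constant in the expansion of $\Phi$ at infinity. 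This gives the pointwise identity $\partial_t\Pi_{B(t)}(\zeta)=\Phi(\zeta,t)-a(t)$ with the convention $\Phi\equiv 0$ on $B(t)$, from which both (i) and (ii) follow by integration, monotonicity $B(t)\subset B(s)$, and $\Pi_{B(t)}\to 0$ as $t\to t^*$ since the area goes to zero and $\log|z-\zeta|$ is locally integrable.

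Two very minor remarks. First, a first-order expansion of $\log|z-\zeta|$ at infinity already suffices to see the cancellation (the cross terms are $O(\log R/R^2)$, integrating to $O(\log R/R)\to 0$); a second-order expansion is not actually needed. Second, for the passage across the exceptional set $T$ you invoke finiteness of $T$ via Remark 2.1, but in fact continuity of both sides of the integrated identity in $t$ already suffices even if $T$ were merely a closed null set, since the classical computation is carried out on the complementary open set and both $\Pi_{B(t)}(\zeta)$ and $\int_0^t\Phi(\zeta,s)\,ds$, $\int_0^t a(s)\,ds$ are continuous in $t$. Your proof also makes explicit that $C(t)=\int_0^t a(s)\,ds$ is the same constant on all connected components of $B(t)$ under free contraction (equal boundary pressures), which is a nice byproduct consistent with the contrast to Theorem \ref{t7} for regulated contraction.
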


\begin{example}
Let us say that a simply connected bounded domain $B$ is
algebraic of degree $d$ if its Cauchy transform 
$h_B=-\bar z+4\partial_z \Pi_B$
(which is analytic in $B$) is actually a rational function of
degree $d-1$ (see \cite{Gu,EV}). In this case the same is true for $B(t)$ for $t>0$,
and thus it may be shown, similarly to \cite{Gu}, that the
boundary of $B(t)$ is defined by the algebraic equation $Q(z,\bar
z)=0$, where $Q$ is a polynomial of degree $2d$.
The genus of this algebraic curve is thus $\le (2d-1)(d-1)$, 
and thus the number of components of $B(t)$ is at most 
$(2d-1)(d-1)$.  
\end{example}

\subsection{Points of complete contraction} 
Let us say that a point of $\Bbb R^2$ is a point of complete
contraction if it belongs to $B(t)$ for all $t\in [0,t^*)$. 
Thus the points of complete contraction are the points of
disappearance of the bubbles which ``survive'' until the time of
complete contraction $t^*$. The set of all points of complete 
contraction is the intersection $\cap_{t<t^*}B(t)$. 
In \cite{EE1} we described the structure of this set. 

\begin{theorem} (i) The set of points of complete contraction is
finite. 

(ii) The points of complete contraction are the global minimum
points of the gravity potential. 
\end{theorem}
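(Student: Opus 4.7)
The plan is to derive both parts from Theorem~\ref{t1}: part (ii) from the integral formula (\ref{e6}), and part (i) from the component structure of the weak solution near~$t^*$, together with the remark in Section~2.2 that the set of topological transformation times is finite.

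First I would establish that $\Phi(\xi,\eta,t)\le 0$ everywhere. Inside $B(t)$ this is the convention $\Phi\equiv 0$; in the open complement $B(t)^c$, $\Phi$ is harmonic with boundary value $0$ on $\partial B(t)$ and tends to $-\infty$ like $-\frac{q}{2\pi}\log r$ at infinity (with $q>0$), so the maximum principle gives $\Phi\le 0$ there, strict in the open exterior by the strong maximum principle. Substituting into (\ref{e6}),
\begin{equation*}
\Pi_{B(0)}(\xi,\eta) \;=\; K - \int_0^{t^*}\Phi(\xi,\eta,t)\,dt \;\ge\; K,
\end{equation*}
with equality iff $\Phi(\xi,\eta,t)=0$ for every $t\in[0,t^*)$. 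By the strict negativity of $\Phi$ outside $B(t)$ together with the monotonicity $B(t_1)\subset B(t_2)$ for $t_1>t_2$, this holds iff $(\xi,\eta)\in B(t)$ for all such $t$, i.e.\ iff $(\xi,\eta)$ is a point of complete contraction. This proves (ii), with minimum value $K$ attained precisely on the set of points of complete contraction.

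For (i), I would combine the weak-solution axiom that $B(t)$ is a disjoint union of \emph{finitely many} simply connected domains with the remark that $T$ is finite. Writing $T=\{0=\tau_0<\tau_1<\cdots<\tau_n=t^*\}$, on the last interval $(\tau_{n-1},t^*)$ no topological transformations occur, so the number $k$ of components of $B(t)$ is constant there, and each component evolves as a classical solution of the single-bubble contraction problem. By the classical theory each such component shrinks to a single point as $t\to t^*$; the set of points of complete contraction thus has exactly $k$ elements, and is in particular finite.

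The main obstacle is the appeal to finiteness of $T$, which the paper itself states without proof. A proof from scratch would instead need to rule out a continuum of global minima of $\Pi_{B(0)}$ directly: inside $B(0)$, $\Pi_{B(0)}$ is real-analytic and $\Delta\Pi_{B(0)}=1$, so a hypothetical arc of global minima would be an analytic curve along which the Hessian has eigenvalues $0$ and $1$. One natural route is the representation $\Pi_{B(0)}(\zeta)=|\zeta|^2/4+\Re H(\zeta)$ in $B(0)$, with $H'$ essentially the analytic part of the Cauchy transform $h_B$; the critical-point equation then becomes $\overline{H'(\zeta)}=-\zeta/2$, and one would need to show via Schwarz-function/analytic-continuation arguments that its zero set is finite for any bounded Hele-Shaw domain. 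I expect this to be the delicate step if one wants to avoid the remark.
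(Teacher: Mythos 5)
Your part (ii) matches the paper's own technique exactly: $\Phi\le 0$ with equality precisely on $\overline{B(t)}$, substituted into~(\ref{e6}), identifies the global minimum set of $\Pi_{B(0)}$ with $\cap_{t<t^*}B(t)$. This is word-for-word how the paper proves the analogous Theorem~\ref{t4}(ii).

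Part (i) is where your primary argument diverges from the paper and has a genuine gap. You lean on two things: finiteness of $T$, and the claim that ``by the classical theory each such component shrinks to a single point.'' The first is exactly the remark the paper flags as stated \emph{without} proof, and the definition of weak solution only forbids right-accumulation in $T$, so $T$ could in principle accumulate at $t^*$ from the left and your decomposition $T=\{\tau_0<\dots<\tau_n\}$ is not available for free. The second is circular: that a single contracting bubble converges to a \emph{point}, rather than to a nondegenerate continuum (arc, tree), is precisely the uniqueness/finiteness content of the theorem you are proving; the nested intersection $\cap_t B_j(t)$ is a priori only a compact, connected, simply connected set. The paper's route, visible in the proof of Theorem~\ref{t4}, avoids both issues entirely: the minimum set is a level set $\{\Pi_{B(0)}=K\}$ of a real-analytic function with $\Delta\Pi_{B(0)}=1$ inside $B(0)$, each connected component of it is compact, connected, simply connected and analytic, and this forces it to be a single point (this is the implication cited to~\cite{EE1}); finiteness of the number of points then follows from compactness of the analytic minimum set, with no appeal to $T$ at all. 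You do gesture at this alternative in your closing paragraph, but you frame it as needing to show the critical-point equation $\overline{H'(\zeta)}=-\zeta/2$ has finitely many zeros, which is both unnecessary and likely false in general (the potential can have infinitely many critical points outside the minimum set); the right target is that each connected component of the \emph{global-minimum} level set is a singleton. So: (ii) correct and same as the paper; (i) incorrect as the main argument stands, but your fallback sketch is pointing at the paper's actual mechanism --- you just need to replace ``finitely many critical points'' by ``each connected component of the compact analytic minimum set is a point, and there are finitely many components.''
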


\begin{remark}
Note that if the boundary of a domain $B$ is smooth, then 
by the Hopf's strong maximum principle, 
a global minimum of $\Pi_B$ cannot be attained on the boundary
of $B$.
\end{remark} 

\subsection{Asymptotics of contraction} 

It is shown in \cite{EV} that when a bubble contracts completely to a point,
and the Hessian of the potential at that point is nondegenerate, 
then the boundary of the bubble has the asymptotic shape of an
ellipse, whose half-axes are directed along 
the eigenvectors of the Hessian, and 
their lengths are inverse proportional to its eigenvalues. Here
we would like to extend this result to the case when the Hessian
may be degenerate. 

Namely, assume that the bubble $B$ contracts at a point $0$. 
In this case, $\Pi_B(0)$ is an isolated global minimim point
of $\Pi_B$, \cite{EE1,EV}. Let us assume that $0$ is a degenerate critical
point, and the kernel of the Hessian of $\Pi_B$ at zero is the
x-axis, i.e. $\Pi_B(x,y)=\frac{1}{2}y^2+O(|z|^3)$ near $0$. 
For simplicity let us first assume that the bubble is
symmetric with respect to the x-axis. 
Then 
$$
\Pi_B=\frac{1}{2}y^2+\frac{\beta}{2n} x^{2n}+...,
$$
where $\beta$ is some positive number,
and ... are monomials strictly inside the Newton polygon.  
Let us call $n$ the degree of the critical point $0$. 

Let $B_*(t)$ be image of the bubble at the time
$t$ under the renormalization 
$x\to cx$, $y\to c^{2n-1}y$, where 
$c=c(t)$ is chosen in such a way that 
the diameter of $B_*(t)$ is $2$ (so $c(t)$ behaves like 
$(t^*-t)^{-1/2n}$ as $t\to t^*$).     

Define the polynomials 
$$
Q_n(u)=\sum_{k=0}^n
\frac{(2k)!}{4^kk!^2}u^{n-k}.
$$
 
\begin{theorem}\label{symbub}
The boundary of the domain $B_*(t)$ tends to the curve 
$$
y^2=\beta^2(1-x^2)Q_{n-1}^2(x^2).
$$  
\end{theorem}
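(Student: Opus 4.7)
The plan is to combine the conservation law from Theorem~\ref{t1}(i),
\begin{equation*}
\Pi_{B(0)}(\xi,\eta)-\Pi_{B(0)}(0)=\Pi_{B(t)}(\xi,\eta)-\Pi_{B(t)}(0)\quad\text{on }B(t),
\end{equation*}
with the anisotropic rescaling $\xi=X/c$, $\eta=Y/c^{2n-1}$ ($c=c(t)\to\infty$), and to read off the shape of the limit domain $B_*=\lim_{t\to t^*}B_*(t)$ from the resulting one-dimensional integral equation. A monomial $\xi^i\eta^j$ becomes $X^iY^j/c^{i+(2n-1)j}$ under the rescaling. Since the Hessian kernel is the $x$-axis there is no $\xi^2$ term, so $n\ge 2$; the two Newton polygon vertices $\xi^{2n}$ and $\eta^2$ contribute at orders $c^{-2n}$ and $c^{-(4n-2)}$, and every monomial strictly inside the Newton polygon (those with $i/(2n)+j/2>1$) contributes at an order strictly finer than $c^{-2n}$. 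Hence the LHS equals $(\beta/(2nc^{2n}))\,X^{2n}+o(c^{-2n})$.

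For the RHS, I would rewrite $\log|z-\zeta|=-\log c+\frac12\log\bigl((X'-X)^2+(Y'-Y)^2/c^{4n-4}\bigr)$, observe that away from the diagonal this tends pointwise to $-\log c+\log|X'-X|$, and check that the near-diagonal zone $|X'-X|\lesssim c^{-(2n-2)}$ (where the anisotropy is felt) contributes only $O(\log c/c^{4n-2})=o(c^{-2n})$. After subtracting the value at the origin,
\begin{equation*}
\Pi_{B(t)}(\xi,\eta)-\Pi_{B(t)}(0)=\frac{1}{2\pi c^{2n}}\int_{B_*(t)}\log\frac{|X'-X|}{|X'|}\,dX'\,dY'+o(c^{-2n}).
\end{equation*}
Matching the $c^{-2n}$ coefficients and using the $x$-axis symmetry to write $B_*=\{|Y|\le F(X):\ |X|\le 1\}$, the $Y'$ integration reduces everything to the Tricomi-type equation
\begin{equation*}
\int_{-1}^{1}F(X')\log|X-X'|\,dX'=\frac{\pi\beta}{2n}X^{2n}+\mathrm{const},\qquad |X|\le 1.
\end{equation*}

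It remains to verify that $F(X)=\beta\sqrt{1-X^2}\,Q_{n-1}(X^2)$ solves this. Differentiating once in $X$ gives the equivalent singular-integral identity $\mathrm{PV}\!\int_{-1}^{1}\sqrt{1-t^2}\,Q_{n-1}(t^2)/(X-t)\,dt=\pi X^{2n-1}$. Combined with the classical formula $\mathrm{PV}\!\int_{-1}^{1}\sqrt{1-t^2}\,U_m(t)/(X-t)\,dt=\pi T_{m+1}(X)$ and the Chebyshev expansion $X^{2n-1}=2^{2-2n}\sum_{j=0}^{n-1}\binom{2n-1}{n-1-j}T_{2j+1}(X)$, this reduces to the polynomial identity
\begin{equation*}
Q_{n-1}(t^2)=\frac{1}{2^{2n-2}}\sum_{j=0}^{n-1}\binom{2n-1}{n-1-j}U_{2j}(t),
\end{equation*}
which one checks by matching coefficients of $t^{2(n-1-k)}$: this is precisely where the coefficients $(2k)!/(4^kk!^2)$ in the definition of $Q_n$ are forced.

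The main obstacle, to my mind, is not this Chebyshev computation but the passage to the limit itself: one has to justify that $B_*(t)$ actually converges (in Hausdorff sense, say) to the explicit curve rather than oscillating around it. The ``diameter $2$'' normalization supplies uniform bounds, so Hausdorff-compactness gives subsequential limits, and the nonnegative solution of the Tricomi equation vanishing at $\pm 1$ is unique, so any subsequential limit is the candidate. The delicate part is controlling the subleading errors on both sides uniformly in $t$---the near-diagonal contribution to the log integral and the Newton-interior Taylor contributions---in order to rigorously pass the integral equation to the limit.
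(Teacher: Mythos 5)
Your argument reaches the right endpoint but travels a genuinely different road. The paper's proof is complex-analytic: it writes the exterior conformal map as $f(\zeta)=A(\zeta+\zeta^{-1})+\phi(\zeta)$, invokes Richardson's theorem on the Schwarz function $h_B$ to force ${\rm Im}\,\phi$ on $|\zeta|=1$ to equal $-2\beta A^{2n-1}{\rm Im}(\zeta+\zeta^{-1})^{2n-1}_+ +O(A^{2n})$, and then reads off the parametrized limit curve after rescaling. Your proof instead works in real-variable terms, starting from the conservation law of Theorem~\ref{t1}(i), applying the anisotropic rescaling $(\xi,\eta)=(X/c,Y/c^{2n-1})$ to both sides, isolating the $c^{-2n}$ coefficient, and arriving at the logarithmic (Tricomi/airfoil) integral equation for the half-width $F$. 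This buys a cleaner conceptual picture---the limit domain is characterized as the unique bounded nonnegative solution of a classical singular integral equation---at the cost of having to control the near-diagonal contribution and to verify the Söhngen-type inversion. Interestingly, the combinatorial kernel is the same in both: the paper's step ${\rm Im}(\zeta+\zeta^{-1})^{2n-1}_+=4^{n-1}Q_{n-1}(\cos^2\theta)\sin\theta$ and your identity $Q_{n-1}(t^2)=2^{2-2n}\sum_{j=0}^{n-1}\binom{2n-1}{n-1-j}U_{2j}(t)$ are the same Chebyshev statement (expand $\sin((2j+1)\theta)=\sin\theta\,U_{2j}(\cos\theta)$ and set $t=\cos\theta$), so the core computation is equivalent. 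Both proofs leave the same rigor gap --- justifying that $\partial B_*(t)$ converges rather than oscillates --- which the paper glosses over and you explicitly flag; your compactness-plus-uniqueness strategy is the right fix, though in return you carry the extra burden of actually proving uniqueness for the Tricomi equation (the bounded solution vanishing at $\pm1$ is singled out once the additive constant is treated as free, but this deserves a citation or a short argument).
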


\begin{proof} 
We write 
$\Pi_B$ in the the form 
$$
\Pi_B=-\frac{1}{8}(z-\bar z)^2+\beta{\rm Re}(z^{2n}/2n)+...,
$$ 
where $...$ are the terms strictly inside the Newton polygon. 

The conformal map of the unit 
disk into the outside of $B(t)$ which maps $0$ to $\infty$ has the
form
$$
f(\zeta)=A(\zeta+\zeta^{-1})+\phi(\zeta), 
$$
where $\phi$ is an odd holomorphic function in the unit disk, 
and one may assume that $A>0$. 
Here $\phi=\phi_t$, $A=A(t)$, and $A\to 0$ as $t\to t^*$. 

Let $f^*$ denote the function obtained from $f$ by conjugating 
the coefficients of the Taylor series, 
and let $h_B(z)$ be the Cauchy transform 
of $B$: 
$$
h_B=\bar z-4\partial_z\Pi_B=z-2\beta z^{2n-1}+O(z^{2n}).
$$  
Then by Richardson's theorem (see \cite{Ri,EV}), the function 
$$
f^*(\zeta^{-1})-h_B(f(\zeta))
$$
extends holomorphically from the boundary of the unit disk to its
interior, and vanishes at $0$. Therefore, we
obtain, for $|\zeta|=1$:
$$
{\rm Im}\phi(\zeta)=-2\beta
A^{2n-1}{\rm Im}(\zeta+\zeta^{-1})^{2n-1}_++O(A^{2n}),
$$
where the subscript $+$ stands for the holomorphic part
(i.e. the nonnegative degree terms of the Taylor series). 

Now we claim that 
$$
{\rm Im}(\zeta+\zeta^{-1})^{2n-1}_+=4^{n-1}Q_{n-1}(\cos^2\theta)\sin\theta, 
$$
where $\zeta=e^{i\theta}$. This is proved easily by induction in $n$. 
Also, if $|\zeta|=1$ then $\zeta+\zeta^{-1}$ is real. 
Thus, we get for $|\zeta|=1$:
$$
y={\rm Im}f(\zeta)=-\beta
(2A)^{2n-1}Q_{n-1}(\cos^2\theta)\sin\theta+O(A^{2n}).
$$
On the other hand, we have 
$$
x={\rm Re}f(\zeta)=2A\cos\theta+O(A^2).
$$
Therefore, upon rescaling $x\to x/2A,y\to y/(2A)^{2n-1}$ 
we will obtain the following equations: 
$$
y=-\beta Q_{n-1}(\cos^2\theta)\sin\theta+O(A),
$$
$$
x=\cos\theta+O(A),
$$
Thus when $A$ goes to zero (i.e. for $t\to t^*$), we get 
the limiting shape 
$$
y^2=\beta^2 (1-x^2)Q_{n-1}^2(x^2),
$$
as desired. 
\end{proof}

Now let us consider the general case, i.e. 
a bubble which is not necessarily symmetric
with respect to the x-axis. In this case, 
$$
\Pi_B=\frac{1}{2}y^2+\alpha yx^n+\frac{\beta}{2n}x^{2n}+...=
$$
$$
-\frac{1}{8}(z-\bar z)^2+{\rm Re}(i\alpha
\frac{z^{n+1}}{n+1}+\beta\frac{z^{2n}}{2n})+...,
$$ 
where $\alpha,\beta$ are real and $\beta-n\alpha^2>0$
(to insure that $0$ is an isolated minimum of $\Pi_B$), and 
$$
h_B=z+2i\alpha z^n-2\beta z^{2n-1}+O(z^{2n}).
$$  
Applying a similar method to the one used in the 
symmetric case, we obtain the
following result. 

Let $B_*(t)$ be image of  
the bubble at the time
$t$ under the map $z\to z+i\alpha z^n$, 
followed by the renormalization 
$x\to cx$, $y\to c^{2n-1}y$, where 
$c=c(t)$ is chosen in such a way that 
the diameter of $B_*(t)$ is $2$.     

\begin{theorem}\label{nonsymbub}
The boundary of the domain $B_*(t)$ tends to the curve 
$$
y^2=(\beta-n\alpha^2)^2(1-x^2)Q_{n-1}^2(x^2).
$$  
\end{theorem}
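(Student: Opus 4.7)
The strategy is to extend the argument of Theorem~\ref{symbub} by dropping the oddness assumption on $\phi$ in the conformal map $f(\zeta)=A(\zeta+\zeta^{-1})+\phi(\zeta)$. Because the Cauchy transform $h_B=z+2i\alpha z^n-2\beta z^{2n-1}+O(z^{2n})$ now has two leading nonlinear terms of different orders, the natural ansatz is $\phi=A^n\phi_1+A^{2n-1}\phi_2+O(A^{2n})$, with $\phi_1$ forced by the $\alpha$-term and $\phi_2$ by the $\beta$-term. The plan is to determine $\phi_1$ and $\phi_2$ by applying Richardson's identity at orders $A^n$ and $A^{2n-1}$ separately, and then analyze ${\rm Im}(f+i\alpha f^n)$ on $|\zeta|=1$.

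First I would work at order $A^n$. Requiring $\phi_1^*(\zeta^{-1})-\phi_1(\zeta)-2i\alpha(\zeta+\zeta^{-1})^n$ to extend holomorphically into the disk and vanish at $0$ yields, after matching negative-power and constant coefficients, ${\rm Im}\,\phi_1(e^{i\theta})=-\alpha(2\cos\theta)^n$. The purpose of the preliminary map $z\mapsto z+i\alpha z^n$ is then transparent: since $(\zeta+\zeta^{-1})$ is real on the circle, ${\rm Re}(f^n)=A^n(2\cos\theta)^n+O(A^{2n-1})$, and the $A^n$-order contributions to ${\rm Im}(f+i\alpha f^n)={\rm Im}\,f+\alpha\,{\rm Re}(f^n)$ cancel identically, leaving a quantity of order $A^{2n-1}$ that is the right size for the rescaling $y\to y/(2A)^{2n-1}$.

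The main obstacle is the $A^{2n-1}$-order analysis. At this order, Richardson's identity produces both the expected $-2\beta(\zeta+\zeta^{-1})^{2n-1}$ contribution and a cross-term $2i\alpha n(\zeta+\zeta^{-1})^{n-1}\phi_1$ coming from the expansion $f^n=A^n(\zeta+\zeta^{-1})^n+nA^{2n-1}(\zeta+\zeta^{-1})^{n-1}\phi_1+O(A^{3n-2})$, while the renormalization simultaneously adds a term $\alpha n A^{2n-1}(2\cos\theta)^{n-1}{\rm Re}(\phi_1)$ to $\alpha\,{\rm Re}(f^n)$. The key algebraic identity to verify is
\[
{\rm Im}\,\phi_2(e^{i\theta})+\alpha n(2\cos\theta)^{n-1}{\rm Re}\,\phi_1(e^{i\theta})=-2(\beta-n\alpha^2)\,{\rm Im}(\zeta+\zeta^{-1})^{2n-1}_+.
\]
I would prove it by writing $\phi_1=-i\alpha\Phi_1$ and decomposing $\Phi_1=(\zeta+\zeta^{-1})^n+\Psi/2$ into its symmetric and antisymmetric (under $\zeta\leftrightarrow\zeta^{-1}$) parts, where $\Psi=\Phi_1(\zeta)-\Phi_1(\zeta^{-1})$. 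The computation is unlocked by the circle identity $\Psi=4i\,{\rm Im}(\zeta+\zeta^{-1})^n_+$, which in turn yields ${\rm Im}[(\zeta+\zeta^{-1})^{n-1}\Psi]_+=2(\zeta+\zeta^{-1})^{n-1}\,{\rm Im}(\zeta+\zeta^{-1})^n_+$. Plugging these in, the antisymmetric contribution from Richardson and the $\Psi$-contribution from the cross-term cancel exactly, leaving $(2n\alpha^2-2\beta)\,{\rm Im}(\zeta+\zeta^{-1})^{2n-1}_+$, which is the claimed right-hand side.

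Once this identity is in place, the argument concludes as in the symmetric case: on $|\zeta|=1$ one has ${\rm Re}(f+i\alpha f^n)=2A\cos\theta+O(A^n)$, so rescaling $x\to x/(2A)$, $y\to y/(2A)^{2n-1}$ and invoking the combinatorial identity ${\rm Im}(\zeta+\zeta^{-1})^{2n-1}_+=4^{n-1}Q_{n-1}(\cos^2\theta)\sin\theta$ already established in the proof of Theorem~\ref{symbub} gives $x=\cos\theta+O(A)$ and $y=-(\beta-n\alpha^2)Q_{n-1}(\cos^2\theta)\sin\theta+O(A)$; passing to the limit $A\to 0$ yields the curve $y^2=(\beta-n\alpha^2)^2(1-x^2)Q_{n-1}^2(x^2)$.
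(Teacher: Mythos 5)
The paper actually gives no proof of Theorem~\ref{nonsymbub}; it merely remarks that one applies ``a similar method to the one used in the symmetric case.'' Your argument is precisely the natural way to carry out that suggestion, and it is correct. I verified the central algebraic identity for $n=2$, where the relevant data are $\phi_1=-2i\alpha(\zeta^2+1)$, $\phi_2=(8\alpha^2-6\beta)\zeta-2\beta\zeta^3$, $\mathrm{Re}\,\phi_1=2\alpha\sin 2\theta$, and both sides of your identity reduce to $-4(\beta-2\alpha^2)\sin\theta(1+2\cos^2\theta)$. I also checked the general mechanism: writing $G=-2i\alpha n(\zeta+\zeta^{-1})^{n-1}\phi_1+2\beta(\zeta+\zeta^{-1})^{2n-1}$ for the $A^{2n-1}$-order source in Richardson's relation, the decomposition $\phi_1=-i\alpha\bigl((\zeta+\zeta^{-1})^n+\Psi/2\bigr)$ splits $G$ into a symmetric part $2(\beta-n\alpha^2)(\zeta+\zeta^{-1})^{2n-1}$ and an antisymmetric part $-n\alpha^2(\zeta+\zeta^{-1})^{n-1}\Psi$; the latter contributes $-2n\alpha^2(2\cos\theta)^{n-1}\mathrm{Im}(\zeta+\zeta^{-1})^n_+$ to $\mathrm{Im}\,\phi_2$, which cancels exactly against $\alpha n(2\cos\theta)^{n-1}\mathrm{Re}\,\phi_1=2n\alpha^2(2\cos\theta)^{n-1}\mathrm{Im}(\zeta+\zeta^{-1})^n_+$, leaving only the $\beta-n\alpha^2$ term. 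Two small points worth noting: (a) you should record why there are no intermediate powers $A^k$ with $1<k<n$ or $n<k<2n-1$ in $\phi$ (they are killed at each order because the corresponding Richardson contribution from $h_B$ is absent, so $\phi_k$ is forced to be a real constant, i.e.\ a translation), and (b) the signs in your prose description of the two $A^{2n-1}$-order contributions don't quite match the Richardson expression $\phi_2^*(\zeta^{-1})-\phi_2-2i\alpha n(\zeta+\zeta^{-1})^{n-1}\phi_1+2\beta(\zeta+\zeta^{-1})^{2n-1}$, but the boxed identity itself is correct, which is what matters.
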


The most interesting case for applications is $n=2$. 
This case corresponds to contraction of 
(symmetric) domains that are on the boundary between 
those that break up and those that don't. 

\begin{example}
Assume that $\Pi_B=\frac{y^2}{2}+\beta {\rm Re} z^4/4$.
This potential corresponds to the contracting bubble 
whose conformal map from the unit disk to the outside region 
has the form 
$$
f_t(\zeta)=A\zeta^{-1}+\frac{A}{1+6\beta A^2}\zeta-2\beta
A^3\zeta^3, 
$$
where $A=A(t)>0$ is some function.
(This map is found from the singularity correspondence, \cite{EV}). For
small enough $A$, this function is univalent and 
defines a bubble. Contraction of the bubble corresponds to
decreasing $A$ to $0$, leaving $\beta$ fixed. 
Then the bubble contracts to the origin with the asymptotic shape
given by the above theorem for $n=2$. This domain is on the
boundary between rupturing and non-rupturing domains. 
\end{example}

\begin{remark}
This analysis of asymptotic shapes is similar to the analysis 
of the shapes of the necks of bubbles during break-off 
which is done in \cite{LBW}.  
\end{remark}

\section{Points of partial contraction}

\subsection{Definition and properties of points of partial contraction}
Let $B_0(t)\subset B(t)$ be a connected component of the air
domain (i.e. a single bubble), which exists on the interval of
time $(\tau_f,\tau_c)$; namely, we assume that $\tau_f$ is the 
time of formation of the bubble $B_0$, and $\tau_c$ is the 
time of its disappearance (contraction). A point contained in
$B_0(t)$ for all $t\in (\tau_f,\tau_c)$ will be called a {\it
point of partial contraction.} In Fig.1, $P$ is a point of
complete contraction, and $Q$ is a point of partial contraction. 
A point of either complete or partial contraction will be called
a contraction point. To every contraction point there corresponds
a time of contraction $\tau_c$. 

\begin{theorem}\label{t4} (i) Every component $B_0$ of the air domain which
contracts without breakup contains a unique contraction point. 

(ii) A contraction point at a time $t$ is an (isolated) global
minimum point of the potential $\Pi_{B(0)\setminus B(t)}$, and vice
versa. In particular, the number of contraction points is finite. 
\end{theorem}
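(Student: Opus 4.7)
My plan is to adapt the proof of the complete-contraction theorem from \cite{EE1}, substituting the truncated potential $\Pi_{B(0)\setminus B(t)} = \Pi_{B(0)} - \Pi_{B(t)}$ in place of $\Pi_{B(0)}$. The essential tool is Theorem~\ref{t1}(i), which tells us that $\Pi_{B(0)\setminus B(t)}$ equals a single constant $C(t)$ on $B(t)$ (and in particular on the contracting component $B_0(t)$). Combined with the observations that this function is $C^1$ with vanishing gradient on $\partial B(t)$, subharmonic on $\Bbb R^2$, strictly subharmonic on the interior of $B(0)\setminus B(t)$, and grows like $\frac{qt}{2\pi}\log|\zeta|$ at infinity, this reproduces the obstacle-problem structure used for complete contraction in \cite{EE1}.

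For (i), I would fix a component $B_0$ contracting at time $\tau_c$ without breakup and set $P = \bigcap_{t\in(\tau_f,\tau_c)}\overline{B_0(t)}$. As a nested intersection of compact connected sets, $P$ is nonempty, compact, and connected; since $\mathrm{area}(B_0(t))\to 0$ as $t\to\tau_c$, the set $P$ has two-dimensional Lebesgue measure zero. The substantive claim is that $P$ is a single point, and I plan to establish this by analyzing the Polubarinova--Galin evolution of the conformal map $f_t$ from the unit disk onto the complement of $B_0(t)$: if $P$ had positive diameter, $f_t$ would degenerate to a non-univalent limit, forcing a pinch of $\partial B_0(t)$ at some time $t<\tau_c$ and contradicting the no-breakup hypothesis. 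This uniqueness step is the main technical obstacle and parallels the corresponding argument in \cite{EE1}.

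For (ii), forward direction, suppose $p$ is a contraction point of $B_0$ at time $\tau_c$. Then $p\in B_0(t)$ for every $t\in(\tau_f,\tau_c)$, so Theorem~\ref{t1}(i) yields $\Pi_{B(0)\setminus B(t)}(p) = C(t)$, and passing to the limit gives $\Pi_{B(0)\setminus B(\tau_c)}(p) = C^* := \lim_{t\to\tau_c^-} C(t)$. To show $C^*$ is the global minimum, I set $u = \Pi_{B(0)\setminus B(\tau_c)} - C^*$: this function is $C^1$, vanishes with zero gradient on $\partial B(\tau_c)$, is subharmonic on $\Bbb R^2$, and tends to $+\infty$ at infinity; applying the strong maximum principle to the superharmonic function $-u$ on any connected component of $\{u<0\}$ (were this set nonempty) forces $-u$ to be constant there, contradicting $u=0$ on the boundary. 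For the converse, an isolated global minimum $p$ of $\Pi_{B(0)\setminus B(\tau_c)}$ cannot lie in the open exterior of $B(0)$ (harmonic with growth at infinity), nor on $\partial B(0)$ (by the Hopf-type remark after Theorem~\ref{t1}), nor on a positive-area component of $\overline{B(\tau_c)}$ (non-isolation), so $p$ lies in the interior of $B(0)\setminus B(\tau_c)$; for $t$ slightly less than $\tau_c$, continuity of $\Pi_{B(0)\setminus B(t)}$ in $t$ forces $B(t)$ to contain a component near $p$, and part (i) forces this component to shrink to $\{p\}$ at $\tau_c$ (any other limit would yield a second global minimum at value $C^*$ and contradict isolation). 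Finiteness follows since at each contraction time the isolated minima of the subharmonic function $\Pi_{B(0)\setminus B(t)}$ form a finite set, and the set of contraction times is finite by the remark following the definition of weak solutions.
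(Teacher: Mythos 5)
Your argument for part (ii) has a genuine gap at the central step. You try to show $u = \Pi_{B(0)\setminus B(\tau_c)} - C^* \ge 0$ via a maximum-principle argument on a component $\omega$ of $\{u<0\}$, using that $-u$ is superharmonic. But the sign is wrong: the strong minimum principle for the superharmonic function $-u$ controls interior \emph{minima}, and here $\min_{\overline{\omega}}(-u)=0$ is attained on $\partial\omega$, which is no contradiction at all. Subharmonicity of $u$ together with $u=0$ on $\partial\omega$ is perfectly compatible with $u<0$ inside $\omega$ (compare $u=|z|^2-1$ on the unit disk), and the vanishing of $\nabla u$ on $\partial B(\tau_c)$ does not rescue the argument unless $\partial\omega\subset\partial B(\tau_c)$, which you cannot assume. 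The paper instead uses the dynamical integral identity from Theorem~\ref{t1}(ii), in truncated form $\Pi_{B(0)\setminus B(t)}(\xi,\eta)=K_t-\int_0^t\Phi(\xi,\eta,\tau)\,d\tau$; since $\Phi\le 0$ everywhere and $\Phi(\cdot,\tau)=0$ exactly on $\overline{B(\tau)}$, this gives immediately $\Pi_{B(0)\setminus B(t)}\ge K_t$ with equality precisely on $\bigcap_{\tau<t}B(\tau)$. That identity is the engine of both directions of (ii); your purely potential-theoretic substitute does not replace it.

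For part (i), your reduction to showing that $P=\bigcap_t\overline{B_0(t)}$ is a singleton is fine, but the proposed mechanism --- that positive diameter of $P$ would force non-univalence of $f_t$ and hence a pinch of $\partial B_0(t)$ at some $t<\tau_c$ --- is unsubstantiated and, as a general principle, false: a simply connected bubble can shrink monotonically toward a slit-like limit with no pinch at any earlier time, and indeed such slit limits occur for degenerate critical points of the potential (compare Section~8) while still corresponding to classical, non-rupturing solutions on $[0,\tau_c)$. The paper's argument is entirely different and flows from (ii): the set of contraction points of $B_0$ is a connected component of the level set $\{\Pi_{B(0)\setminus B(t)}=K_t\}$ lying in the region $B(0)\setminus \overline{B(t)}$ where $\Delta\Pi=1$ and hence $\Pi$ is real analytic; being a compact, connected, simply connected real-analytic subset of the plane it must be a single point (the lemma cited from \cite{EE1}). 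Without this real-analyticity input, your proof of (i) breaks down precisely at the step you yourself flag as ``the main technical obstacle.''
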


\begin{proof} By formula (\ref{e6}), 
$$
\Pi_{B(0)\setminus
B(t)}(\xi,\eta)=K_t-\int_0^t\Phi(\xi,\eta,\tau)d\tau,
$$
where $K_t$ is a constant. Since $\Phi(\xi,\eta,\tau)\le 0$, 
and $\Phi(\xi,\eta,\tau)=0$ if and only if $(\xi,\eta)$ is
contained in the closure of $B(t)$, we have 
$\Pi_{B(0)\setminus B(t)}\ge K_t$, and 
$\Pi_{B(0)\setminus B(t)}=K_t$ if and only if 
$(\xi,\eta)\in \cap_{\tau<t}B(\tau)$. Thus, if 
$(\xi,\eta)$ is a contraction point at a time $t$,
then $\Pi_{B(0)\setminus B(t)}(\xi,\eta)=K_t$, i.e. the potential
achieves its minimal value at $(\xi,\eta)$, and vice versa. 

On the other hand, the contraction points of the component $B_0$
are contained in the domain $B(0)\setminus B(t)$, where the
potential $\Pi_{B(0)\setminus B(t)}$ satisfies the Poisson
equation $\Delta\Pi=1$, i.e. is a real analytic function.
Thus the set of contraction points of $B_0$ is analytic
(as it is a connected component of the set of solutions of the equation
$\Pi_{B(0)\setminus B(t)}=K_t$). Also, it is compact and simply
connected. This implies that it consists of one point (see
\cite{EE1}). This point is thus an isolated point of global
minimum of the potential. On the other hand, it is clear that any
global minimum point of $\Pi_{B(0)\setminus B(t)}$ is a
contraction point. The theorem is proved. 
\end{proof}

\subsection{Finding points of partial contraction} 
If the bubble breaks up into two parts, which contract 
without further breakup, then Theorem \ref{t4} allows 
us to find the partial contraction point explicitly, 
provided that the derivative of the conformal map 
$f_0(\zeta)$ from the unit disk to the complement
of the initial domain $B(0)$ is a rational function. Indeed, let $\tau$ be the
moment of contraction of the bubble that contracts sooner, and $z_0$ be its
contraction point. The domain $B(\tau)$ is connected, so there exists
a conformal map $f_\tau(\zeta)$ of the unit disk into the complement
of $B(\tau)$, which also has rational derivative. It can be found
as described in \cite{EV}. Assume that the map $f_\tau$ is
known. Then by a direct computation one finds the potential
$\Pi_{B(\tau)}$ on the whole plane. Next, the point $z_0$ 
is found from the condition 
$$
\nabla(\Pi_{B(0)}-\Pi_{B(\tau)})(z_0)=0.
$$ 
Moreover, the value of the potential $\Pi_{B(0)}-\Pi_{B(\tau)}$
at $z_0$ must coincide with the value of this potential inside
the domain $B(\tau)$ (this value is constant inside $B(\tau)$ 
by Theorem \ref{t1}). This is a condition on the unknown time
$\tau$ of contraction.

A similar method allows one to find the contraction point
of the smaller bubble in the problem of simultaneous contraction 
of two circular bubbles, $K_1=\lbrace{z; |z|<R\rbrace}$ and 
$K_2=\lbrace{z; |z-a|<r\rbrace}$, $a,R,r>0$, 
$R>r$, $a>R+r$ (Fig.2). 

In this case, as was shown by P. P. Kufarev
\cite{Ku}, the conformal map of the unit disk to the complement
of the domain $B(\tau)$ at the time $\tau$ of partial contraction
has the form 
\begin{equation}\label{e9}
f_\tau(\zeta)=\frac{\beta\alpha^{-1}}{1-\alpha\zeta}+\frac{\gamma}{\zeta},
\ \gamma=\frac{1}{2}\left(a\alpha+\frac{2r^2+
R^2-\frac{qt}{\pi}}{a\alpha}\right),
\end{equation}
$$
\beta=\frac{1-\alpha^2}{2}\left(a\alpha-
\frac{2r^2+R^2-\frac{qt}{\pi}}{a\alpha}\right)
$$
where $\alpha>0$ and $\alpha^2$ is the middle root of
the cubic equation 
\begin{equation}
2a^4x^3-(2a^2(R^2-qt/\pi)+a^4)x^2+(2r^2+R^2-qt/\pi)^2=0,
\end{equation}
and from the equations 
\begin{equation}
\nabla \Pi_{B(0)\setminus B(\tau)}(z_0)=0,\ \Pi_{B(0)\setminus
B(\tau)}(z_0)=\Pi_{B(0)\setminus B(\tau)}|_{B(\tau)}
\end{equation} 
one finds the contraction point $z_0$ of the smaller bubble, 
and the time $\tau$ of partial contraction. 

\subsection{Asymptotics of partial contraction} 

As in the case of complete contraction, if the contraction point
is a nondegenerate minimum of the potential, then at times close 
to the time $\tau$ of partial contraction, the boundary of the
bubble has an approximate shape of an ellipse, whose half-axes
are directed along the eigenvectors of the Hessian of
$\Pi_{B(0)\setminus B(\tau)}$ at the contraction point, and 
their lengths are inverse proportional to its eigenvalues. This fact is
proved analogously to the case of complete contraction
(\cite{EV}).

It is interesting to study the rate of partial contraction, 
in the case when the bubbles contract at different times. 
Namely, assume we have two bubbles $B_1$ and $B_2$, and $B_2$
contracts to a point $P_2$ at a time $t'$, at which $B_1$ assumes
the shape of a domain $E$. We assume that $P_2$ is a
nondegenerate minimum of the potential $\Pi_{B(0)\setminus
B(t')}$. Let us conformally map the outside 
of $E$ onto the unit disk, so that $\infty$ maps to $0$, and the 
point $P_2$ to a point $b\in (0,1)$. Obviously, such $b$ and the
map are unique. Let $\zeta$ be the complex coordinate in the
disk; then the potential $\Phi$ at a time $t<t'$ close to $t'$
has the form 
$$
\Phi=-\frac{q}{2\pi}\log|\zeta|+\frac{Q}{2\pi}\log\vert
\frac{\zeta-b}{1-\zeta b}\vert+O(t'-t), 
$$
where $Q=Q(t)$ is the rate of contraction of the bubble $B_2$;
it is clear that $Q(t)\to 0$ as $t\to t'$. 

Let $\Gamma(t)$ be the image of the boundary of $B_2(t)$ in the
disk, and let $A(t)$ be the area of $B_2(t)$. 
Since the boundary of $B_2(t)$ for $t$ close to $t'$ is 
almost elliptic, the distance from the points of 
$\Gamma$ to the point $b$ is sandwiched between $c_1A(t)^{1/2}$
and $c_2A(t)^{1/2}$, where $c_1,c_2$ are some constants. 
On the other hand, $\Phi$ must vanish on $\Gamma(t)$. 
This yields
$$
\frac{Q}{2}\log A\to q\log b, t\to t'.
$$
Thus, $Q(t)$ is equivalent to $2q\frac{\log b}{\log A(t)}$ as
$t\to t'$. 

This allows us to determine the asymptotic behavior 
of $A(t)$ as $t\to t'$. To do so, let us introduce the variable
$\tau=t'-t$. Then asymptotially $A$ behaves as 
the solution of the the differential equation
$$
\frac{dA}{d\tau}=2q\frac{\log b}{\log A},
$$
with the initial condition $A=0$ as $\tau=0$. 
Solving this equation, we get the solution which is implicitly
defined by the equation 
$$
A\log A-A=2q\tau \log b. 
$$
So we obtain the following result. 

\begin{theorem}\label{logslow} 
We have 
$$
A\sim \frac{2q\tau\log b}{\log\tau},\ Q\sim \frac{2q\log
b}{\log\tau},\ \tau\to 0. 
$$
\end{theorem}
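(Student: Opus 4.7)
The plan is to derive the stated asymptotics purely from the implicit relation $A\log A - A = 2q\tau\log b$ established just above the theorem statement. First, I would observe that as $\tau\to 0^+$ we have $A\to 0^+$, hence $|\log A|\to\infty$; in particular the term $-A$ is of smaller order than $A\log A$ (their ratio is $1/|\log A|\to 0$), so the implicit relation yields the leading-order balance $A\log A \sim 2q\tau\log b$. Both sides are negative, since $0<A<1$ eventually and $0<b<1$.

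The second step is to extract the rate $\log A\sim \log\tau$. Taking the logarithm of the absolute value of the balance $A\log A\sim 2q\tau\log b$ gives
\[
\log A + \log|\log A| = \log\tau + \log(-2q\log b) + o(1).
\]
Since $\log|\log A| = o(\log A)$ as $A\to 0^+$ and $\log\tau\to-\infty$, the leading-order balance forces $\log A \sim \log\tau$.

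The final step is to feed $\log A\sim \log\tau$ back into the balance $A\log A\sim 2q\tau\log b$, which yields $A\sim 2q\tau\log b/\log\tau$, and into the relation $Q\sim 2q\log b/\log A$ derived earlier from the condition that $\Phi$ must vanish on $\Gamma(t)$; this gives $Q\sim 2q\log b/\log\tau$, as claimed.

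I do not anticipate serious obstacles; this is elementary asymptotic analysis of a separable ODE. The only subtlety worth verifying is that the preceding discussion established only that $A(t)$ satisfies $dA/d\tau = 2q\log b/\log A$ \emph{asymptotically} (with errors arising from the almost-elliptic, not exactly circular, shape of $\Gamma(t)$ around the point $b$). One should check that an additive $o(1)$ perturbation on the right-hand side of the ODE produces only a lower-order correction to the implicit integral $A\log A - A$; this is immediate, since integrating a bounded error against $d\tau$ contributes a term of size $o(\tau)$, which is negligible compared with the leading $2q\tau\log b$.
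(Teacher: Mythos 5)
Your proposal is correct and follows essentially the same route as the paper: the paper derives the asymptotic ODE $dA/d\tau = 2q\log b/\log A$ and its implicit solution $A\log A - A = 2q\tau\log b$, then simply states the theorem, leaving the final elementary asymptotic extraction (which you spell out via the iterated-logarithm argument) to the reader. Your extra remark on the robustness of the implicit relation to $o(1)$ perturbations of the ODE is a sensible gap-check that the paper glosses over.
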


This result can be generalized to the case when the bubble $B_2$
contracts to a degenerate minimum. Namely, assume that 
the contraction point is a critical point of degree $n$. 
Then, conducting a similar asymptotic analysis, using the results
of subsection 2.5, we obtain the following theorem. 

\begin{theorem}\label{logslow1} 
We have 
$$
A\sim \frac{2nq\tau\log b}{\log\tau},\ Q\sim \frac{2nq\log
b}{\log\tau},\ \tau\to 0. 
$$
\end{theorem}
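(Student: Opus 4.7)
The strategy is to parallel the proof of Theorem \ref{logslow}: first derive the asymptotics of the contraction rate $Q(t)$ in terms of the area $A(t)$, and then integrate the ODE $dA/d\tau=Q$. The new ingredient is that, with a degenerate critical point of degree $n$, the bubble $B_{2}(t)$ is no longer asymptotically elliptical; by Theorem \ref{nonsymbub}, after the change of variable $z\mapsto z+i\alpha z^{n}$, $B_{2}(t)$ has extent of order $A(t)^{1/(2n)}$ in one coordinate direction and of order $A(t)^{(2n-1)/(2n)}$ in the orthogonal direction, so it becomes highly elongated as $t\to t'$.

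Transported by the conformal map to the unit disk, the image $\Gamma(t)$ still has those two disparate scales (up to the bounded factor $1/|\psi'(b)|$, where $\psi$ denotes the map). The crucial point is that in the non-degenerate setting of Theorem \ref{logslow} the distance from points of $\Gamma$ to $b$ was approximately $A^{1/2}$ uniformly, whereas here this distance ranges over two genuinely different powers of $A$. To extract a scalar estimate for $Q$, I would replace the naive substitution $\log|\zeta-b|\sim \tfrac{1}{2}\log A$ by the statement $\log R_{\mathrm{cap}}(\Gamma(t))\sim \tfrac{1}{2n}\log A$, where $R_{\mathrm{cap}}$ denotes logarithmic capacity. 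Because the logarithmic capacity of a thin convex region of length $L$ and width $W\ll L$ is comparable to $L/4$, the long direction dominates and the capacity is of order $A^{1/(2n)}$.

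The remainder is a matched-asymptotic argument. The outer approximation for $\tilde\Phi$ on scales much larger than $R_{\mathrm{cap}}$ is the harmonic point-sink formula $-\frac{q}{2\pi}\log|\zeta|+\frac{Q}{2\pi}\log\left|\frac{\zeta-b}{1-b\zeta}\right|$, while near $\Gamma(t)$ the potential must behave like $\frac{Q}{2\pi}\log(|\zeta-b|/R_{\mathrm{cap}})$ in order to vanish on $\Gamma(t)$; matching these two expansions in the overlap annulus yields
$$q\log b=Q(t)\log\frac{R_{\mathrm{cap}}(\Gamma(t))}{1-b^{2}}+o(Q).$$
Inserting the capacity estimate gives $Q(t)\sim \frac{2nq\log b}{\log A(t)}$. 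Substituting into $dA/d\tau=Q$ and integrating, exactly as in the proof of Theorem \ref{logslow}, one obtains $A\log A-A\sim 2nq\tau\log b$, and hence $A\sim \frac{2nq\tau\log b}{\log\tau}$ and $Q\sim \frac{2nq\log b}{\log\tau}$ as $\tau\to 0$.

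The main obstacle is making the capacity estimate $\log R_{\mathrm{cap}}(\Gamma(t))=\tfrac{1}{2n}\log A(t)+O(1)$ rigorous, since the bubble is neither a slit nor an exact ellipse but rather a rescaling of the specific curve $y^{2}=(\beta-n\alpha^{2})^{2}(1-x^{2})Q_{n-1}^{2}(x^{2})$. The cleanest route is to sandwich $\Gamma(t)$ between two comparable convex shapes (for instance, two line segments of lengths differing only by a multiplicative constant, both of order $A(t)^{1/(2n)}$) using the explicit limit shape from Theorem \ref{nonsymbub}, and then invoke the monotonicity of logarithmic capacity under inclusion to bound $R_{\mathrm{cap}}(\Gamma(t))$ from both sides.
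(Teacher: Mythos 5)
The paper offers essentially no proof of Theorem~\ref{logslow1}; it only says that ``conducting a similar asymptotic analysis, using the results of subsection~2.5'' yields the result. So there is no detailed argument in the text against which to compare yours. That said, your proposal supplies precisely the missing analysis, and it does so correctly. Your central observation---that the naive step $\log|\zeta-b|\sim\tfrac12\log A$ from the proof of Theorem~\ref{logslow} fails because for a degree-$n$ critical point the distance from $\Gamma(t)$ to $b$ ranges over two different powers of $A$, and that the right scalar to feed into the matching is the logarithmic capacity---is exactly the additional ingredient needed. The matching identity $q\log b = Q\log\frac{R_{\mathrm{cap}}(\Gamma)}{1-b^{2}}+o(Q)$ is correct, and inserting $\log R_{\mathrm{cap}}(\Gamma)=\frac{1}{2n}\log A + O(1)$ gives $Q\sim 2nq\log b/\log A$, after which the ODE integration is identical to the case $n=1$.

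Two small remarks. First, the sandwiching you propose for the capacity estimate is phrased a bit loosely: one cannot sandwich a two-dimensional domain between two \emph{line segments}. What you want is monotonicity of capacity under inclusion with an inner \emph{segment} of length $\asymp A^{1/2n}$ contained in $B_2(t)$ and an outer ellipse or rectangle of length $\asymp A^{1/2n}$ and width $\asymp A^{(2n-1)/2n}$ containing $B_2(t)$; both have capacity $\asymp A^{1/2n}$, so $\log R_{\mathrm{cap}} = \frac{1}{2n}\log A + O(1)$. Second, there is a cleaner route that avoids the sandwiching altogether: in the proof of Theorem~\ref{symbub} (which, per the remark in Section~3.3, carries over to partial contraction) the exterior conformal map is written $f(\zeta)=A_0(\zeta+\zeta^{-1})+\phi(\zeta)$ with $\phi=O(A_0^{2n-1})$ and area $\asymp A_0^{2n}$. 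The leading coefficient $A_0$ \emph{is} the conformal radius $R_{\mathrm{cap}}(B_2(t))$, and $\psi$ being conformal and nondegenerate at $P_2$ multiplies the capacity by $|\psi'(P_2)|(1+o(1))$. Thus $\log R_{\mathrm{cap}}(\Gamma(t))=\log A_0 + O(1) = \frac{1}{2n}\log A + O(1)$ falls out directly from the results of subsection~2.5, with no separate capacity estimate required. (Also note a trivial slip: the bounded conformal factor should be $|\psi'(P_2)|$, not $1/|\psi'(b)|$.) In short, your approach is sound and, I believe, is what the authors had in mind when they pointed to subsection~2.5.
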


Thus we see that when $t$ is close to $t'$ then 
the contraction of the bubble $B_2$ is logarithmically slow, 
and almost all air is extracted from $B_1$. 

\section{Sufficient conditions for breakup of symmetric bubbles}

\begin{theorem}\label{t5} 
Assume that the gravity potential of a simply connected domain
$B(0)$ symmetric with respect to a point $P$ (respectively, a
line $\ell$) achieves a global minimum at a point $Q\ne P$ 
(respectively, $Q\notin \ell$). Then $B(0)$ breaks up in the
process of contraction. 
\end{theorem}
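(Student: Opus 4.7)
The plan is to use the symmetry of the initial domain to exhibit two distinct global minima of the gravity potential $\Pi_{B(0)}$, and then to invoke the characterization of complete contraction points recalled in Subsection 2.4 together with Theorem \ref{t4}(i) to produce a contradiction with the assumption of no breakup.

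First I note that the potential inherits the symmetry of the domain. Let $\sigma$ denote either the point reflection about $P$ or the line reflection about $\ell$. Changing variables $\zeta\mapsto\sigma(\zeta)$, $z\mapsto\sigma(z)$ in the integral (\ref{e3}), and using $|\sigma(z)-\sigma(\zeta)|=|z-\zeta|$ together with $\sigma(B(0))=B(0)$, gives $\Pi_{B(0)}(\sigma(\zeta))=\Pi_{B(0)}(\zeta)$ for every $\zeta$. Hence, if $Q$ is a global minimum of $\Pi_{B(0)}$, then so is $Q':=\sigma(Q)$. The hypothesis $Q\neq P$ (resp.\ $Q\notin\ell$) guarantees $Q'\neq Q$, so $\Pi_{B(0)}$ admits at least two distinct global minima.

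Next, the theorem at the end of Subsection 2.4 identifies the points of complete contraction with the global minima of $\Pi_{B(0)}$; in particular both $Q$ and $Q'$ belong to $\bigcap_{t<t^*}B(t)$. Suppose for contradiction that $B(0)$ does \emph{not} break up. Then for every $t\in[0,t^*)$ the air domain $B(t)$ consists of a single simply connected component, which exists on the whole interval $[0,t^*)$ without undergoing breakup. Theorem \ref{t4}(i) applied to this component forces it to contain exactly one contraction point, contradicting the existence of two distinct such points $Q$ and $Q'$. Therefore $B(0)$ must break up.

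The step that most deserves attention is the use of the Subsection 2.4 characterization in its converse direction, namely that every global minimum of $\Pi_{B(0)}$ is a point of complete contraction. This converse is implicit in the ``vice versa'' of Theorem \ref{t4}(ii): at each fixed $t<t^*$ the set where $\Pi_{B(0)\setminus B(t)}$ attains its minimum equals $\bigcap_{\tau<t}B(\tau)$, and taking $t\to t^*$ yields that every global minimum of $\Pi_{B(0)}$ lies in $\bigcap_{t<t^*}B(t)$. Once this is granted, the argument above reduces to a simple counting step and no further computation is required.
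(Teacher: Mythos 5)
Your proof is correct and follows essentially the same route as the paper's: exploit the symmetry to produce a second global minimum $Q'=\sigma(Q)$ of $\Pi_{B(0)}$, identify both $Q$ and $Q'$ as points of complete contraction via the Subsection~2.4 characterization, and conclude that a non-breaking bubble cannot contain two such points. You merely make explicit two steps the paper leaves implicit, namely the change-of-variables argument showing $\Pi_{B(0)}\circ\sigma=\Pi_{B(0)}$ and the appeal to Theorem~\ref{t4}(i) for the uniqueness of the contraction point of a non-breaking component.
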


\begin{proof} The symmetric point $Q'\ne Q$ is also the global minimum
point for the gravity potential. By Theorem \ref{e3}, the points
$Q$ and $Q'$ are points of complete contraction of the domain
$B(0)$. Therefore, $B(0)$ must break up. 
\end{proof} 

\begin{corollary} (see \cite{EE1}) Let $B(0)=\lbrace{(x,y);
|x|<b, y^2<f(x)\rbrace}$, where $f(x)$ is an even piecewise
smooth function, nonnegative on $[-b,b]$, such that $f(b)=0$ (Fig.3). 

Then $B(0)$ breaks up in the process of contraction if 
\begin{equation}\label{e12}
\int_0^b\frac{d(xf(x)^{1/2})}{x^2+f(x)}>\pi/2. 
\end{equation}
\end{corollary}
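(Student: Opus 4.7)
The plan is to apply Theorem \ref{t5}. Because $B(0)$ is invariant under both reflections $(\xi,\eta)\mapsto(-\xi,\eta)$ and $(\xi,\eta)\mapsto(\xi,-\eta)$, any global minimum $Q$ of $\Pi_{B(0)}$ different from the origin must lie off at least one of the two coordinate axes, and Theorem \ref{t5} applied with $\ell$ equal to that axis then forces breakup. It therefore suffices to prove, under the hypothesis (\ref{e12}), that the origin is not a global minimum of $\Pi_{B(0)}$; in fact I will establish the stronger statement that $(0,0)$ is not even a local minimum.

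The crux is the closed-form identity
$$
\partial_\xi^2 \Pi_{B(0)}(0,0) \;=\; \tfrac{1}{2} \,-\, \tfrac{1}{\pi}\int_0^b \frac{d(xf(x)^{1/2})}{x^2+f(x)},
$$
whose right-hand side becomes strictly negative precisely under the hypothesis (\ref{e12}). Combined with $\partial_\xi\partial_\eta\Pi_{B(0)}(0,0)=0$, which follows from the $x$-axis symmetry of $B(0)$, this immediately rules out $(0,0)$ being a local minimum and hence a global minimum.

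To derive the identity, I would perform the inner $y$-integration in the definition (\ref{e3}) of $\Pi_{B(0)}(\xi,0)$ explicitly; this produces a term $\int_{-b}^{b}\arctan(f(x)^{1/2}/(x-\xi))\,dx$ which has a jump at $x=\xi$ when $(\xi,0)\in B(0)$. The standard identity $\arctan(h/a)=\tfrac{\pi}{2}\operatorname{sign}(a)-\arctan(a/h)$ lets me split this into a piece that is linear in $\xi$ (after the $x$-integration) and a smooth piece. Differentiating the smooth piece twice in $\xi$ and evaluating at $\xi=0$ yields the intermediate formula $\partial_\xi^2\Pi_{B(0)}(0,0)=1-\tfrac{2}{\pi}\int_0^b f^{1/2}/(x^2+f)\,dx$. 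Finally, the pointwise identity
$$
\frac{f^{1/2}\,dx}{x^2+f} \;=\; \tfrac{1}{2}\,d\!\left(\arctan\tfrac{x}{f^{1/2}}\right) + \tfrac{1}{2}\cdot\frac{d(xf^{1/2})}{x^2+f},
$$
a direct consequence of differentiating $\arctan(x/f^{1/2})$, together with the boundary values $\arctan(0/f(0)^{1/2})=0$ and $\lim_{x\to b^-}\arctan(x/f(x)^{1/2})=\pi/2$, gives $\int_0^b f^{1/2}/(x^2+f)\,dx = \pi/4 + \tfrac{1}{2}\int_0^b d(xf^{1/2})/(x^2+f)$, and substitution produces the displayed identity.

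The main obstacle is the singularity of $\log|z-\zeta|$ at $\zeta=z$ inside $B(0)$: differentiating $\int_{B(0)}\log|z-\zeta|\,dA$ twice in $\xi$ under the integral sign is not legitimate, since the second $\xi$-derivatives of the logarithmic kernel fail to be absolutely integrable. The trick of isolating the discontinuous $\operatorname{sign}$ piece in $\arctan(h/a)$ before differentiating is precisely what resolves this; once the Hessian formula is in hand, the reduction via the symmetries of $B(0)$ and Theorem \ref{t5} delivers the conclusion immediately.
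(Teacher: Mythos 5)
Your proof is correct and takes essentially the same approach as the paper's proof, with more detail supplied at one step.

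The paper's argument is exactly: $\partial_\xi^2\Pi_{B(0)}(0,0)=\tfrac12-\tfrac1\pi\int_0^b \tfrac{d(xf^{1/2})}{x^2+f}<0$ under (\ref{e12}), so the origin is not the global minimum, and then central symmetry plus Theorem \ref{t5} forces breakup. You reach the same inequality and the same conclusion; the only (immaterial) difference in the symmetry step is that you invoke the reflection version of Theorem \ref{t5} (noting any $Q\ne 0$ lies off some coordinate axis), whereas the paper applies the point-symmetry version with $P=(0,0)$ directly. Both are legitimate since $B(0)$ has both symmetries. The added value of your write-up is that you actually derive the Hessian formula, which the paper only asserts with a pointer to \cite{EE1}: you integrate in $y$ to get $\partial_\xi\Pi(\xi,0)=\tfrac1\pi\int_{-b}^b\arctan\bigl(f^{1/2}/(\xi-x)\bigr)\,dx$, split off the $\tfrac{\pi}{2}\,\mathrm{sign}(\xi-x)$ jump (contributing the term $\xi$, hence $1$ after a second $\xi$-derivative), differentiate the smooth remainder to obtain $1-\tfrac2\pi\int_0^b \tfrac{f^{1/2}}{x^2+f}\,dx$, and then convert via the exact differential identity for $\arctan(x/f^{1/2})$ together with $f(b)=0$ into $\tfrac12-\tfrac1\pi\int_0^b\tfrac{d(xf^{1/2})}{x^2+f}$. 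I verified each of these steps and they are all correct; in particular the pointwise identity and the boundary values $0$ at $x=0$ and $\pi/2$ at $x=b^-$ are right. Note also that the conclusion ``$(0,0)$ is not a local minimum'' already follows from $\partial_\xi^2\Pi(0,0)<0$ alone (a negative diagonal entry excludes positive semidefiniteness), so the remark about the vanishing mixed partial, while correct by symmetry, is not needed.
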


\begin{proof}
If (\ref{e12}) holds, then 
$$
\frac{\partial^2 \Pi_{B(0)}}{\partial x^2}(0)=
\frac{1}{2}-\frac{1}{\pi}\int_0^b\frac{d(xf(x)^{1/2})}{x^2+f(x)}<0,
$$
so the origin is not a global minimum point of the potential. 
Hence the global minimum is attained at another point. Because of
the central symmetry, the domain $B(0)$ must break up. 
\end{proof}

\begin{theorem}\label{t6} (see \cite{EV}, problem 2 on p.32)
Let $B(0)$ be a simply connected domain, symmetric with respect
to the horizontal axis, and the function 
$\pi(x):=\Pi_{B(0)}(x,0)$ has more than one local extremum 
on $(-\infty,\infty)$. Then the domain $B(0)$ breaks up 
in the process of contraction.  
\end{theorem}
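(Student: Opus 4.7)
The plan is to argue by contradiction, using the $y$-symmetry to lift the critical-point structure of $\pi$ to that of $\Pi_{B(0)}$ in the plane, and then showing that the resulting multi-critical configuration is incompatible with a no-breakup contraction to a single point.

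First I would exploit the symmetry: since $\Pi_{B(0)}(x,y)=\Pi_{B(0)}(x,-y)$, the derivative $\partial_y\Pi_{B(0)}$ vanishes identically on the $x$-axis, so every critical point of $\pi$ lifts to a critical point of $\Pi_{B(0)}$ in the plane. The logarithmic asymptotics in~\eqref{e4} force $\pi(x)\to+\infty$ as $|x|\to\infty$; combined with the real-analyticity of $\pi$ off the finite set $\partial B(0)\cap\{y=0\}$ and the alternation of extrema along a function going to $+\infty$ at both ends, the hypothesis of more than one local extremum produces at least two local minima $x_1<x_2$ of $\pi$ separated by a local maximum $x_3\in(x_1,x_2)$.

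Assuming no breakup, Theorem~\ref{t4} gives a unique contraction point $P$, equal to the unique global minimum of $\Pi_{B(0)}$ (taking $t\to t^*$, so $B(0)\setminus B(t)\to B(0)$). By symmetry and uniqueness $P=(x_*,0)$ on the $x$-axis, and we may take $x_*=x_1$; thus neither $(x_2,0)$ nor $(x_3,0)$ is a contraction point, and monotonicity gives first times $t_2,t_3\in(0,t^*)$ at which these points leave $B(t)$. On the slice $B(t)\cap\{y=0\}$, Theorem~\ref{t1}(i) gives $\pi_t(x):=\Pi_{B(t)}(x,0)=\pi(x)-C(t)$, so the critical-point structure of $\pi$ is preserved on the slice (for those critical points still in it): in particular $x_3$ is an interior local maximum of $\pi_t$ with $\pi_t''(x_3)=\pi''(x_3)\le 0$ for every $t<t_3$.

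To derive the contradiction I would study the geometry of $B(t)$ near $(x_3,0)$ as $t\nearrow t_3$. When $(x_3,0)\in B(t)^{\circ}$, the relation $\Delta\Pi_{B(t)}=1$ combined with $\pi_t''(x_3)\le 0$ yields $\partial_y^2\Pi_{B(t)}(x_3,0)\ge 1$, so $(x_3,0)$ is an interior saddle of $\Pi_{B(t)}$ with a positive $y$-eigenvalue---the signature of a neck. The $y$-symmetry forces $\partial B(t)$ to approach $(x_3,0)$ simultaneously from the $\pm y$ directions as $t\nearrow t_3$, and in combination with the saddle structure this must produce a tangential self-contact of the upper and lower boundary arcs at $(x_3,0)$ at $t=t_3$: this is exactly the topological transformation of Subsection~2.2 that constitutes breakup, contradicting the assumption. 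The main technical obstacle is the last step: rigorously excluding the alternative scenario in which $(x_3,0)$ leaves $B(t)$ not via tangential pinching but via a smooth transversal sweep of the boundary from one side, which would merely turn $B(t)$ into a (still simply connected) cup-shape. The natural route is a local conformal-map expansion of $\partial B(t)$ near $(x_3,0)$ in the spirit of the proofs of Theorems~\ref{symbub} and~\ref{nonsymbub}, using the Richardson/Schwarz identity to translate the saddle structure of $\Pi_{B(t)}$ into the leading-order shape of $\partial B(t)$ and so to rule out the transversal-sweep geometry; I expect this local expansion to be the delicate technical step of the full proof.
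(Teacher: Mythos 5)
Your argument takes a genuinely different route from the paper's, and it contains a real gap that you partly acknowledge yourself.

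The paper's proof is a short global monotonicity argument. Assume no breakup, and let $x_1(t)<x_2(t)$ be the two points where $\partial B(t)$ meets the $x$-axis; they converge monotonically to the contraction point $x_0$. By the $y$-symmetry, $\partial_y\Phi(\cdot,0,t)\equiv 0$, so the rays $(-\infty,x_1(t)]$ and $[x_2(t),+\infty)$ are flowlines of $\nabla\Phi$; along a flowline $\Phi$ is monotone, and since the flow comes in from infinity, $\Phi(x,0,t)$ increases from $-\infty$ to $0$ on $(-\infty,x_1(t)]$, equals $0$ on $[x_1(t),x_2(t)]$, and decreases from $0$ to $-\infty$ on $[x_2(t),+\infty)$. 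Hence for each $t$, $\Phi(\cdot,0,t)$ is non-increasing on $[x_0,\infty)$ and non-decreasing on $(-\infty,x_0]$, with strictness somewhere in $t$ once the two arguments differ. Feeding this into $\Pi_{B(0)}(\xi,\eta)=K-\int_0^{t^*}\Phi(\xi,\eta,t)\,dt$ from Theorem~\ref{t1}(ii) shows that $\pi$ is strictly increasing on $(x_0,\infty)$ and strictly decreasing on $(-\infty,x_0)$, so it has exactly one local extremum --- an immediate contradiction.

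Compared with this, your route has two difficulties. First, the step you flag as ``the delicate technical step'' --- excluding the transversal sweep at $(x_3,0)$ --- is genuinely not a small gap: the saddle structure you extract is a property of $\Pi_{B(t)}$, but the boundary moves under $\Phi$, and the only link between the two is the time-integral~\eqref{e6}, not a pointwise one; a local Richardson/Schwarz-function expansion would be needed and you haven't carried it out. Second, and more structurally, you tacitly assume the intermediate local maximum $x_3$ (and the second local minimum $x_2$) lie inside $B(0)$, so that you can speak of the time at which they ``leave $B(t)$.'' But the hypothesis is about extrema of $\pi$ on all of $(-\infty,\infty)$, and a local extremum of $\pi$ corresponds to a planar critical point of $\Pi_{B(0)}$ that may well be a harmonic saddle lying outside $B(0)$; in that case your tracking argument never gets started. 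The paper's proof handles both issues at once by establishing the monotonicity of $\pi$ on the entire real line, with no localization near a critical point and no free-boundary Taylor expansion.
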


\begin{proof} Assume the contrary, i.e. that the domain does not
break up. Let $x_1(t)<x_2(t)$ be the intersection points of the
boundary $\partial B(t)$ with the horizontal axis (Fig. 4) 

By the monotonicity property of contraction, the function
$x_1(t)$ is increasing, and the function $x_2(t)$ is decreasing
on $[0,t^*)$, and $\lim_{t\to t^*}x_1(t)=\lim_{t\to
t^*}x_2(t)=x_0$. The rays $[x_2(t),+\infty)$ and
$(-\infty,x_1(t)]$ are flowlines of the flow, and the flow is
directed from infinity, so the potential $\Phi(x,0,t)$ increases
from $-\infty$ to $0$ on the interval $(-\infty,x_1(t)]$, equals 
zero on $(x_1(t),x_2(t))$, and decreases from $0$ to $-\infty$ on
$[x_2(t),+\infty)$. Thus, if $\xi_1\ge \xi_2\ge x_0$ or 
$\xi_1\le \xi_2\le x_0$, then for any $t\in (0,t^*)$ one has
$\Phi(\xi_1,0,t)\le \Phi(\xi_2,0,t)$. Moreover, if this inequality
turns into an equality for all $t$, then $\xi_1=\xi_2$. Since 
$\Pi_{B(0}(\xi,\eta)=K-\int_0^{t^*}\Phi(\xi,\eta,t)dt$, 
these arguments imply that the function $\Pi_{B(0)}(x,0)$ is
strictly increasing on $(x_0,+\infty)$, and strictly decreasing
on $(-\infty,x_0)$, i.e. its unique local extremum is a minimum at
the point $x_0$. A contradiction. 
\end{proof} 

\section{Regulated contraction}

\subsection{Definition of regulated contraction}
Consider contraction of a domain which breaks up into two parts
at a time $\tau\in [0,t^*)$. After the time $\tau$, the process
of contraction may be controlled, creating different pressures 
inside the two components of the air domain by regulating the
amount of air which is pumped out of each component. This is a
generalization of the problem from Section 2. In particular, 
it is interesting whether one can regulate contraction in such 
a way that both bubbles contract at the same time; in this case,
as we will see, the contraction points have the convenient
property that they are critical points of the potential 
of the initial bubble, and thus can be easily found.  
Let us consider this generalized problem in more detail. 

Consider the process of contraction of a system of two bubbles;
we don't assume that they were obtained as a result of breakup of
a single bubble. Assume that at a time $t\in [0,t^*)$, the air domain $B(t)$
consists of the components $B_1(t)$ and $B_2(t)$, and the air is
pumped from $B_1(t)$ at the rate $q_1(t)$, and from $B_2(t)$ at 
a rate $q_2(t)$. This means that the velocity potential
$\Phi(x,y,t)$ is a solution of the boundary value problem
$$
\frac{\partial^2\Phi}{\partial
x^2}+\frac{\partial^2\Phi}{\partial y^2}=0,\ (x,y)\in B(t)^c;\
\Phi|_{\partial B_i(t)}=\Phi_i(t), i=1,2;\ 
$$
$$
\Phi(x,y)=-\frac{q_1(t)+q_2(t)}{2\pi}\log(r)+O(1),
r\to \infty,
$$
and the constants $\Phi_1(t)$ and $\Phi_2(t)$ are chosen in such
a way that 
\begin{equation}\label{e14}
\int_{\partial B_i}\frac{\partial \Phi}{\partial
n}d\ell=q_i(t),i=1,2,
\end{equation}
It is useful to extend $\Phi$ to the interior of $B(t)$: 
$\Phi=\Phi_i(t)$ in $B_i(t)$ for $i=1,2$. Then $\Phi(x,y,t)$ is 
an everywhere continuous function. The velocity of motion of the
boundaries $\partial B_1$ and $\partial B_2$ is $\frac{\partial
\Phi}{\partial n}$. The motion can be considered up to the time
of disappearance of one of the bubbles (we assume that 
there is no topological transformations of the first kind, i.e.,
formations of new bubbles). The contraction process in this
situation will be called {\it regulated contraction}
(as opposed to free contraction, i.e. with equal pressures in the bubbles). 
The vector-function $(q_1(t),q_2(t))$ will be called {\it the strategy
of air extraction}. 

\begin{theorem} \label{t7} The gravity potential inside
every component of the contracting domain changes by a constant 
in the process of regulated contraction (the constants may be
different for different components). 
\end{theorem}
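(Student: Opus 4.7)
The plan is to adapt the argument used for Theorem \ref{t1}(i). Fix $t_0\in(0,t^*)$ and a component $B_j(t_0)$. Because regulated contraction admits no topological transformations of the first kind, components only shrink and eventually vanish, so for each $\tau\in[0,t_0]$ there is a well-defined predecessor $B_j(\tau)\supset B_j(t_0)$, and by monotonicity every $\zeta\in B_j(t_0)$ lies in $B_j(\tau)$ for all $\tau\in[0,t_0]$.

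Differentiating $\Pi_{B(t)}(\zeta)=\frac{1}{2\pi}\int_{B(t)}\log|z-\zeta|\,dxdy$ in $t$ yields
\[
\frac{d}{dt}\Pi_{B(t)}(\zeta)=\frac{1}{2\pi}\sum_i\int_{\partial B_i(t)}\log|z-\zeta|\,\partial_{n_i}\Phi\,d\ell,
\]
since the normal velocity of $\partial B_i$ equals $\partial_{n_i}\Phi$, with $n_i$ the outward normal from $B_i$. Apply Green's identity to $u(z)=\frac{1}{2\pi}\log|z-\zeta|$ and $\Phi$ in $B(t)^c\cap\{|z|<R\}$; both are harmonic, since $\zeta\in B(t)$. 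The key difference from Theorem \ref{t1} is that $\Phi$ is no longer zero but only takes the constant value $\Phi_i(t)$ on each $\partial B_i$; combined with $\int_{\partial B_i}\partial_{n_i}u\,d\ell=\delta_{ij}$ (divergence theorem, as $\zeta\in B_j$ alone), the boundary contributions collect to
\[
\sum_i\int_{\partial B_i}u\,\partial_{n_i}\Phi\,d\ell-\Phi_j(t)=\int_{\{|z|=R\}}\bigl(u\,\partial_r\Phi-\Phi\,\partial_r u\bigr)\,d\ell.
\]

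Let $R\to\infty$. Using the multipole expansion $\Phi(z,t)=-\frac{q_1(t)+q_2(t)}{2\pi}\log|z|+\gamma(t)+O(|z|^{-1})$ and the analogous expansion of $u$, the divergent $\log R$ contributions from the two pieces cancel and the surviving finite part is $-\gamma(t)$; the $\zeta$-dependent $O(1/R)$ corrections in $u$ contribute terms of the form $\mathrm{Re}(\zeta e^{-i\theta})$ whose angular averages vanish. Hence
\[
\frac{d}{dt}\Pi_{B(t)}(\zeta)=\Phi_j(t)-\gamma(t),\qquad \zeta\in B_j(t),
\]
and integrating in $\tau\in[0,t]$, which is legitimate because $\zeta$ stays in the same ancestor component throughout, yields
\[
\Pi_{B(0)}(\zeta)-\Pi_{B(t)}(\zeta)=\int_0^t\bigl[\gamma(\tau)-\Phi_j(\tau)\bigr]\,d\tau,
\]
a function of $t$ and $j$ alone, as desired; the value of this constant can differ from component to component, as claimed.

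The main technical obstacle is the asymptotic analysis at infinity: one must carry the expansions of $u$, $\Phi$ and their radial derivatives to just the right order both to see the $\log R$ divergences cancel and to identify the surviving finite piece as $\zeta$-independent. A secondary, more conceptual point is the need to label the ancestor component $B_j(\tau)$ of $\zeta$ consistently in $\tau$; this uses the monotonicity $B(t_1)\subset B(t_2)$ for $t_1>t_2$ together with the standing assumption that no new bubbles form under regulated contraction, so that the component containing $\zeta$ at time $t_0$ is the shrinking image of a uniquely determined component at every earlier time.
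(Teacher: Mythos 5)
Your proof is correct, and it is exactly the argument the paper has in mind: the paper's ``proof'' of Theorem \ref{t7} is the single sentence that it is ``proved analogously to part~1 of Theorem~\ref{t1},'' and the standard proof of that part is precisely the Green's-identity computation you carry out, with the only new feature being that $\Phi$ equals the nonzero constant $\Phi_i(t)$ on $\partial B_i$, which, via the flux identity $\int_{\partial B_i}\partial_{n_i}u\,d\ell=\delta_{ij}$, produces the extra term $\Phi_j(t)$ and hence a $j$-dependent (but $\zeta$-independent) rate of change $\Phi_j(t)-\gamma(t)$. Your explicit attention to the labelling of the ancestor component and to the monotonicity needed to integrate the pointwise identity over $[0,t]$ is exactly the right thing to flag, and is implicitly assumed in the paper.
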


This theorem is proved analogously to part 1 of Theorem \ref{t1}.

\begin{theorem} \label{t8}
Let $B_1^s,B_2^s$ be two smooth families of simply 
connected domains, $s=(s_1,s_2)$, $B_1^s\cap B_2^s=\emptyset$. 
If the potential of the system $\lbrace{B_1^s,B_2^s\rbrace}$
changes by a constant inside $B_1^s$ and inside $B_2^s$ under 
the change of $s$, and the areas of $B_1^s$ and $B_2^s$ are
independent of $s$, then $B_1^s$ and $B_2^s$ don't change under
the change of $s$.  
\end{theorem}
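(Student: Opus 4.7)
The plan is to show that the infinitesimal variation in any one parameter $s_j$ forces the outward normal velocities of both $\partial B_1^s$ and $\partial B_2^s$ to vanish, and then conclude by varying $j$ that the domains are independent of $s$.

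Fix $j\in\{1,2\}$, and let $\dot{\phantom{x}}$ denote $\partial_{s_j}$. Differentiating \eqref{e3} gives
$$
\dot\Pi(z)=\sum_{i=1}^{2}\frac{1}{2\pi}\int_{\partial B_i^s}\log|z-\zeta|\,v_i(\zeta)\,d\ell(\zeta),
$$
a logarithmic single-layer potential with densities $v_i$ equal to the outward normal velocities of the moving boundaries. The following facts are immediate: (a) $\dot\Pi$ is harmonic off the boundary and continuous across it, with the jump relation $\partial_n\dot\Pi|_{\rm out}-\partial_n\dot\Pi|_{\rm in}=v_i$ on $\partial B_i^s$; (b) the hypothesis that $\Pi$ changes by a constant inside each $B_i^s$ means $\dot\Pi\equiv c_i$ on $\bar B_i^s$, so $\partial_n\dot\Pi|_{\rm in}=0$ and hence $\partial_n\dot\Pi|_{\rm out}=v_i$; (c) invariance of each area forces $\int_{\partial B_i^s}v_i\,d\ell=0$; and (d) since $\sum_i\int v_i\,d\ell=0$, the potential $\dot\Pi$ has no logarithmic growth at infinity and in fact $\dot\Pi(z)=O(|z|^{-1})$ as $z\to\infty$.

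The heart of the argument is to prove $\dot\Pi\equiv 0$ on the exterior $\Omega=\mathbb{R}^2\setminus\overline{B_1^s\cup B_2^s}$. By (c), the circulations $\oint_{\partial B_i^s}\partial_n\dot\Pi\,d\ell$ vanish, and these are precisely the periods of the harmonic conjugate $\tilde u$ of $\dot\Pi$ around a generating set of $H_1(\Omega)\cong\mathbb{Z}^2$; hence $\tilde u$ is single-valued on $\Omega$. Since $\dot\Pi$ is constant along each $\partial B_i^s$, its tangential derivative vanishes there, and the Cauchy--Riemann equations give $\partial_n\tilde u\equiv 0$ on $\partial\Omega$. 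An energy identity on $\Omega\cap\{|z|<R\}$, letting $R\to\infty$ and using the decay from (d) together with the Laurent expansion of $F:=\dot\Pi+i\tilde u$ at infinity, yields $\int_\Omega|\nabla\tilde u|^2=0$. Therefore $\tilde u$ is constant, $F$ is a holomorphic constant, and $\dot\Pi\to 0$ at infinity forces $F\equiv 0$. In particular $\dot\Pi\equiv 0$ on $\Omega$.

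Feeding this back into the jump relation (a)--(b) gives $v_i=\partial_n\dot\Pi|_{\rm out}=0$ on each $\partial B_i^s$, so the boundaries are stationary under $\partial_{s_j}$. Running the argument for both $j=1,2$ shows $B_1^s,B_2^s$ are independent of $s$. The main technical point requiring care is the energy identity in the unbounded exterior $\Omega$: one must justify that the boundary contribution on $\{|z|=R\}$ vanishes as $R\to\infty$, which comes down to the gradient decay $|\nabla\tilde u|=O(|z|^{-2})$ of a bounded harmonic function with zero limit at infinity.
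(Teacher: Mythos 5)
Your proof is correct and reaches the same endpoint --- the outward normal velocity $v_s$ of the moving boundary vanishes --- but by a genuinely different mechanism from the paper's. The paper's argument is a short duality statement: via Reynolds' formula $\frac{d}{ds}\int_{B^s}u\,dA=\int_{\partial B^s}u\,v_s\,d\ell$, the hypotheses say exactly that $v_s$ is $L^2$-orthogonal on $\partial B_1^s\cup\partial B_2^s$ to the characteristic functions of the two boundary components (area constancy) and to $\eta_1^w(z)={\rm Re}(w-z)^{-1}$, $\eta_2^w(z)={\rm Im}(w-z)^{-1}$ for all $w\in B_1^s\cup B_2^s$ (constancy of $\nabla\Pi$ inside the bubbles), and it then invokes density of this family in $L^2(\partial B_1^s\cup\partial B_2^s)$ to conclude $v_s\equiv 0$. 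You instead represent $\dot\Pi$ explicitly as a logarithmic single-layer potential with density $v_s$ and give a direct uniqueness proof: $\dot\Pi$ is constant inside each bubble and decays at infinity, its harmonic conjugate on the exterior $\Omega$ is single-valued because the periods equal $\int_{\partial B_i}v_s\,d\ell=0$, a Dirichlet-energy identity on $\Omega$ then forces $\dot\Pi\equiv 0$ there, and the jump relation for the normal derivative of a single-layer potential recovers $v_s=0$. In effect you prove, for the specific density at hand, the statement the paper cites as an $L^2$-density fact without justification, so your version is more self-contained. What it costs is the extra bookkeeping in the unbounded exterior --- single-valuedness of the conjugate and the $O(|z|^{-2})$ gradient decay needed to kill the boundary term on $\{|z|=R\}$ --- which you correctly flag as the main technical point and which the paper's formulation sidesteps entirely.
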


Thus, a pair of simply connected domains is locally uniquely
determined by the potential inside them, given up to additive
constants, and by the areas of the domains.  

\begin{proof}
Let $v_s: \partial B_1^s\cup \partial B_2^s\to \Bbb R$ is the
velocity of the boundary under the change of $s$. For any
continuous function $u(x,y)$, 
$$
\frac{d}{ds}\int_{B_1^s\cup B_2^s}udxdy=\int_{\partial B_1^s\cup
\partial B_2^s}uv_sd\ell.
$$
Thus, the conditions of the theorem imply that the function $v_s$
is orthogonal on $\partial B_1^s\cup \partial B_2^s$ to the
following functions: 

(i) the characteristic functions $\xi_j(z)$ of the boundary
comnponents $\partial B_j^s$ (because the areas of $B_j^s$ are
constant in $s$); and 

(ii) $\eta_1^w(z)={\rm Re}(w-z)^{-1}$, $\eta_2^w(z)={\rm
Im}(w-z)^{-1}$, where $w\in B_1^s\cup B_2^s$ (because of the
fact that the gradient of the gravity potential
is constant in $s$). These functions are dense in $L^2(\partial
B_1^s\cup \partial B_2^s)$. Hence, $v_s$ is identically zero. 
The theorem is proved. 
\end{proof}

\begin{theorem} \label{t9}
Let $\bar q_s(t)$, $t\in (\tau,\theta)$, $\bar q=(q_1,q_2)$, 
$s\in [0,1]$, be a smooth family of strategies of extraction
such that 
$\int_\tau^\theta \bar q_s(t)dt=\bar Q$ is independent of $s$. 
Let $B^s(\theta)=B^s_1(\theta)\cup B^s_2(\theta)$ be the result of
extraction according to the strategy $\bar q_s$ from the same 
initial domain $B(\tau)=B_1(\tau)\cup B_2(\tau)$. 
Then $B^s(\theta)$ is independent of $s$. 
\end{theorem}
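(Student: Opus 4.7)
The strategy is to reduce Theorem \ref{t9} to the uniqueness statement of Theorem \ref{t8}, using Theorem \ref{t7} to supply the potential-constancy hypothesis. The plan has three short steps. First, I would verify the area hypothesis of Theorem \ref{t8}. By conservation of mass during regulated contraction,
$$
|B^s_j(\theta)| = |B_j(\tau)| - \int_\tau^\theta q^s_j(t)\, dt = |B_j(\tau)| - Q_j,
$$
and the right-hand side is independent of $s$ by the assumption $\int_\tau^\theta \bar q_s(t)\,dt = \bar Q$.

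Second, I would verify the potential hypothesis. For each fixed $s$, Theorem \ref{t7} applied to the regulated contraction from $B(\tau)$ to $B^s(\theta)$ yields constants $C_1^s$, $C_2^s$ with
$$
\Pi_{B(\tau)}(\xi,\eta) - \Pi_{B^s(\theta)}(\xi,\eta) = C_j^s \qquad \text{for } (\xi,\eta) \in B^s_j(\theta),\ j=1,2.
$$
Given any two nearby parameters $s$ and $s'$ and a point $(\xi,\eta)$ lying in $B^s_j(\theta)\cap B^{s'}_j(\theta)$, subtracting these identities gives
$$
\Pi_{B^{s'}(\theta)}(\xi,\eta) - \Pi_{B^s(\theta)}(\xi,\eta) = C_j^s - C_j^{s'},
$$
which is independent of $(\xi,\eta)$. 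Equivalently, as $s$ varies, the potential of the system $\{B^s_1(\theta), B^s_2(\theta)\}$ changes by an $s$-dependent constant inside each of the two components, which is exactly the potential hypothesis of Theorem \ref{t8}.

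Third, I would invoke Theorem \ref{t8}. Smoothness of the strategy $\bar q_s(t)$ in $s$, together with the assumption that no topological transformations occur on $(\tau,\theta)$, guarantees that $s\mapsto (B^s_1(\theta), B^s_2(\theta))$ is a smooth family of pairs of disjoint simply connected domains. Applying Theorem \ref{t8} along the path $s\in[0,1]$ (or locally around each $s_0$ and then using connectedness of $[0,1]$) shows that $B^s(\theta)$ is independent of $s$.

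The only delicate point is confirming that the family is smooth enough for Theorem \ref{t8} to be applied. This is a regularity question about the regulated contraction problem rather than a conceptual obstacle; once this is granted, the proof is essentially algebraic, combining the potential identity of Theorem \ref{t7} with the rigidity statement of Theorem \ref{t8}.
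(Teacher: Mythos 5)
Your proof is correct and follows essentially the same route as the paper: verify the area and potential-constancy hypotheses (the latter via Theorem~\ref{t7}), then invoke the rigidity statement of Theorem~\ref{t8}. Your write-up is somewhat more detailed (e.g., explicitly subtracting the potential identities for two parameter values) but conceptually identical.
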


\begin{proof}
By Theorem \ref{t7}, the gravity potential $B^s(\theta)$ is independent
of $s$ (and equals the potential of $B(\tau)$) up to an additive
constant (in each connected component). The areas of the domains
$B_1^s(\theta)$ and $B^s_2(\theta)$ are also independent of $s$
and equal $S_1-Q_1$, $S_2-Q_2$, respectively, where $S_j$ are the
areas of the components $B_j(\tau)$, and $Q_j$ are the
coordinates of the vector $\bar Q$, i.e., the volumes of the air
extracted from the first and the second bubble. By Theorem
\ref{t8}, $B^s_j(\theta)$ do not change under the change of $s$,
as desired.  
\end{proof}

Thus, the result of contraction depends only on the total
quantities of air extracted from the bubbles for a given 
period of time, and does not depend on other parameters of the
strategy. In other words, the transformations in the space of
domains defined by extraction of air from the first and the second
bubble, respectively, commute with each other. This is an analog 
of Richardson's result \cite{Ri} on the commutativity of injection
operations at different points (see also \cite{EV}). 

\subsection{The phase rectangle and the accessibility region}
Theorem \ref{t9} shows that the domains which can be obtained
from $B(0)$ under regulated contaction can be visualized by 
points of the ``phase'' rectangle
$0\le X\le S_1$, $0\le Y\le S_2$, where $S_1$ and $S_2$ are areas
of $B_1(\tau)$ and $B_2(\tau)$, respectively: to the point
$(X,Y)$ corresponds the domain which is obtained by extraction of
the volumes $S_1-X,S_2-Y$ from the first and second bubble,
respectively. The strategy of extraction is depicted by a path
inside the rectangle, which emanates from the corner $(S_1,S_2)$
(Fig. 5). 

It is important to note that in the process of regulated
contraction, pieces of the boundary may move 
towards the fluid region. If this happens for a particular
strategy, the problem of regulated contraction
is ill-posed in the vicinity of this strategy. 
Thus in general one has neither the monotonicity
property nor the existence of a weak solution up to the time of
complete contraction. In other words, the solution of the problem 
of regulated contraction may not exist for some strategies of
extraction. More specifically, in a generic situation solutions develop 
singularities in the following way: at some point of the boundary
one of the two bubbles develops a semicubic cusp directed
towards the fluid region (Fig.6), and after this time the solution
cannot be continued. 

The development of such a cusp was first
discovered by P. Ya. Polubarinova-Kochina \cite{Ko} for the problem
of contraction of the boundary of the oil region. The fact that
the singularity generically takes the shape of a semicubic cusp
is related to the fact that while the solution exists, the
boundary of the bubble is an analytic curve,
and a semicubic cusp is the simplest
singularity of a nonselfintersecting analytic curve. 

It follows from the above that not all paths in the phase
rectangle correspond to actual solutions, but only those that
lie in some region $\Omega$, which is the set of all points of
the rectangle that can be accessed by a strategy of extraction
in which both bubbles exist all the way up to the last moment. 
We will call $\Omega$ {\it the
accessibility region}. The boundary of the accessibility region 
consists (in the generic situation) of 
parts of the boundary of the phase rectangle, and 
curves, whose points correspond to pairs of bubbles, one of which
has a cusp (Fig.7) 

The path $\gamma$ corresponding to free
contraction divides the accessibility region into two parts. Namely, if the
pressure in the first bubble is kept higher than in the second bubble,
then the corresponding path lies below $\gamma$, and if it is kept
lower then the path lies above $\gamma$. Similarly, one can
construct the trajectory $\gamma_P$ of free contraction, starting from 
a domain corresponding to any point $P\in \Omega$. 
The accessibility region $\Omega$ is foliated by such
trajectories (Fig.8), which implies that $\Omega$ is
contractible (i.e., simply connected).  

If the initial domain consists of two symmetric bubbles, 
then the accessibility region is symmetric with respect to the
diagonal $Y=X$ of the phase square, 
and the path $\gamma$ of the free contraction is 
this diagonal. 

\begin{remark}
Theorem \ref{logslow} implies that the trajectories $\gamma_P$ of free
contraction which do not end in the origin are tangent to the
boundary of the phase rectangle at the endpoint, and the tangency is of the type 
$Y-a=cX\log(1/X)$ if the endpoint is $(0,a)$, 
and $X-a=cY\log(1/Y)$ if the endpoint is $(a,0)$ (for some
$c>0$). 
\end{remark} 

\section{Synchronizing strategies of extraction}
 
Let us say that a strategy $\bar q(t)$ is synchronizing for the
system of bubbles $B_1,B_2$, if the extraction according to this strategy
leads to simultaneous contraction of the bubbles $B_1$ and $B_2$ 
to a point. The path in the accessibility region which 
corresponds to a synchronizing strategy ends in the origin. 
Obviously, a synchronizing strategy exists iff the accessibility
region contains the origin. A domain $B(0)$ 
which breaks up under contraction 
into two bubbles which have this
property, will be called {\it synchronizable}. 

Under extraction of air according to synchronizing strategy $\bar q(t)$, the
bubbles $B_1$ and $B_2$ simultaneously contract to points $P_1$ and $P_2$. 
This means that the points $P_1$ and $P_2$ are limiting positions
of the boundaries $\partial B_1(t)$ and $\partial B_2(t)$ when
the time $t$ tends to the time $t^*$ of contraction. These points
are easily found from the shape of the initial domain.

\begin{theorem}\label{t10}
The contraction points $P_1$ and $P_2$ are critical points of the
potential $\Pi_{B(0)}$ (if they both belong to $B(0)$). 
\end{theorem}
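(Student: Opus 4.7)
The plan is to combine Theorem \ref{t7} (constancy of $\Pi_{B(0)}-\Pi_{B(t)}$ inside each bubble) with a direct estimate showing that the gradient of the Newtonian potential of a vanishingly small set is small.

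First, by Theorem \ref{t7}, the function $\Pi_{B(0)}-\Pi_{B(t)}$ is constant on each connected component $B_i(t)$ of the air region. Because a synchronizing strategy is, by definition, free of cusp singularities on $[0,t^*)$, monotonicity of the boundary motion places $P_i$ in the interior of $B_i(t)$ for every $t\in(0,t^*)$. Differentiating the constancy relation at $P_i$ therefore gives
\[
\nabla \Pi_{B(0)}(P_i) \;=\; \nabla \Pi_{B(t)}(P_i) \qquad \text{for every } t\in(0,t^*).
\]

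The second step is to show that the right-hand side tends to $0$ as $t\to t^*$. From the definition (\ref{e3}) of $\Pi_B$,
\[
\nabla \Pi_{B(t)}(P_i) \;=\; \frac{1}{2\pi}\int_{B(t)}\frac{P_i-\zeta}{|P_i-\zeta|^2}\,dA(\zeta).
\]
I would split the integral according to $B(t)=B_1(t)\cup B_2(t)$. For $j\neq i$, the set $B_j(t)$ shrinks to $P_j\neq P_i$, so for $t$ close to $t^*$ the integrand is uniformly bounded on $B_j(t)$; hence this contribution is $O(|B_j(t)|)\to 0$. For the $B_i(t)$-part, the modulus is at most $\int_{B_i(t)}|P_i-\zeta|^{-1}\,dA$; since $P_i\in B_i(t)$ and $\mathrm{diam}\,B_i(t)\to 0$, this integral is bounded by $\int_{|z|\le \mathrm{diam}(B_i(t))}|z|^{-1}\,dA = 2\pi\,\mathrm{diam}(B_i(t))\to 0$. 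Combining with step one yields $\nabla \Pi_{B(0)}(P_i)=0$.

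The only delicate point is the first step: one must know that the regulated contraction really persists up to $t^*$ with $P_i$ remaining an interior point of $B_i(t)$ and with $B_i(t)$ actually collapsing to the single point $P_i$ rather than to a larger limit set. Both are built into the definition of a synchronizing strategy given just before the theorem, so the argument above is essentially the whole proof; the limit estimate in step two is a routine two-dimensional potential-theoretic bound.
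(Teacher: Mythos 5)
Your overall strategy matches the paper's: combine Theorem \ref{t7} (constancy of $\Pi_{B(0)}-\Pi_{B(t)}$ inside each bubble) with a bound showing that the gradient of the potential of a vanishingly small domain tends to zero. But step 1 has a genuine gap.

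You assert that $P_i$ stays in the interior of $B_i(t)$ for all $t\in(0,t^*)$ by invoking ``monotonicity of the boundary motion.'' Regulated contraction, which is what a synchronizing strategy is, need not be monotone; the paper states explicitly in Section 5.2 that pieces of the boundary may move \emph{towards} the fluid region during regulated contraction. Monotonicity is a property of free contraction only. Moreover, even if the bubble does shrink monotonically to the single point $P_i$, that does not by itself put $P_i$ inside $B_i(t)$ at earlier times: a family of small disks of radius $r\to 0$ centered at $P_i+rv$ has boundaries converging to $P_i$ while never containing $P_i$. The paper sidesteps this entirely by choosing a sequence of times $t_n\to t^*$ and points $P_j^{(n)}\in B_j(t_n)$ with $P_j^{(n)}\to P_j$, applying Theorem \ref{t7} at $P_j^{(n)}$, and then passing to the limit using the $C^1$ continuity of $\Pi_{B(0)}$. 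Your argument is repaired by making that same substitution.

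Your step 2 is correct but heavier than needed. The paper proves a clean uniform estimate (Lemma \ref{le}): for any bounded domain $G$ of area $S$ and any point, $|\nabla\Pi_G|\le\sqrt{S/\pi}$, obtained by comparing $G$ to the disk of the same area centered at the evaluation point. This requires no hypothesis that the evaluation point lie in $G$ and no control of the diameter, and it applies immediately to the approximating points $P_j^{(n)}$. Your two-part decomposition and diameter-based bound reach the same conclusion but only under the additional assumption $P_i\in B_i(t)$, which is exactly what was not justified in step 1.
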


\begin{proof}
\begin{lemma}\label{le}
Let $G$ be a bounded domain of area $S$. Then 
\begin{equation}
|\nabla \Pi_G(x,y)|\le \sqrt{S/\pi}, x,y\in \Bbb R.
\end{equation} 
\end{lemma}

\begin{proof}
Let $z=x+iy,w=u+iv$. Let $K$ be the disk of area $S$ centered at
$(x,y)$. Then we have 
$$
|\nabla
\Pi_G(x,y)|=\frac{1}{2\pi}\vert\int_G\frac{(z-w)dudv}{|z-w|^2}\vert\le 
\frac{1}{2\pi}\int_G\frac{dudv}{|z-w|}\le 
$$
$$
\frac{1}{2\pi}\int_K\frac{dudv}{|z-w|}=\frac{1}{2\pi}\int_0^{2\pi}d\rho
\int_0^{\sqrt{S/\pi}}d\lambda=\sqrt{S/\pi}, 
$$
as required. 
\end{proof}

Now we prove the theorem. Let $t_n$ be a sequence of times which
tends from below to $t^*$. Let $P_1^{(n)},P_2^{(n)}$ be sequences
of points in $B_1(t_n)$ and $B_2(t_n)$, which converge to $P_1$
and $P_2$ as $n\to \infty$. By Theorem \ref{t7}, we have 
$$
\nabla \Pi_{B(0)}(P_j^{(n)})=\nabla \Pi_{B(t_n)}(P_j^{(n)}).
$$
By Lemma \ref{le}, $|\nabla \Pi_{B(t_n)}(P_j^{(n)})|\to 0$ as
$n\to\infty$, as the area of $B(t_n)$ tends to $0$ for
$n\to\infty$. Hence, $\nabla \Pi_{B(0)}(P_j^{(n)})\to 0$,
$n\to\infty$, $j=1,2$. Since the gravity potential is a
$C^1$-function, this implies that 
$\nabla \Pi_{B(0)}(P_j)=0$, as desired. 
\end{proof}

Clearly, one of the contraction points $P_1,P_2$ coincides 
with the point $P$ of complete contraction in the sense of 
Section 1 (namely, the contraction point for the bubble that
contracts later under free contraction). 
This point, as we mentioned, is the global minimum
point of the gravity potential. Regarding the second point, it is
shown below that it is either a local minimum point 
or a degenerate critical point, and the degenerate critical point
arises for initial domains which lie on the boundary between
synchronizable and nonsynchronizable domains in the space of
domains. 

Let us call an initial domain $B(0)$ strictly synchronizable if the
corresponding accessibility region contains 
a sector 
$$
\lbrace{(X,Y)| X\ge 0,Y\ge 0,
X^2+Y^2<\varepsilon\rbrace}
$$ 
for sufficiently small
$\varepsilon$. Clearly, strict synchronizability is an open
condition, i.e. this is a property which is stable under small
deformations. Non-strictly synchronizable domains form the
boundary between synchronizable and nonsynchronizable domains. 
To illustrate this, we show in Fig.9,a,b,c what the accessibility region
looks like for a strictly synchronizable, non-strictly
synchronizable, and nonsynchronizable domain. 

\begin{theorem}\label{t11} If a domain $B(0)$ is strictly
synchronizable, then the contraction points $P_1$, $P_2$ 
for a synchronizing strategy are local minima of the potential
$B(0)$ (if they lie in $B(0)$). 
\end{theorem}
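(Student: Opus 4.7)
By Theorem \ref{t10}, $P_1$ is a critical point of $\Pi_{B(0)}$; since $P_1 \in B(0)$ and $\Delta \Pi_{B(0)} = 1$ on $B(0)$, the Hessian $H$ of $\Pi_{B(0)}$ at $P_1$ has trace $1$. The plan is to show that $H$ is positive definite, so that $P_1$ is a nondegenerate local minimum; the argument for $P_2$ is identical. The main tool is an asymptotic quadratic quadrature identity for the small bubble $B_1$ that strict synchronizability makes available.

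By strict synchronizability, for every small $\epsilon > 0$ there is a valid endpoint $(B_1(\epsilon), B_2(\epsilon))$ with $\mathrm{Area}(B_j(\epsilon)) = \epsilon$ and $B_j(\epsilon) \to \{P_j\}$; Theorem \ref{t7} gives $\Pi_{B(0)} - \Pi_{B_1(\epsilon)} - \Pi_{B_2(\epsilon)} = C(\epsilon)$ on $B_1(\epsilon)$. I rescale $z = P_1 + \sqrt\epsilon u$, set $\tilde B_1(\epsilon) := \epsilon^{-1/2}(B_1(\epsilon) - P_1)$ (a domain of area $1$), Taylor-expand $\Pi_{B(0)}$ to second order at the critical point $P_1$, use the scaling $\Pi_{aG}(aw) = a^2\Pi_G(w) + O(a^2 \log a)$ of the logarithmic potential, and note that $\Pi_{B_2(\epsilon)}(P_1 + \sqrt\epsilon u) = O(\epsilon)$ uniformly in $u$ (since $B_2(\epsilon) \to \{P_2\} \ne P_1$ and is real-analytic in a fixed neighborhood of $P_1$). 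Dividing by $\epsilon$, the identity reduces to
$$
\Pi_{\tilde B_1(\epsilon)}(u) = \tfrac{1}{2} u^T H u + \tilde C(\epsilon) + O(\sqrt\epsilon), \qquad u \in \tilde B_1(\epsilon). \quad (\ast)
$$

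Differentiating $(\ast)$ and invoking Lemma \ref{le} for the unit-area domain $\tilde B_1(\epsilon)$ yields the pointwise bound $|Hu| \le \pi^{-1/2} + O(\sqrt\epsilon)$ on $\tilde B_1(\epsilon)$. If $H$ had a negative eigenvalue, then $H$ would be invertible and this bound would force a uniform diameter bound on the $\tilde B_1(\epsilon)$; a Hausdorff-convergent subsequence would produce a bounded simply connected planar domain $\tilde B_\infty$ of area $1$ satisfying the exact quadrature identity $\Pi_{\tilde B_\infty}(u) = \tfrac{1}{2} u^T H u + \mathrm{const}$ on $\tilde B_\infty$. By the classical Dive--Nikliborc--Sakai characterization of bounded planar domains whose interior logarithmic potential is a quadratic polynomial, $\tilde B_\infty$ must be an ellipse; but the interior Hessian of the logarithmic potential of any real ellipse has two strictly positive eigenvalues summing to $1$, contradicting the assumed negative eigenvalue of $H$. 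Hence $H$ is positive semidefinite. A zero eigenvalue of $H$ is incompatible with strict synchronizability, since it would force the limit ellipse to degenerate to a line segment (an infinitely elongated ellipse), which is precisely the boundary case between synchronizable and non-synchronizable domains to be treated in Section 8 and is excluded by the sectorial hypothesis defining strict synchronizability. Therefore $H$ is positive definite, and $P_1$ is a nondegenerate local minimum.

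The hard step is the compactness argument: one must check that the subsequential limit $\tilde B_\infty$ has area exactly $1$ and is simply connected, so that the quadrature characterization applies. The gradient bound supplies the required diameter control in the non-degenerate direction, but ruling out shape degeneration (for instance, the $\tilde B_1(\epsilon)$ thinning to a set of zero area along the subsequence) requires standard Carath\'eodory-kernel convergence arguments applied to the exterior conformal maps uniformizing the $\tilde B_1(\epsilon)$. Once in place, the identity $(\ast)$ passes to the limit and the ellipse characterization closes the proof.
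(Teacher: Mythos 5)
Your approach is genuinely different from the paper's, and it contains a real gap that the paper's own argument sidesteps entirely.

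The paper's proof is short and indirect. It observes that since the accessibility region $\Omega$ is foliated by free-contraction trajectories, strict synchronizability (a full sector of $\Omega$ near the origin) forces one of these trajectories to end at the origin. Hence there is a synchronizing strategy that coincides with \emph{free} contraction on an interval $(t^*-\varepsilon, t^*)$. For that tail the hypotheses of Theorem~\ref{t4} are met, so $P_1, P_2$ are global minima of $\Pi_{B(t^*-\varepsilon)}$; and since Theorem~\ref{t7} says $\Pi_{B(0)}$ and $\Pi_{B(t^*-\varepsilon)}$ agree up to constants inside the components of $B(t^*-\varepsilon)$, the points $P_1, P_2$ are local minima of $\Pi_{B(0)}$. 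No rescaling, no quadrature identity, no Hessian analysis. Your proof instead tries to directly establish the stronger statement that the Hessians are positive definite, using the blow-up and ellipse-quadrature machinery that the paper reserves for Theorem~\ref{t12} and Section~8.

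The gap is in the step ruling out a zero eigenvalue of $H$. Your gradient bound $|Hu|\le \pi^{-1/2}+O(\sqrt\epsilon)$ controls the diameter of $\tilde B_1(\epsilon)$ only in directions transverse to $\ker H$; if $H$ is singular, the rescaled bubbles can elongate without bound along $\ker H$, the compactness argument fails, and no limiting ellipse is produced at all. You then dismiss this case by appealing to ``the boundary case... to be treated in Section 8,'' but that appeal is circular: what you need is that strict synchronizability implies the Hessians at $P_1, P_2$ are nondegenerate, which is the \emph{converse} of Theorem~\ref{t13}. The paper proves only Theorem~\ref{t13} (nondegenerate Hessians plus synchronizable implies strictly synchronizable), whose contrapositive (Corollary~\ref{co}) goes in the wrong direction for your purpose. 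So the degenerate-eigenvalue case remains unaddressed, and since positive semi-definiteness of $H$ alone does not imply a local minimum (consider $y^2/2 - x^3$), your argument does not actually close. The compactness step you flag is a secondary concern, repairable along the lines of the proof of Theorem~\ref{t12}; the real missing ingredient is the nondegeneracy, and the cleanest way to avoid having to prove it is exactly the paper's reduction to Theorem~\ref{t4} via a free-contraction tail.
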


\begin{proof}
If the domain $B(0)$ is strictly synchronizable, then one of the
trajectories of free contraction ends in the origin (Fig.9).
 
Therefore, there exists a synchronizing strategy which
corresponds to free contraction on the interval
$(t^*-\varepsilon,t^*)$ for some $\varepsilon>0$. Then by Theorem
\ref{t4}, the points $P_1,P_2$ are points of global minimum 
of the potential $\Pi_{B(t^*-\varepsilon)}$. Since inside
$B_i(t^*-\varepsilon)$, 
$i=1,2$, the potential $\Pi_{t^*-\varepsilon}$ coincides with
$\Pi_{B(0)}$ up to constants, we see that $P_1$ and $P_2$ are
points of local minimum of $\Pi_{B(0)}$. 
\end{proof}

\begin{remark}
If the point $P_1$ or $P_2$ is outside the domain $B(0)$ 
(a priori, such a situation cannot be ruled out because of the failure of the
monotonicity property), then it is a local minimum point
of the analytic continuation of the potential of of the domain $B(0)$ from its
inside to its outside along the track of the corresponding bubble.   
\end{remark}

\section{Asymptotics of regulated contraction under a
synchronizing strategy}

\begin{theorem}\label{t12}
Let $B(t)=(B_1(t),B_2(t))$ be the evolution of the air domain
after its breakup, under extraction of air according to a
synchronizing strategy $\bar q(t)$, $P_1,P_2\in B(0)$ be the
contraction points, and $A_1,A_2$ the Hessian matrices of the
potential $\Pi_{B(0)}$ at these points. Let $\widehat{B_j(t)}$ be
the domains obtained from $B_j(t)$ by dilation $\lambda_j(t)$
times, where $\lambda_j(t)$ is chosen in such a way that 
$\widehat{B_j(t)}$ has a fixed diameter $d$. Then if the matrices
$A_1,A_2$ are nondegenerate, then the boundaries $\partial
\widehat{B_i(t)}$ tend to ellipses, whose axes are directed 
along eigenvectors of $A_1,A_2$, and the lengths of the half-axes
are inverse proportional to the eigenvalues of these matrices. 
\end{theorem}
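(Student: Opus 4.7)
The plan is to reduce to the single-bubble asymptotic result of \cite{EV} applied separately to each of $B_1(t)$ and $B_2(t)$, using the freedom to alter a strategy without changing intermediate shapes (Theorem \ref{t9}) together with the fact that the gravity potential inside each component changes only by a constant (Theorem \ref{t7}). First, as in the proof of Theorem \ref{t11}, I would replace the given synchronizing strategy by one coinciding with free contraction on a final interval $(t^*-\varepsilon,t^*)$; by Theorem \ref{t9} this does not alter the intermediate shapes $B_j(t)$ on that interval, hence does not affect the asymptotics of $\widehat{B_j(t)}$. Then Theorem \ref{t7} applied inside each $B_j(t)$ yields
\[\Pi_{B_j(t)}(\xi,\eta)=\Pi_{B(0)}(\xi,\eta)-\Pi_{B_{3-j}(t)}(\xi,\eta)-C_j(t),\quad (\xi,\eta)\in B_j(t).\]
Since $P_1\ne P_2$ and $B_{3-j}(t)$ has area tending to zero while confined to a small neighborhood of $P_{3-j}$, the function $\Pi_{B_{3-j}(t)}$ is harmonic on a fixed neighborhood of $P_j$ for $t$ close to $t^*$, with derivatives of every order of size $O(\text{area}(B_{3-j}(t)))$ by the integral representation \eqref{e3}. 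In particular the Hessian at $P_j$ of the effective potential $\Pi_{B(0)}-\Pi_{B_{3-j}(t)}$ converges to $A_j$, so $B_j(t)$ contracts, to leading order, as a single bubble to a nondegenerate minimum of this effective potential with limiting Hessian $A_j$.

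Second, I would run the conformal-map argument of \cite{EV}, parallel to the proof of Theorem \ref{symbub} but for the generic nondegenerate case: write the map of the unit disk onto the complement of $B_j(t)$ as $f(\zeta)=c_{-1}\zeta^{-1}+c_1\zeta+\phi(\zeta)$ with $\phi$ holomorphic and of order $\zeta^3$, use Richardson's identity (that $f^*(\zeta^{-1})-h_{B_j(t)}(f(\zeta))$ extends holomorphically to the disk and vanishes at $0$) together with the Taylor expansion of $h_{B_j(t)}$ about $P_j$, and read off the coefficients $c_{\pm 1}$ to leading order from $A_j$. The resulting leading-order parametric curve is an ellipse whose axes are eigenvectors of $A_j$ with half-axes inversely proportional to the eigenvalues; after rescaling to diameter $d$ this is the claimed limit of $\partial\widehat{B_j(t)}$.

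The main obstacle is making the leading-order decoupling rigorous uniformly as $t\to t^*$: one must verify that the harmonic perturbation $\Pi_{B_{3-j}(t)}$ and its derivatives up to second order on the support of $B_j(t)$ are genuinely subordinate to the contribution of $\Pi_{B(0)}$ with its nondegenerate Hessian $A_j$. The nondegeneracy of $A_j$ forces $B_j(t)$ to have diameter of order $\sqrt{\text{area}(B_j(t))}$, while $|P_1-P_2|>0$ combined with \eqref{e3} makes $\Pi_{B_{3-j}(t)}$ a smooth harmonic function on a fixed neighborhood of $P_j$ with $C^2$-norm of order $\text{area}(B_{3-j}(t))$. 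Combining these estimates yields the required control, and the rest of the argument is the conformal-map computation parallel to the proof of Theorem \ref{symbub}.
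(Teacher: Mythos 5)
Your first reduction step is flawed, though the rest of the argument is sound. You claim that ``by Theorem \ref{t9} this does not alter the intermediate shapes $B_j(t)$,'' but Theorem \ref{t9} asserts only that the \emph{endpoint} domain $B^s(\theta)$ is invariant when the total extracted volume $\bar Q$ over $(\tau,\theta)$ is held fixed; replacing a given strategy by free contraction on a final interval changes the cumulative volumes extracted by intermediate times, and hence changes $B_j(t)$ at those times. Moreover, producing a synchronizing strategy that ends in free contraction requires that the origin lie in the interior of the accessibility region, i.e.\ strict synchronizability. That is exactly the content of Theorem \ref{t13} --- but Theorem \ref{t13} is proved \emph{using} Theorem \ref{t12}, so you cannot invoke it here without circularity.

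Fortunately your reduction is also unnecessary: the heart of your argument (Theorem \ref{t7} plus the decoupling estimate that $\Pi_{B_{3-j}(t)}$ has $C^2$-norm of order $\mathrm{area}(B_{3-j}(t))$ on a fixed neighborhood of $P_j$, plus the Richardson/conformal-map computation) applies directly to the regulated evolution and does not presuppose free contraction at the end. Once you drop the reduction step, your approach parallels the paper's: the paper also reduces to the statement that the Hessian $A_{E(t)}(z)$ of the potential of the rescaled-and-recentered domain converges to $A_1$, and then concludes that the limiting domain has quadratic potential $(A_1 z, z) + C$. The only methodological difference is the final identification of the limit: the paper invokes Sakai's theorem (a bounded domain whose interior gravity potential is a nondegenerate quadratic must be an ellipse with the stated axes), whereas you carry out the explicit conformal-map expansion in the style of Theorem \ref{symbub}. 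Both routes are legitimate; Sakai's theorem is slicker but requires the a priori convergence of $\partial E(t)$ to \emph{some} curve, while your conformal-map route extracts the limit shape and its convergence simultaneously, at the cost of more computation.
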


\begin{proof} We prove the theorem for $B_1$; the proof for $B_2$
is the same. Assume that the point $P_1$ is the origin. 
Obviously, for $t$ close to $t^*$ the potential $\Pi_{\widehat{B_1(t)}}$
has a nondegenerate local minimum at some point $a(t)$ inside
$\widehat{B_1(t)}$. 
Let $E(t)=B_1(t)-a(t)$ be the domain obtained by translating 
$\widehat{B_1(t)}$ by the vector $-a(t)$. 
It can be seen, along the lines of \cite{EV}, 
that the boundary $\partial E(t)$ 
converges to some curve $\Gamma$.

Let $A_D(z)$ be the matrix 
of second derivatives of the gravity 
potential of a domain $D$.
It is easy to see that 
$A_{E(t)}(z)\to A_{B(0)}(0)=A_1$, $t\to t^*$,
if $z\in E(t)$ for $t$ close to $t^*$.
This implies that the potential of the domain 
$E$ bounded by the curve $\Gamma$ is a
quadratic function, whose Hessian matrix is $A_1$, and which has
a minimum at zero, i.e. it is $(A_1z,z)+C$. 
By Sakai's theorem, this implies (see e.g. \cite{EV}) that $E$ is an ellipse,
whose axes are directed along the eigenvectors of $A_1$, and 
the lengths of half-axes are inverse proportional to its eigenvalues.
The center of the ellipse is situated at the origin. 
The theorem is proved. 
\end{proof} 

\section{Properties of the potential of a non-strictly
synchronizable domain}

\subsection{Contraction at degenerate critical points} 
\begin{theorem}\label{t13}
Let $B(0)$ be a synchronizable domain,
$P_1$ and $P_2$ be points of its contraction
under a synchronizing strategy, and $A_1,A_2$ be matrices of second
derivatives of the potential at the points $P_1,P_2$. 
In this case, if the matrices $A_1,A_2$ are nondegenerate, then the domain
$B(0)$ is strictly synchronizable. 
\end{theorem}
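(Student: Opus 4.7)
The plan is to use Theorem \ref{t12} to reduce to a local analysis near the nondegenerate contraction points, and then argue that a full two-dimensional neighborhood of the origin of the phase rectangle is accessible.

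First, I would observe that the asymptotic conclusion of Theorem \ref{t12} forces $A_1$ and $A_2$ to be positive definite, not merely nondegenerate. Indeed, the theorem produces genuine (real) ellipses as limiting shapes of the rescaled bubbles, with half-axis lengths inversely proportional to the eigenvalues of the $A_i$; these lengths being positive real numbers forces the eigenvalues to be positive. In particular each $P_i$ is a nondegenerate local minimum of $\Pi_{B(0)}$, hence also a nondegenerate local minimum of $\Pi_{B(t)}$ on each $B_i(t)$, by Theorem \ref{t7}.

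Next, I would fix a small $\delta>0$ and consider the state $(B_1(t^*-\delta), B_2(t^*-\delta))$ reached by the synchronizing strategy $\bar q(t)$ at time $t^*-\delta$. By Theorem \ref{t12}, this state consists of two small, nearly elliptical bubbles centered close to $P_1$ and $P_2$, with areas $X_\delta, Y_\delta \to 0$ as $\delta \to 0$. I would then show that, from this state, every target $(X, Y) \in [0,X_\delta]\times [0,Y_\delta]$ is reachable by a non-singular regulated extraction strategy. By Theorem \ref{t9} it suffices to produce any single such strategy with the prescribed total extracted amounts $(X_\delta-X, Y_\delta-Y)$; a natural candidate is the sequential one, namely extracting $X_\delta-X$ from bubble $1$ alone (with $q_2=0$) and then $Y_\delta-Y$ from bubble $2$ alone (with $q_1=0$). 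Each stage is essentially the contraction of a single bubble at a nondegenerate local minimum of the ambient potential, which is smooth by the one-bubble theory of \cite{EV}, while the passive bubble, being far from the active one on the scale of their diameters, undergoes only a small smooth perturbation. Composing with the synchronizing path from $(S_1,S_2)$ to $(X_\delta, Y_\delta)$, the accessibility region $\Omega$ of $B(0)$ then contains the rectangle $[0,X_\delta]\times[0,Y_\delta]$, and in particular contains a sector $\{(X,Y)\colon X\ge 0,\,Y\ge 0,\,X^2+Y^2<\varepsilon\}$, which is exactly strict synchronizability.

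The main obstacle is the rigorous justification that the sequential strategy is cusp-free, i.e.\ that extracting from one bubble does not push the boundary of the other inward sharply enough to develop a semicubic singularity of the type discussed in Section 5. The quantitative point is that for small $\delta$ the disturbance felt by the passive bubble is small compared to its own typical size, so its boundary moves only a little and remains smooth, while the active bubble contracts smoothly by the single-bubble theory. Making this precise requires a perturbation argument in the Hele-Shaw dynamics, using the nondegeneracy (equivalently positive definiteness) of the Hessians $A_1, A_2$ in an essential way to keep each bubble approximately elliptical throughout the process and uniformly away from the cusp regime.
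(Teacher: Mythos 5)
Your proof takes a genuinely different route from the paper's. The paper argues by contradiction: if $B(0)$ is synchronizable but not strictly so, the synchronizing path $\gamma$ in the accessibility region $\Omega$ can be taken to run along $\partial\Omega$, and points of $\partial\Omega$ (off the sides of the phase rectangle) correspond to states where one bubble carries a cusp; thus along $\gamma$ one bubble has a persistent cusp all the way to the origin, which contradicts Theorem~\ref{t12}'s conclusion that the rescaled bubbles converge to smooth ellipses when $A_1,A_2$ are nondegenerate. Your argument is instead a direct construction: follow the synchronizing path to a state $(X_\delta,Y_\delta)$ close to the origin, use Theorem~\ref{t9} to reduce accessibility of a target $(X,Y)$ to a single convenient strategy, and propose the sequential strategy; if it is cusp-free, a full rectangle (hence a sector) near the origin lies in $\Omega$, which is strict synchronizability. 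Both arguments use Theorem~\ref{t12} essentially, but the paper's exploits the description of $\partial\Omega$ by cusp states and so never needs to verify any particular strategy is regular, whereas yours reduces to a concrete but still nontrivial perturbation claim.

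That perturbation claim---that sequential extraction from two small, well-separated, nearly elliptic bubbles produces no cusp on either boundary---is exactly the gap you flag, and it is a real one. Observe that with $q_2=0$ the pressure in $B_2$ adjusts to give zero \emph{net} flux, but $\partial\Phi/\partial n$ need not vanish pointwise on $\partial B_2$: parts of the passive bubble's boundary do move toward the fluid, which is the scenario Section~5 identifies as the source of semicubic cusps. Showing that the induced deformation of $B_2$ remains univalent and cusp-free (and likewise that the active bubble $B_1$, contracting in the presence of the small static obstacle $B_2$, behaves like a single bubble at a nondegenerate minimum) requires a quantitative stability estimate for the Hele-Shaw evolution which the present paper does not supply and which you would have to develop. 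Until that is done your argument is a plausible sketch rather than a proof; the paper's contradiction argument is arranged precisely so as to sidestep this verification.
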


\begin{proof}
Assume the contrary, i.e. that the domain is not strictly
synchronizable. Consider the synchronizing strategy, which
corresponds to a path $\gamma$ in the accessibility region
$\Omega$ (Fig.10). 

We may assume that the path $\gamma$ goes along 
the boundary of $\Omega$. In this case, the boundary of one of
the bubbles has a persistent singularity in the process 
of contraction. But by Theorem
\ref{t12}, the limiting shape of this bubble is an
ellipse, which does not have a singularity. 
This is a contradiction.  
\end{proof}

\begin{corollary}\label{co}
If a domain is synchronizable, but not strictly synchronizable, 
then the matrix $A_2$ of second derivatives of the potential at
the point $P_2$ is degenerate (here $P_1$ is the point of
complete contraction, and $P_2$ is the second contraction
point). 
\end{corollary}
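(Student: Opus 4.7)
The strategy is to combine Theorem~\ref{t13} with the observation that the Hessian at the global minimum $P_1$ cannot be the degenerate one. First, by Theorem~\ref{t13} in contrapositive form, synchronizability together with failure of strict synchronizability forces at least one of the Hessians $A_1, A_2$ to be degenerate; what remains is to show that this must be $A_2$ rather than $A_1$.

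Next, I would identify the mechanism by which strict synchronizability fails, following the proof of Theorem~\ref{t13}. In the non-strict case, every synchronizing path $\gamma$ in the accessibility region $\Omega$ is forced to run along $\partial \Omega$ near the origin, and by the description of $\partial\Omega$ from Section~5.2 this corresponds to a persistent semicubic cusp on one of the two bubbles throughout the last stretch of contraction. By Theorem~\ref{t12}, a bubble whose contraction point has a nondegenerate Hessian converges (after appropriate rescaling) to an ellipse, which is smooth. Consequently, whichever bubble carries the persistent cusp must have a degenerate Hessian at its contraction point.

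The key remaining step is to argue that the cusp-carrying bubble must be $B_2$, i.e.\ the one contracting to the second critical point, rather than the bubble contracting to the global minimum $P_1$. For this I would appeal to the asymptotic analysis of Section~2.5: the bubble around $P_1$ is asymptotically governed by the local form of $\Pi_{B(0)}$ near $P_1$, and even if $A_1$ were degenerate Theorems~\ref{symbub} and \ref{nonsymbub} still provide a cusp-free limiting curve (an explicit algebraic curve determined by the degree of the critical point). Hence the bubble around $P_1$ cannot be the one carrying a persistent cusp, so the cusp sits on $B_2$, forcing $A_2$ to be degenerate.

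The main obstacle is technical rather than conceptual: one needs to upgrade the asymptotic shape results of Section~2.5, which were derived for single-bubble free contraction, to the regulated contraction setting used here, in order to conclude that the bubble around $P_1$ is indeed smooth in the limit even under a synchronizing strategy. A perturbation argument leveraging the fact that the synchronizing strategy is close to free contraction near $P_1$ in its final stretch (in light of Theorem~\ref{logslow} and the tangency remark at the end of Section~5) should suffice, but making this effective comparison precise is the delicate part of the proof.
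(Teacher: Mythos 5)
The paper states this corollary without a proof, and indeed the contrapositive of Theorem~\ref{t13} only yields that at least one of $A_1$, $A_2$ is degenerate; the attribution to $A_2$ is left implicit. Your filled-in argument is the natural one: run the proof of Theorem~\ref{t13} to produce a persistent cusp on one bubble as $t\to t^*$, observe that the cusp cannot sit on the bubble contracting to $P_1$ because $P_1$ is a (possibly degenerate) local minimum of $\Pi_{B(0)}$ and the Section~2.5 analysis yields a cusp-free limit shape at a degenerate minimum, and then apply Theorem~\ref{t12} to $B_2$ alone to force $A_2$ to be degenerate. This is almost certainly what the authors intend. Two observations would tighten the argument. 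The limit curve $y^2=\beta^2(1-x^2)Q_{n-1}^2(x^2)$ is indeed smooth precisely because all coefficients of $Q_{n-1}$ are positive, so $Q_{n-1}(x^2)$ never vanishes for real $x$; this deserves to be said. And since $\Delta\Pi_{B(0)}=1$ inside $B(0)$, one has $\tr A_1=1$, so at most one eigenvalue of $A_1$ can vanish, which guarantees that the normal form $\Pi\sim\tfrac12 y^2+\tfrac{\beta}{2n}x^{2n}+\cdots$ assumed in Theorems~\ref{symbub} and~\ref{nonsymbub} is always available at $P_1$.

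Your proposed fix for the technical obstacle, however, does not work as stated. In the non-strictly synchronizable case the path $\gamma$ runs along $\partial\Omega$, which is \emph{not} the free-contraction trajectory, so there is no stage at which the synchronizing strategy is ``close to free contraction.'' Moreover Theorem~\ref{logslow} concerns free-contraction trajectories terminating on the sides of the phase rectangle (where one bubble disappears before the other), not paths terminating at the origin, so it gives no information about $\gamma$. The correct way to extend the Section~2.5 asymptotics to regulated contraction is the one already used in the proof of Theorem~\ref{t12}: Theorem~\ref{t7} holds for every regulated strategy, so $\Pi_{B(t)}$ differs from $\Pi_{B(0)}$ by a constant inside $B_1(t)$; hence the local Taylor data of the potential, and the Cauchy transform $h$, at the contraction point are frozen independently of the extraction strategy, and the Richardson-transform computation of Theorems~\ref{symbub}/\ref{nonsymbub} carries over with the contribution of the far-away small bubble $B_2(t)$ entering only as a negligible correction.
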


In the generic situation this means that the potential 
has at $P_2$ a critical point of type saddle-node, i.e. 
in some Cartesian coordinates 
$$
\Pi_{B(0)}(x,y)=\frac{y^2}{2}+\beta \frac{x^3}{3}+..., 
$$
where $...$ stand for the terms inside the Newton polygon. 

\subsection{Asymptotics of contraction}

It is easy to show that in the situation Corollary \ref{co}, 
the limiting curve $\Gamma$ is a line interval
(slit) of length $d$ (traveled forward and backward), 
in the direction of the kernel of the matrix $A_2$.
However, it is interesting to study a finer asymptotics of 
contraction in this situation. Consider the generic case when 
the contraction point of the singular bubble $B_2(t)$ is $0$, 
and the potential at this point is as above: 
$$
\Pi_{B(0)}(z)=-\frac{(z-\bar z)^2}{8}+\frac{\beta{\rm Re}(z^3)}{3}+...
$$

Let $B_*(t)$ be image of the bubble $B_2(t)$ at the time
$t$ under the renormalization 
$x\to cx$, $y\to c^2y$, where 
$c=c(t)$ is chosen in such a way that 
the diameter of $B_*(t)$ is $2$.     

\begin{theorem}
(i) The boundary of $B_*(t)$ tends to the curve
$$
y^2=\beta^2\left(x+\frac{1}{2}\right)^3\left(x-\frac{3}{2}\right).
$$

(ii) The bubble $B_1(t)$ contracts at some point $a<0$ of the
real axis. Thus the cusp of $B_2$ is always directed precisely towards $B_1$
at the time of contraction. 
\end{theorem}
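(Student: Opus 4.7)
The approach is to adapt the method of Theorem \ref{symbub} to this non-$x$-symmetric setup. The local potential
\[
\Pi_{B(0)}(z) = -\frac{(z-\bar z)^2}{8} + \frac{\beta\,{\rm Re}(z^3)}{3} + \cdots
\]
is still $y$-invariant, so the shrinking bubble $B_2(t)$ is asymptotically $y$-symmetric, and its uniformizing conformal map $f_t:\mathbb{D}\to B_2(t)^c$ (with $f_t(0)=\infty$, $f_t(\zeta)=A\zeta^{-1}+O(1)$, $A=A(t)\to 0$) has real Taylor coefficients
\[
f_t(\zeta) = A\zeta^{-1} + a_0 + a_1\zeta + a_2\zeta^2 + \cdots,
\]
now including the even-index ones (unlike in Theorem \ref{symbub}, where the additional $x$-symmetry forced $\phi$ to be odd). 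Richardson's identity will be applied with the Schwarz-like function of the \emph{current} bubble. By Theorem \ref{t7}, $\Pi_{B(0)}-\Pi_{B(t)}$ is locally constant on $B(t)=B_1(t)\cup B_2(t)$, so inside $B_2(t)$
\[
h_{B_2(t)}(z) = h_{B(0)}(z) + 4\partial_z \Pi_{B_1(t)}(z) = \mu_0 + (1+\mu_1)z + (\mu_2 - 2\beta)z^2 + O(z^3),
\]
where the Taylor coefficients $\mu_k(t)$ of the correction $4\partial_z \Pi_{B_1(t)}$ at $0$ (which is holomorphic near $0$ since $B_1(t)$ stays away from $0$) tend to $0$ as $t\to t^*$.

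Matching Laurent coefficients in Richardson's identity (that $f_t^*(\zeta^{-1}) - h_{B_2(t)}(f_t(\zeta))$ extends holomorphically from the unit circle into $\mathbb{D}$ and vanishes at $\zeta=0$) then determines the $a_k$: the negative-power matching gives $a_1=A(1+\mu_1)-4\beta A a_0+O(A^3)$, $a_2=(\mu_2-2\beta)A^2+O(A^3)$, and $a_k=O(A^3)$ for $k\geq 3$, while the constant-term equation
\[
\mu_0 + \mu_1 a_0 = (2\beta - \mu_2)(2A a_1 + a_0^2) + O(A^4),
\]
combined with the coupling to $B_1(t)$ under the synchronizing strategy, forces $a_0=A+O(A^2)$. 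Parametrizing $\zeta=e^{i\theta}$ on the boundary then yields $x=A(1+2\cos\theta)+O(A^2)$ and $y=-4\beta A^2\sin\theta(1+\cos\theta)+O(A^3)$; the $x$-diameter is $4A+O(A^2)$, so the rescaling $c=1/(2A)$ sends it to $2$ and gives the limiting parametrization $\tilde x=\tfrac12+\cos\theta$, $\tilde y=-\beta\sin\theta(1+\cos\theta)$. Eliminating $\theta$ via $\sin^2\theta=(1+\cos\theta)(1-\cos\theta)$ produces the limit curve
\[
\tilde y^2 = \beta^2\bigl(\tilde x + \tfrac12\bigr)^3\bigl(\tfrac32 - \tilde x\bigr),
\]
which is the curve of part (i).

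For part (ii), this limit curve has its cusp at $\tilde x=-1/2$ (corresponding to $\theta=\pi$, i.e.\ $\zeta=-1$ where $f_t'$ vanishes), opening into the half-plane $\tilde x>-1/2$ with tip pointing in the $-x$ direction. Now the constant-term equation above with $\beta>0$ forces $\mu_0>0$ at leading order, while the point-mass approximation $\Pi_{B_1(t)}(z)\approx (S_1(t)/(2\pi))\log|z-a|$ gives $\mu_0\approx -S_1(t)/(\pi a)$; since $S_1(t)>0$, positivity of $\mu_0$ forces $a<0$. Hence $B_1(t)$ contracts at some $a<0$ on the real axis and the cusp of $B_2$ indeed points precisely toward $B_1$ at the time of contraction.

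The main obstacle lies in the step of determining $a_0$: the constant-term equation alone admits a one-parameter family of solutions parametrized by the ratio $\kappa=a_0/A$, and pinning down $\kappa=1$ (which is what makes the cusp position come out at $\tilde x=-1/2$ and the smooth endpoint at $\tilde x=3/2$) requires analyzing the coupled evolution of $(B_1(t),B_2(t))$ along the synchronizing strategy together with the precise scaling of the $\mu_k(t)$ with $A(t)$ that it forces.
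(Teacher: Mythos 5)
Your setup and matching computations track the paper's closely (the paper uses the notation $f(\zeta)=A(\zeta+\zeta^{-1})+B\zeta^2+C\zeta+D+\dots$ with $h_{B_2(t)}(z)=K(t)+z-2\beta z^2+\dots$, and your $a_0,a_1,a_2,\mu_0$ correspond to $D,\,A+C,\,B,\,K$), and your algebra for matching negative Laurent coefficients and for the constant term is consistent with the paper's. You also correctly flag that after matching you are left with a one-parameter family indexed by $a_0/A$, and you honestly admit that you cannot close this gap.

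The paper closes exactly this gap with a single geometric input that your proposal lacks: since the domain is synchronizable but \emph{not strictly} synchronizable, the synchronizing strategy must run along the boundary of the accessibility region (this is the content of the proof of Theorem \ref{t13}), and on that boundary the bubble $B_2(t)$ carries a \emph{persistent cusp}. By the local normal form $\Pi=\tfrac{y^2}{2}+\tfrac{\beta}{3}x^3+\dots$ with $\beta>0$, the cusp lies in the ``saddle'' direction of the degenerate critical point, i.e.\ the negative $x$ direction, so by the $y$-symmetry of the parametrization it sits at $\zeta=-1$. Imposing $f_t'(-1)=0$ gives $2B-C=0$ modulo negligible terms, and combined with $B=-2\beta A^2$, $C=-4\beta AD$ this forces $D=A$ (your $\kappa=1$). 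Without this cusp condition, the matching equations alone genuinely underdetermine $a_0$, as you observed; the vague suggestion to ``analyze the coupled evolution of $(B_1,B_2)$ along the synchronizing strategy'' does not by itself produce the missing equation. Once $D=A$ is fixed, your parametrization, rescaling, and the derivation of the limit curve coincide with the paper's; your argument for (ii) (positivity of the constant term $\mu_0=K$, then the point-mass approximation for $B_1$ giving $a<0$) is also the paper's. Incidentally, you wrote the final equation with $(3/2-\tilde x)$ rather than $(\tilde x-3/2)$; your sign is the one that makes $\tilde y^2\geq 0$ on the range $\tilde x\in[-1/2,3/2]$ of the parametrization, so the factor $(x-3/2)$ in the statement is a sign typo.

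So: correct framework, correct computations where they are carried out, but a real missing step --- the persistent-cusp condition $f_t'(-1)=0$ forced by non-strict synchronizability --- without which the proof does not go through.
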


\begin{proof}
The method of proof is the same as the one we used for the asymptotic
analysis of a bubble contracting to a degenerate minimum. Namely,
we have 
$$
h_{B_2(t)}(z)=K(t)+z-2\beta z^2+...,
$$
where $...$ are negligible terms as $t\to t^*$. 
(Note that $h_{B_2(t)}(z)$ depends on $t$ because of the presence
of the second bubble $B_1(t)$). 
Thus, the conformal map of the unit disk into 
the outside of $B_1(t)$ which maps $0$ to $\infty$ looks like 
$$
f(\zeta)=A(\zeta+\zeta^{-1})+B\zeta^2+C\zeta+D+...
$$
Since $f^*(\zeta^{-1})-h_D(f(\zeta))$ is holomorphic and vanishes
at infinity, we get 
$$
B=-2\beta A^2, C=-4\beta AD,
$$
modulo negligible terms. Also, we have a cusp in the ``saddle'' direction of the
singularity, i.e. in our case in the negative direction. 
Thus, $f_t'(-1)=0$, which gives 
$2B-C=0$ modulo negligible terms. This means that modulo negligible terms
we have 
$$
D=A,\ C=-4\beta A^2.
$$
This implies that 
$$
x={\rm Re}f=2A(cos\theta+\frac{1}{2})+O(A^2), 
$$
and 
$$
y=-2\beta A^2(\sin 2\theta+2\sin\theta)+O(A^3). 
$$
Thus after the rescaling $x\to x/2A$, $y=y/4A^2$ and sending 
$A$ to zero we get the limiting curve 
$$
y^2=\beta^2\left(x+\frac{1}{2}\right)^3\left(x-\frac{3}{2}\right),
$$
and part (i) of the theorem follows. 
We also get $K=6\beta A^2+O(A^3)>0$ for small $A$, 
which implies that the second bubble disappears at some point
$a<0$, hence (ii).
\end{proof}

We see from the proof of this theorem that 
when $A$ goes to zero, the area of $B_1$ goes down as $c_1A^2$ and
the area of $B_2$ as $c_2A^3$. This shows that the trajectory
corresponding to our strategy (i.e. the upper boundary of the
accessibility region) behaves near the origin as a semicubic 
parabola $Y=c X^{3/2}$ (Fig.9(b)). 

\subsection{A sufficient condition of breakup} 
Corollary \ref{co} gives us the following sufficient condition of
breakup. 

\begin{corollary}\label{co1}
Suppose $B^s$, $s\in [0,1]$, is a smooth family of simply connected domains, 
such that $B^0$ is symmetric (centrally or axially), 
and the potential of $B^0$ admits a global minimum point in $B^0$
which is not fixed by the symmetry. Assume that the analytic
continuation of the potential of $B^s$ for any $s$ 
does not have degenerate critical points. Assume also that it is
known that $B^s$ undergo at most one topological transformation
under contraction for all $s$. Then the domains $B^s$ 
break up under contraction for all $s$.
\end{corollary}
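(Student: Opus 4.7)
The plan is a connectedness argument on the parameter interval $[0,1]$. Let $S = \{s \in [0,1] : B^s \text{ breaks up under contraction}\}$. Since $B^0$ is symmetric and $\Pi_{B^0}$ attains a global minimum at a point not fixed by the symmetry, the symmetric image is another global minimum, so Theorem \ref{t5} gives $0 \in S$. It suffices to prove that $S$ is both open and closed in $[0,1]$; by connectedness this forces $S = [0,1]$.

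For openness, fix $s_0 \in S$. By the at-most-one-topological-transformation hypothesis, $B^{s_0}$ breaks up at a unique time $\tau < t^*(s_0)$, at which the boundary pinches. The hypothesis that $\Pi_{B^s}$ has no degenerate critical points (in its analytic continuation) for any $s$ in the family precludes $s_0$ from lying on the break-up/non-break-up boundary in the sense of Section 9, where such boundary domains develop an instantaneous $5/2$-cusp and a corresponding degenerate critical point of the potential. Thus the pinching is transverse in $s$, and for $s$ close to $s_0$ the bubble $B^s$ still pinches near time $\tau$ and breaks up.

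For closedness, let $s_n \to s$ with $s_n \in S$, and suppose for contradiction $s \notin S$, so that $B^s$ contracts continuously to a unique point $P$, the global minimum of $\Pi_{B^s}$. For each $n$, $B^{s_n}$ breaks up at some time $\tau_n$ into two components that contract to distinct points $P_1^{s_n}$ (partial) and $P_2^{s_n}$ (complete). By Theorem \ref{t4}, $P_2^{s_n}$ is the global minimum of $\Pi_{B^{s_n}}$, so $P_2^{s_n} \to P$ by continuity of the potential in the initial domain. Next I would argue $\tau_n \to t^*(s)$: otherwise along a subsequence $\tau_n \to \tau^* < t^*(s)$, and by continuity of the Hele--Shaw flow $B^s$ would itself pinch at $\tau^*$, contradicting $s \notin S$. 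Consequently both pieces of $B^{s_n}$ shrink only at the very end of the contraction, and $P_1^{s_n} \to P$ as well. Thus two distinct critical points collide at $P$ in the limit; but a one-parameter family of non-degenerate critical points cannot undergo a collision, since by the implicit function theorem non-degenerate critical points are locally isolated and persist under perturbation. Hence $P$ must be a degenerate critical point of the analytic continuation of $\Pi_{B^s}$, contradicting the hypothesis. Therefore $s \in S$.

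The main obstacle is making the closedness step precise. The delicate point is that $P_1^{s_n}$ is a priori a critical point of an evolving potential (at time $\tau_n$) rather than of $\Pi_{B^{s_n}}$ itself, so the collision of $P_1^{s_n}$ and $P_2^{s_n}$ does not immediately produce a degenerate critical point of $\Pi_{B^s}$. Bridging this gap requires either invoking Theorem \ref{t10} together with asymptotics near a pre-breakup neck (so that in the regime $\tau_n \to t^*(s_n)$ the free-contraction strategy approaches a synchronizing one, whence Corollary \ref{co} yields the degeneracy of the Hessian at $P$), or directly identifying domains on the boundary of $S$ with those admitting a $5/2$-cusp in the classification of Section 9, whose potential has a degenerate critical point by construction.
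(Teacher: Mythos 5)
Your high-level connectedness idea is in the spirit of the paper's argument, but you carry it out on the wrong set, and the gap you flag at the end is real and is not bridged by either route you sketch. The hypothesis about degenerate critical points controls the boundary between \emph{synchronizable} and \emph{non-synchronizable} domains (Section 8, Corollary~\ref{co}), not the boundary between rupturing and non-rupturing domains (Section 9). You conflate the two: you assert that the Section~9 boundary domains ``develop an instantaneous $5/2$-cusp and a corresponding degenerate critical point of the potential,'' but Theorem~\ref{t15} says nothing about the potential --- the $5/2$-cusp is a geometric feature of the evolving boundary $\partial B(t)$ at an intermediate time, and the paper contains no result linking it to a degenerate critical point of $\Pi_{B(0)}$. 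So the openness step does not hold as stated, and the closedness step fails for the same reason plus the issue you identify yourself: by Theorem~\ref{t4}, the partial-contraction point $P_1^{s_n}$ is a critical point of $\Pi_{B^{s_n}(0)\setminus B^{s_n}(\tau_n)}$, not of $\Pi_{B^{s_n}}$, so the collision of $P_1^{s_n}$ with $P_2^{s_n}$ need not produce a degenerate critical point of $\Pi_{B^s}$. Your claim that $\tau_n\to t^*(s)$ is also not forced: if $\tau_n\to\tau^*<t^*(s)$ the limiting domain $B^s(\tau^*)$ can develop a $5/2$-cusp that instantaneously heals without breakup --- precisely the Section~9 scenario --- and this carries no information about critical points of $\Pi_{B^s}$.

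The paper routes the entire argument through synchronizability, and this is what makes the degenerate-critical-point hypothesis bite. Symmetry of $B^0$ is used twice: via Theorem~\ref{t5} to get breakup, and again to observe that for a symmetric domain \emph{free} contraction is itself a synchronizing strategy (the two symmetric pieces contract simultaneously), so $B^0$ is synchronizable. If some $B^s$ failed to break up it would be non-synchronizable, so along the family there would be a parameter $\sigma$ at which $B^\sigma$ is synchronizable but not strictly synchronizable (strict synchronizability being the open condition), and Corollary~\ref{co} would then force a degenerate critical point of $\Pi_{B^\sigma}$, contradicting the hypothesis. If you wish to keep your connectedness framing, replace your set $S$ by the set of $s$ for which $B^s$ is strictly synchronizable: openness is then by definition, closedness is exactly Corollary~\ref{co} plus the no-degenerate-critical-point hypothesis, and strict synchronizability implies breakup.
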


\begin{proof}
By Theorem \ref{t5}, $B^0$ is synchronizable (and 
the synchronizing strategy is free extraction). 
If some $B^s$ does not break up, then it is not synchronizable, 
so for some $\sigma<s$
$B^\sigma$ is synchronizable, but not strictly 
synchronizable. By Corollary \ref{co}, the analytic continuation
of $\Pi_{B^\sigma}$ has a degenerate critical point. 
Contradiction. 
\end{proof}

\section{The boundary between rupturing and non-rupturing
domains} 

Having multiple local minima of the gravity potential is not a
necessary condition for the breakup of a bubble. Indeed, it is
obvious that a domain in Fig.11 breaks up in the process of
contraction, although its potential has a unique critical point
(so this domain is not synchronizable). 

In this connection, it is
interesting to study domains which lie on the boundary between
rupturing and non-rupturing domains. 

\begin{theorem}\label{t15}
Suppose that $B$ is a generic domain, which is on the boundary
between rupturing and non-rupturing domains. Then $B$ does not
break up under contraction, but at some time $t$ the domain $B(t)$
has a cusp on the boundary, locally equivalent to $y=x^{5/2}$. 
This singularity disappears immediately under further
contraction. 
\end{theorem}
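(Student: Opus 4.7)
The plan is to study the time-dependent conformal map $f_{s,t}\colon\mathbb{D}\to\mathbb{C}\setminus B^s(t)$ (sending $0$ to $\infty$) for a smooth one-parameter family $B^s$ of initial domains with $B^0=B$, transversally crossing the rupture--non-rupture boundary at $s=0$: $B^s$ ruptures under contraction for $s>0$ but not for $s<0$. In the non-rupturing regime $f_{s,t}$ remains univalent on $\bar{\mathbb{D}}$; in the rupturing regime, at the first rupture time $T(s)$ two distinct boundary preimages $\zeta_1(s)\neq\zeta_2(s)\in\partial\mathbb{D}$ satisfy $f_{s,T(s)}(\zeta_1(s))=f_{s,T(s)}(\zeta_2(s))$.

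As $s\to 0^+$, continuity forces $\zeta_1(s),\zeta_2(s)\to\zeta_0\in\partial\mathbb{D}$ and $T(s)\to t_0$. If $f_{0,t_0}'(\zeta_0)\neq 0$, then $f_{0,t_0}$ would be locally injective near $\zeta_0$, contradicting the merging; hence $f_{0,t_0}'(\zeta_0)=0$, and the boundary of $B(t_0)$ acquires a cusp at $f_{0,t_0}(\zeta_0)$. Expanding
$$
f_{0,t_0}(\zeta)-f_{0,t_0}(\zeta_0)=a_2(\zeta-\zeta_0)^2+a_3(\zeta-\zeta_0)^3+a_4(\zeta-\zeta_0)^4+\cdots
$$
with $a_2\neq 0$ generically, and parametrizing $\zeta=\zeta_0 e^{is}$, one finds
$$
f_{0,t_0}(\zeta_0 e^{is})-f_{0,t_0}(\zeta_0)=c_2 s^2+c_3 s^3+c_4 s^4+c_5 s^5+\cdots,
$$
with $c_2=-a_2\zeta_0^2$ and $c_3=-i\zeta_0^2(a_2+a_3\zeta_0)$. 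Rotating coordinates in the image plane so that $c_2\in\mathbb{R}_{>0}$ and absorbing even-$s$-power contributions (which become integer powers of $x$) by an ambient analytic diffeomorphism, the local cusp is $y=x^{3/2}$ when $\text{Im}\,c_3\neq 0$ and, to leading order, $y=x^{5/2}$ when $\text{Im}\,c_3=0$ while $\text{Im}\,c_5\neq 0$.

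The relation $\text{Im}\,c_3=0$ is the analytic translation of being on the rupture boundary. Writing $\zeta_i(s)=\zeta_0 e^{i\eta_i(s)}$ with $\eta_i(s)\to 0$, a scale analysis of the coincidence equation $f_{s,T(s)}(\zeta_1)=f_{s,T(s)}(\zeta_2)$ yields $\eta_i(s)\sim s^{1/2}$ with $\eta_1(s)+\eta_2(s)=o(s^{1/2})$ (symmetric collision at leading order). The next-order balance reduces to a real algebraic condition whose solvability is equivalent to the existence of rupture for $s>0$ small; the codimension-one boundary of rupture where this condition degenerates, upon taking $s\to 0^+$, is exactly $\text{Im}\,c_3=0$. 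For generic $B$ on this boundary one has $\text{Im}\,c_5\neq 0$, giving a pure $y=x^{5/2}$ cusp.

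Instantaneity follows because $B$ lies on the rupture boundary (not its interior): $f_{0,t}$ is univalent on $\bar{\mathbb{D}}$ for $t\neq t_0$ near $t_0$, so the boundary of $B(t)$ is smooth on either side of $t_0$. The main obstacle is the derivation in the preceding paragraph, identifying $\text{Im}\,c_3=0$ as the exact analytic invariant of transitionality; this requires a careful two-scale expansion of the coincidence equation (in $s$ and $\eta_i$) together with the use of Richardson's theorem to relate the $s$-derivative of the conformal map's coefficients to the Hele-Shaw evolution of $B^s(t)$, in the spirit of the asymptotic analyses of Sections~2 and~7.
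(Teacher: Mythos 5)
Your approach is genuinely different from the paper's. The paper's (sketch) proof stays in the image plane: it considers the family of boundary curves $\Gamma(s)=\partial B^s(\tau(s))$ at the moment of incipient rupture, observes that for $s>\sigma$ these curves have a simple self-tangency, and argues that the generic degeneration of such self-tangent closed curves to a simple one has the local normal form $y^2=x^4(x+\varepsilon)$, yielding a $5/2$-cusp at $\varepsilon=0$. You instead work in the preimage disk, tracking the merging of the two colliding boundary preimages $\zeta_1(s),\zeta_2(s)$ and reading the cusp type off the Taylor coefficients of the conformal map. Your framework is sound, and the conclusion you aim for (that being on the rupture boundary forces $\mathrm{Im}\,c_3=0$) is correct.

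However, the justification you give for that crucial step has a real gap. You attribute $\mathrm{Im}\,c_3=0$ to a ``scale analysis of the coincidence equation.'' Writing $\zeta_i=\zeta_0 e^{i\theta_i}$ with $\theta_{1,2}=\delta\mp\eta$ and
$f_{s,T(s)}(\zeta_0 e^{i\theta})-f_{s,T(s)}(\zeta_0)=a_1\theta+a_2\theta^2+a_3\theta^3+\cdots$
(so that $a_1(0,t_0)=0$, $a_2(0,t_0)=c_2$, $a_3(0,t_0)=c_3$), the coincidence $f(\zeta_1)=f(\zeta_2)$ reduces after dividing by $\theta_1-\theta_2$ to
$$
a_1+2a_2\delta+a_3(3\delta^2-\eta^2)+\cdots=0.
$$
This is one complex equation in the two real unknowns $(\delta,\eta^2)$, and for generic $a_2,a_3$ it is solvable for small $a_1$ regardless of $\mathrm{Im}(a_3/a_2)$. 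So the coincidence equation alone does \emph{not} single out the $5/2$ case. The missing input is the \emph{tangency} of the first self-intersection: since $\partial B^s(t)$ is simple for $t<T(s)$ and transversal crossings are structurally stable, the collision at $t=T(s)$ must be tangential, $f'(\zeta_1)\parallel f'(\zeta_2)$. Eliminating $a_1$ via the coincidence relation gives
$$
\frac{f'(\zeta_1)}{f'(\zeta_2)}=-1+4\frac{a_3}{a_2}\,\eta+O(\eta^2),
$$
and requiring this to be real forces $\mathrm{Im}(a_3/a_2)=O(\eta)$, hence $\mathrm{Im}(c_3/c_2)=0$ upon letting $s\to 0^+$. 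Without invoking tangency, your derivation does not close; with it, the argument does work and in fact gives a cleaner analytic route to the same conclusion as the paper's geometric model.

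Two smaller points. First, you should state explicitly that the $3/2$ alternative would make the boundary curve a $y\sim\pm x^{3/2}$ cusp, which in the suction problem is a blow-up singularity; that $B$ is on the rupture boundary (so the classical contraction of $B(0)$ continues all the way to $t^*$) already rules this out, and your tangency computation explains \emph{why} a rupture sequence cannot limit onto such a cusp. Second, your instantaneity argument is essentially an assertion; what makes it correct is precisely this non-breakdown property of the limiting domain (consistent with Howison's classification, cited in the paper's remark, that $(4n+1)/2$-cusps are instantaneous while $(4n-1)/2$-cusps are terminal), and it would be worth making that dependence explicit rather than appealing to univalence of $f_{0,t}$ for $t\neq t_0$, which is part of what needs proving.
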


In other words, on the boundary between the sets of 
rupturing and non-rupturing domains in the space of
($C^k$-smooth) domains, a dense open set is formed by 
domains that have the property stated in the theorem. 

\begin{proof} (sketch) Consider a smooth family of simply
connected bounded domain of generic position, $B^s$, $s\in
[0,1]$. Assume that for $s\le \sigma$ the domain $B^s$ does not
break up, while for $s>\sigma$ it does. 
Let $\tau(s)$ be the time of breakup of the domain $B^s$ for
$s>\sigma$. Let $\tau=\lim_{s\to \sigma}\tau(s)$.
One can show that in the situation of
general position this limit exists and is not equal to
zero. Consider the family of curves
$\Gamma(s)$, $s\in (\sigma,1]$, which are obtained from $B^s$
by contraction during the time $\tau(s)$ (where
$\tau(\sigma):=\tau)$). 

The curves $\Gamma(s)$ for $s>\sigma$ have a simple
self-tangency at some point. It is easy to see that typical
degenerations of such curves into simple closed curves 
have the structure described in the theorem: for $s=\sigma$ at
the point of disappearance of the loop there forms a cusp of
degree $5/2$. Namely, a typical such family is 
$$
y^2=x^4(x+\varepsilon), \varepsilon=\varepsilon(s),
$$ 
where $\varepsilon(\sigma)=0$, and $\varepsilon(s)>0$ for $s>\sigma$. 
\end{proof}

\begin{remark} 
The appearance of the singularity in the process of contraction 
is at first sight a strange phenomenon, as the contraction
problem has good properties of correctness and stability.
Nevertheless, solutions with such type of instantaneous 
singularities do exist. They were first discovered by Howison in
\cite{Ho}. More precisely, he showed that in the process of
contraction there can appear instantaneous cusps of degrees
$(4n+1)/2$, $n\ge 1$, while cusps of degree $(4n-1)/2$ cannot
appear. For $n=1$, these are the instantaneous cusps of degree
$5/2$ that we have just considered. Solutions with these
properties form a subset of codimension 1 in the space of all
solutions. 
\end{remark}

If we restrict ourselves to polynomial domains of degree $\le n$,
then the boundary between rupturing and non-rupturing domains, as
follows from Theorem \ref{t15}, is a piece of 
an algebraic surface in the space of coefficients. For small
degrees $n$ this equation is not difficult to write down. If two
domains can be connected by a curve not intersecting this
surface, then either they both break up under contraction, or
they both don't. 

\section{Corrections to \cite{EE1,EV}}

We use the opportunity to correct some errors in \cite{EE1,EV}. 

1. The formula for the gravity potential of the ellipse 
$\frac{x^2}{a^2}+\frac{y^2}{b^2}=1$
given in \cite{EE1,EV} (\cite{EE1},p.517, and \cite{EV}, (4.14))
is incorrect. The correct formula is 
$$
\Pi=\frac{1}{2}(\frac{b}{a+b}x^2+\frac{a}{a+b}y^2)+C(a,b).
$$
(i.e., $a$ and $b$ need to be switched). The same correction
needs to be made in formulas (24),(25) of \cite{EE1}, and the 
two sentences after formula (25). Hence the correct condition 
of division of the bubble for $a=2b$ is $c<\sqrt{3}b$. The same 
corrections should be made in \cite{EV} in the example on p.52,
and in the answer to problem 4 on p.68.  

2. As a result of 1, Theorem 4.10 in \cite{EV} should say that 
the lengths of the half-axes of the limiting ellipse are {\it
inverse} proportional to the eigenvalues of the Hessian matrix. 
The same correction is to be made on p.535 of \cite{EE1}
(the power $-2$ of the Hessian matrix should be replaced with
$+2$).

3. The left hand side of formula (19) should read
$2\frac{\partial^2\Pi_B}{\partial x^2}(0,0)$ 
(the factor of 2 is missing). 
The same correction should be made in the 
first formula on p.52 of \cite{EV}. 
The right hand side of the formula in Theorem 6.4 in \cite{EV} 
should be $\pi$, not $\pi/2$. In the first formula on 
p.52 of \cite{EV}, the factor $2/\pi$ should be $1/\pi$. 

\begin{center}
\includegraphics*[scale=.5]{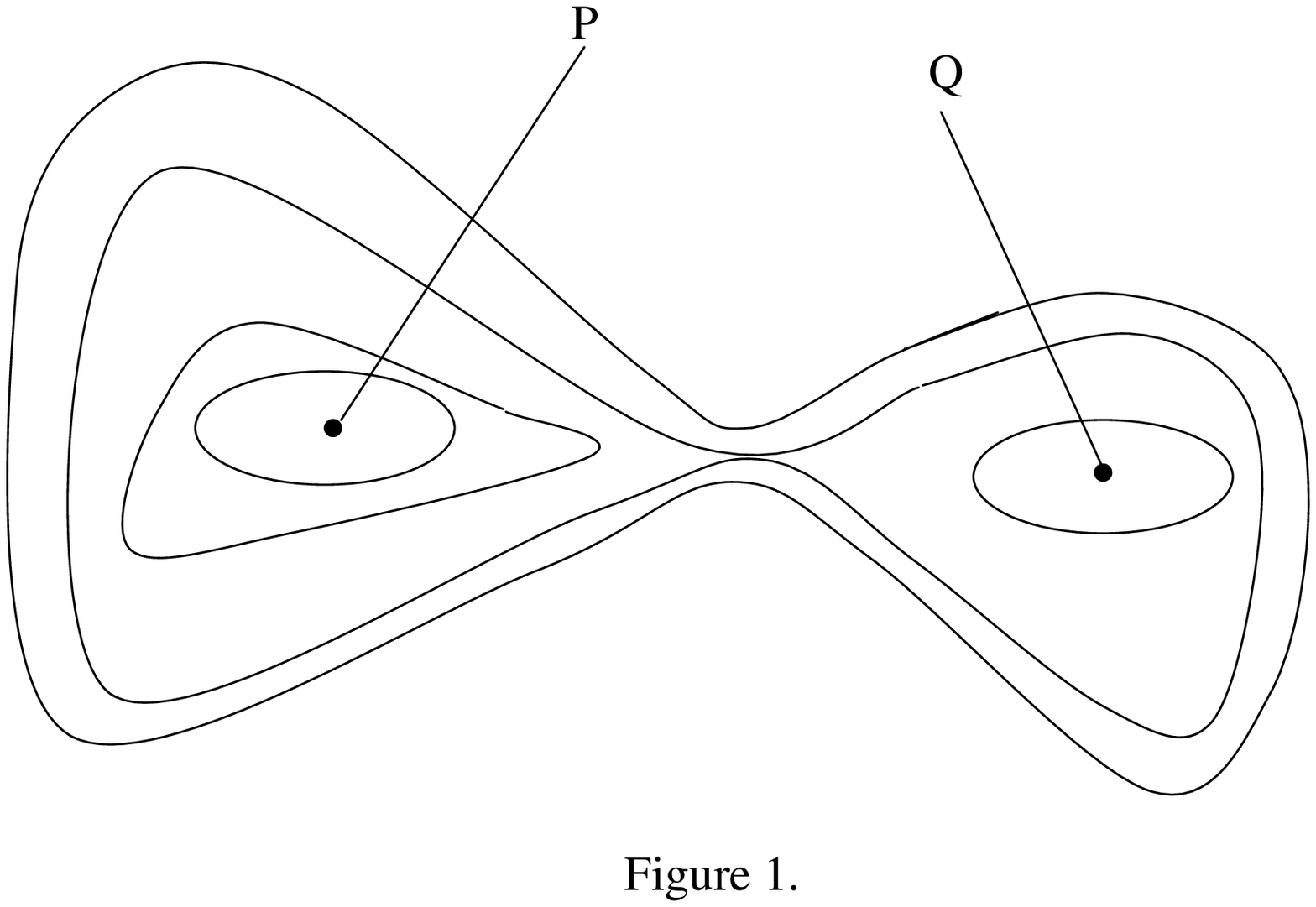}
\includegraphics*[scale=.4]{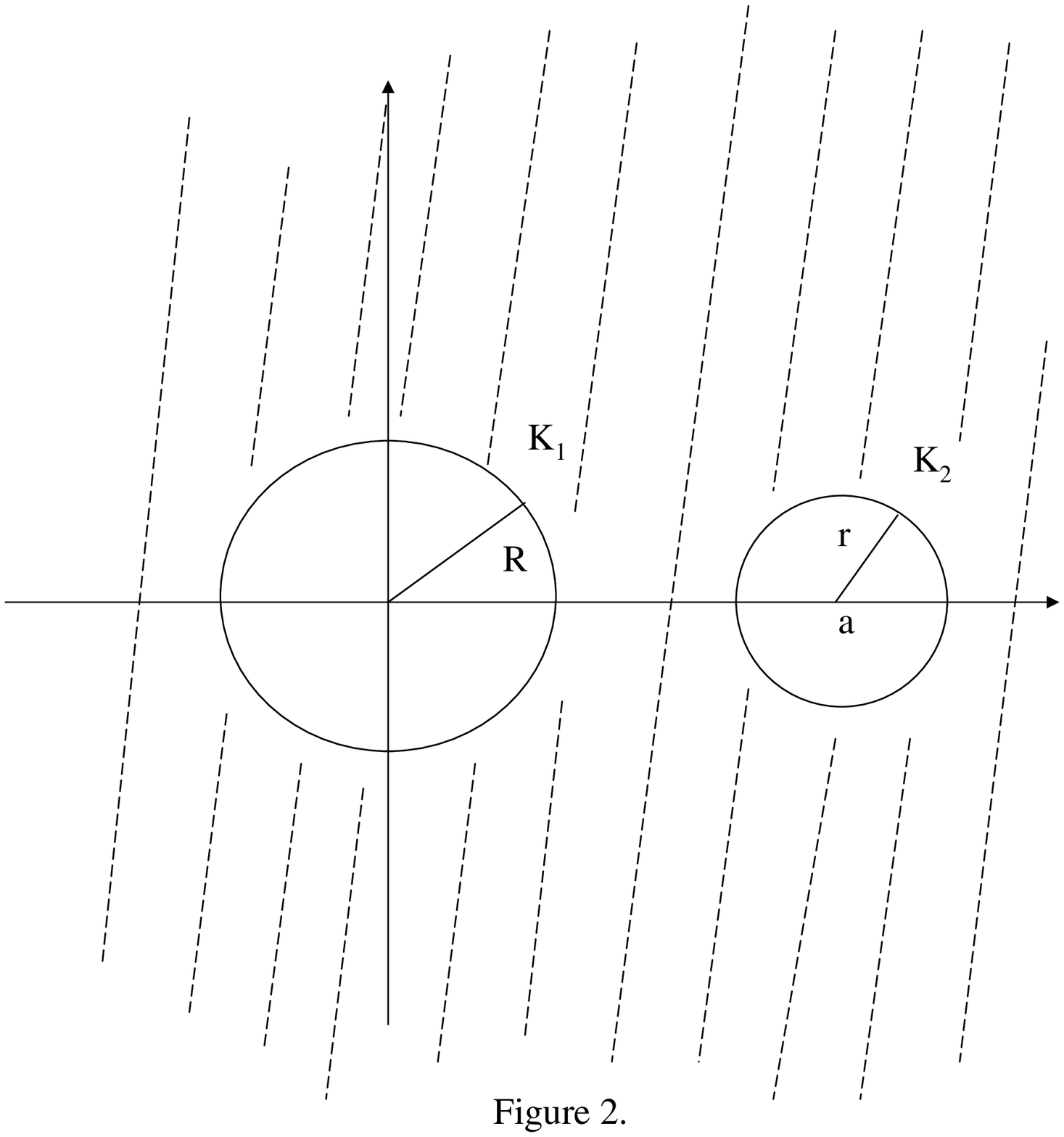}
\newpage
\includegraphics*[scale=.7]{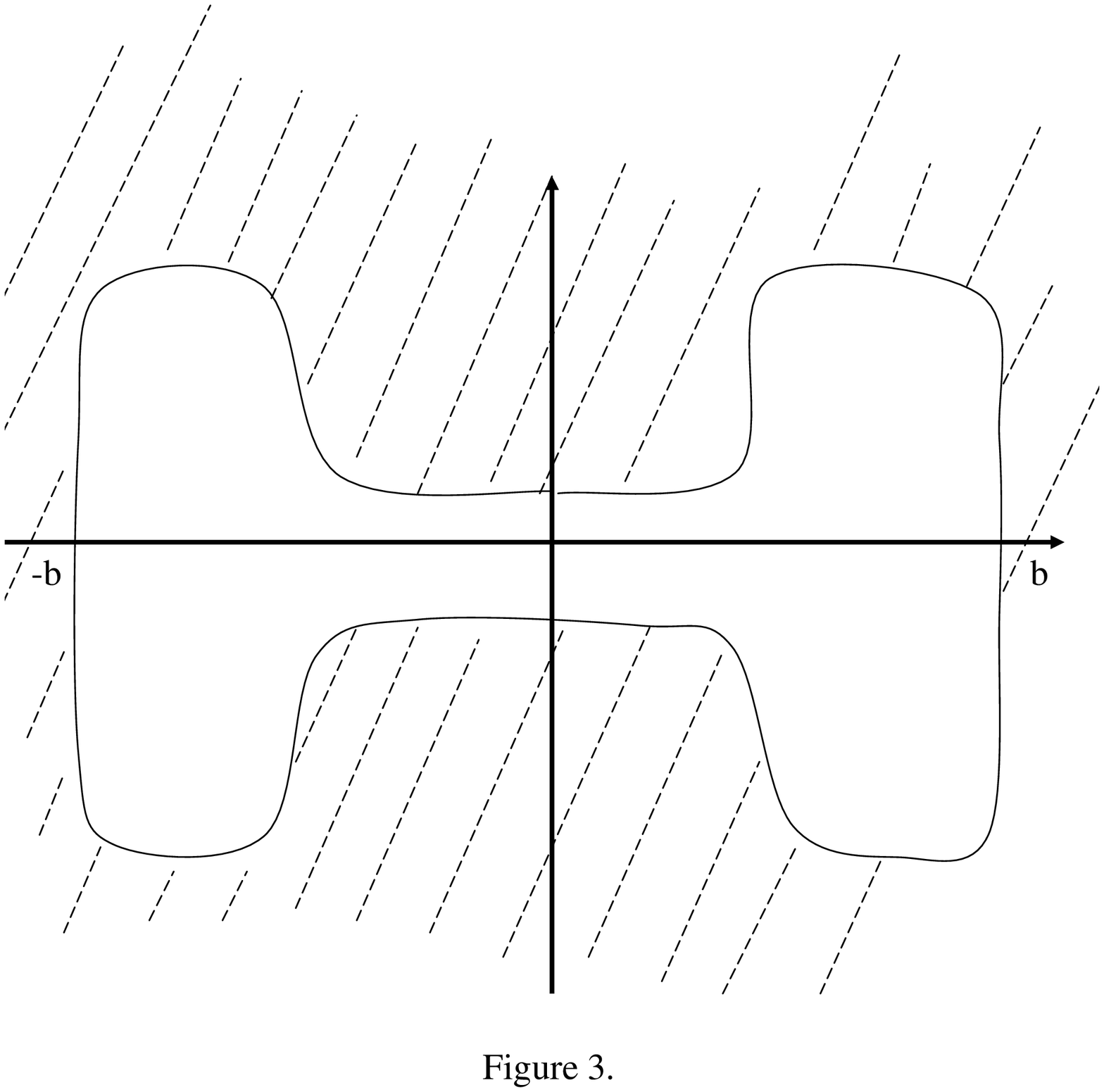}
\newpage
\includegraphics*[scale=.8]{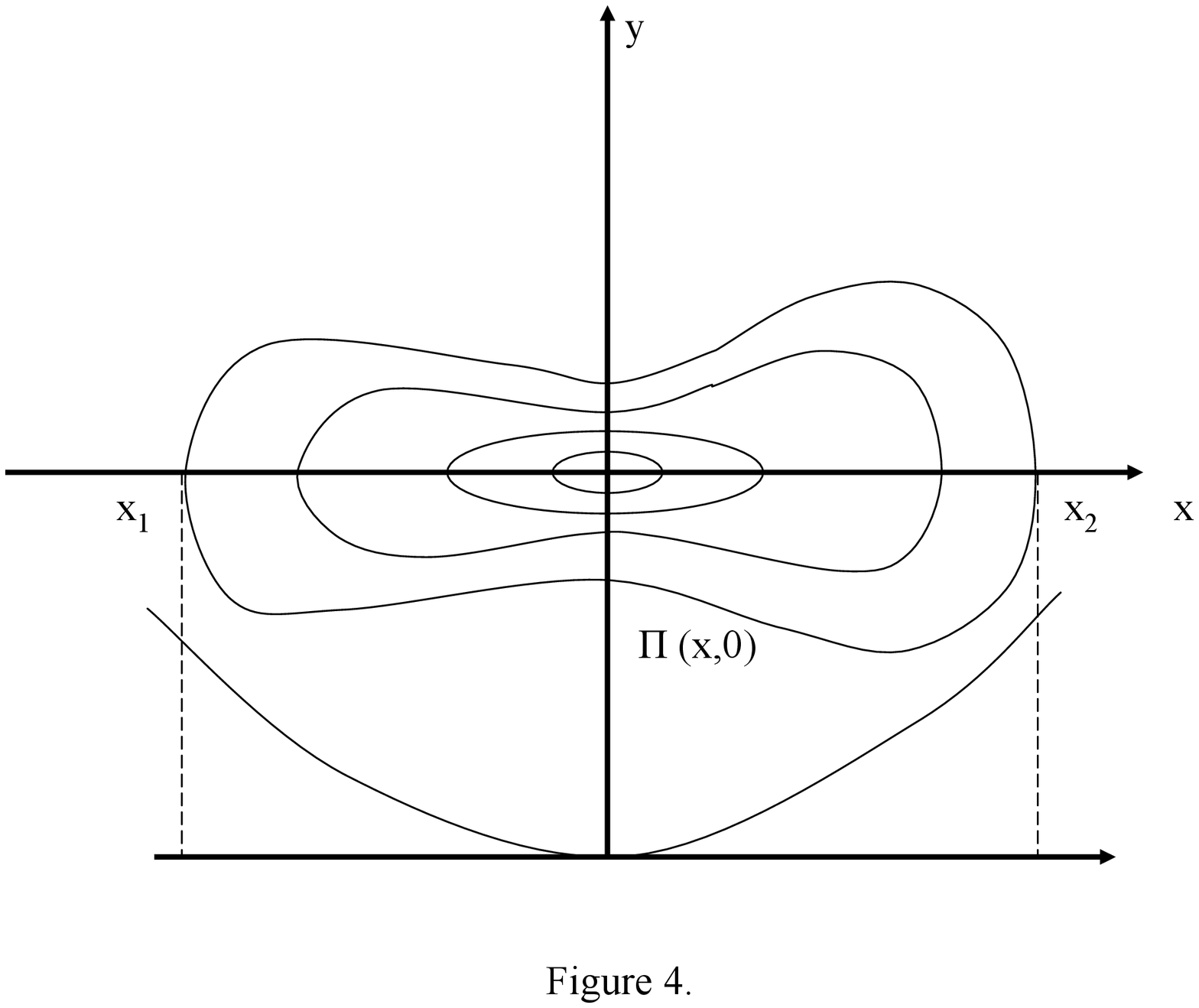}
\newpage
\includegraphics*[scale=.9]{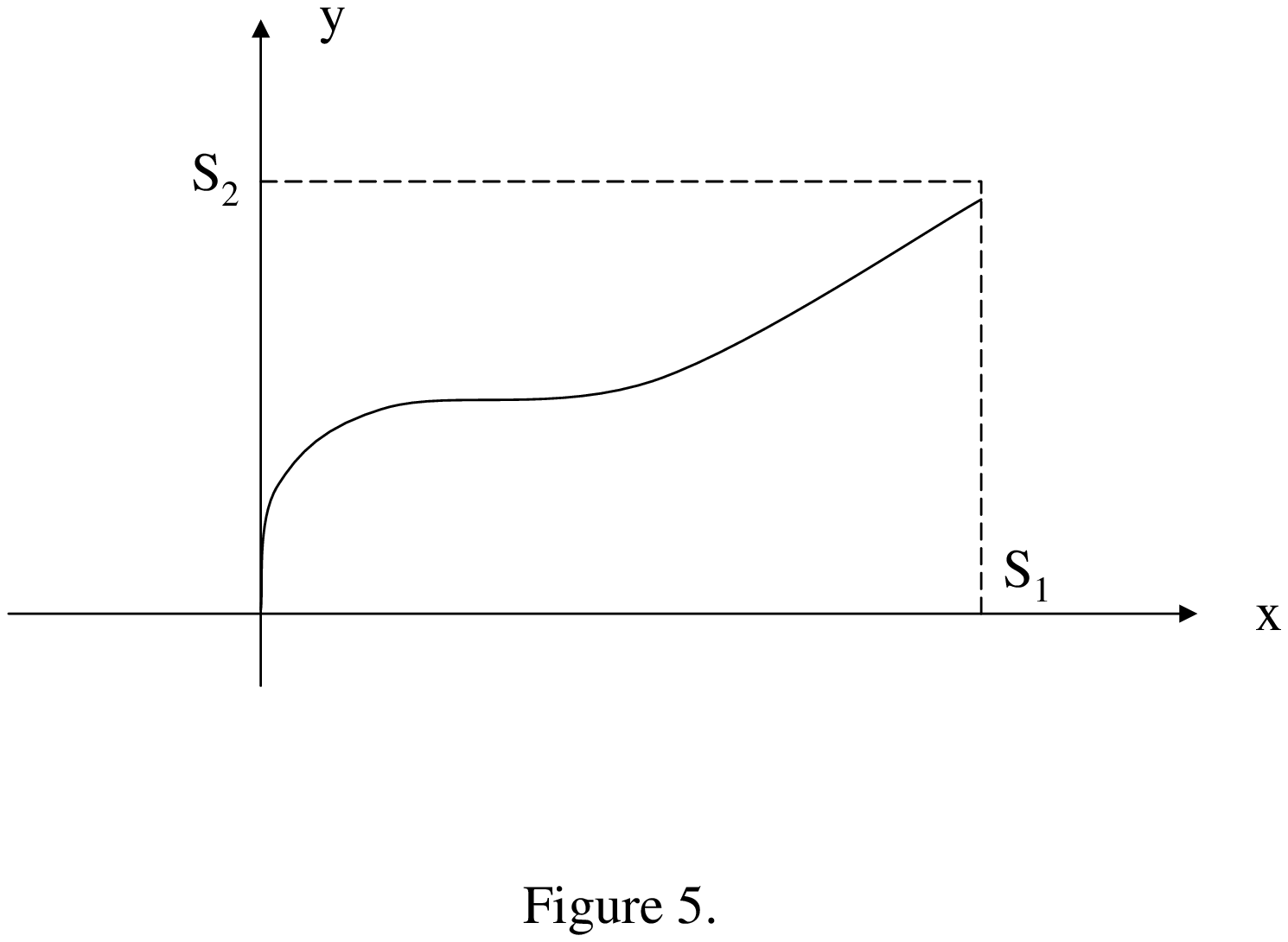}
\newpage
\includegraphics*[scale=.75]{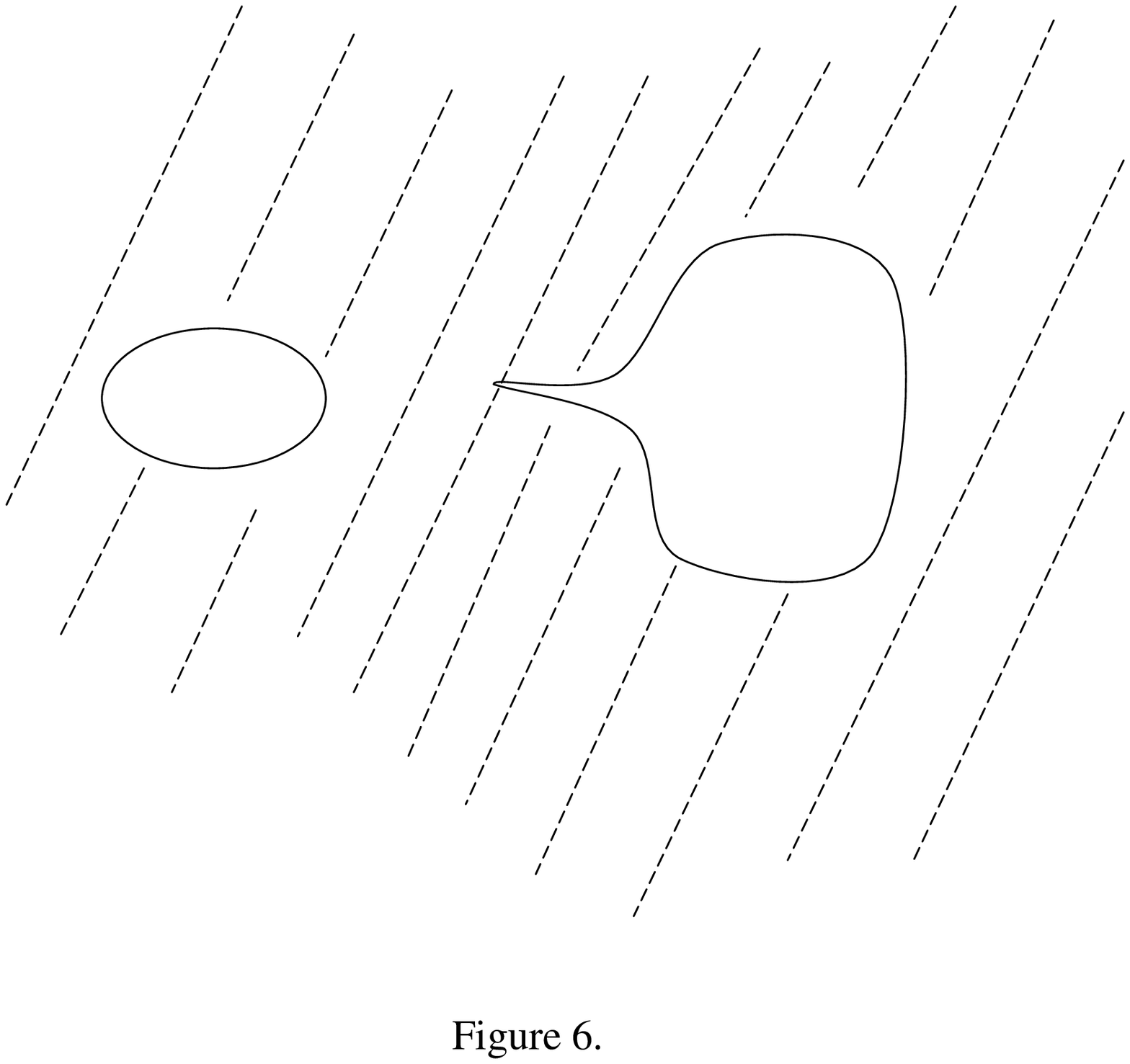}
\newpage
\includegraphics*[scale=.9]{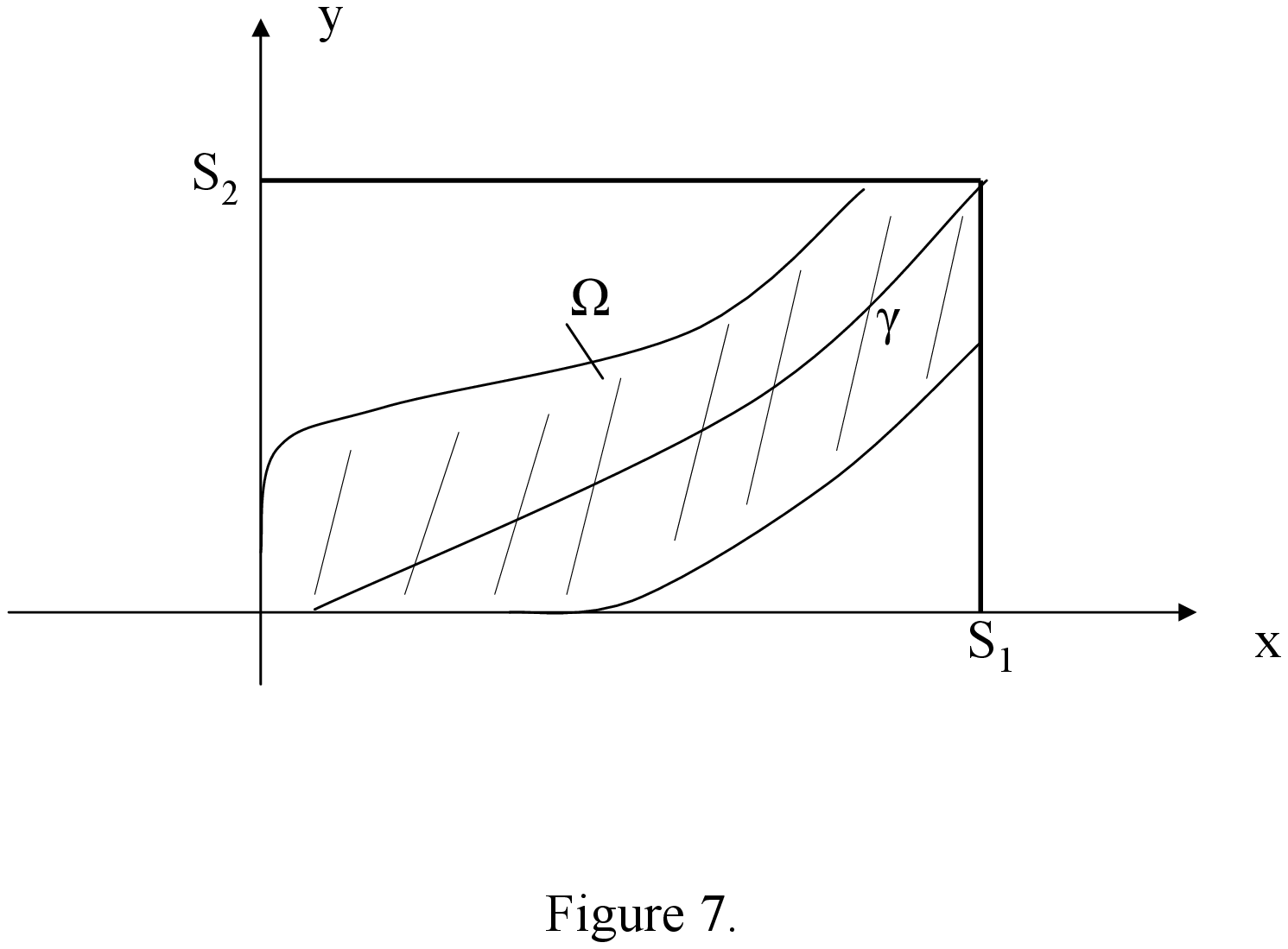}
\newpage
\includegraphics*[scale=.9]{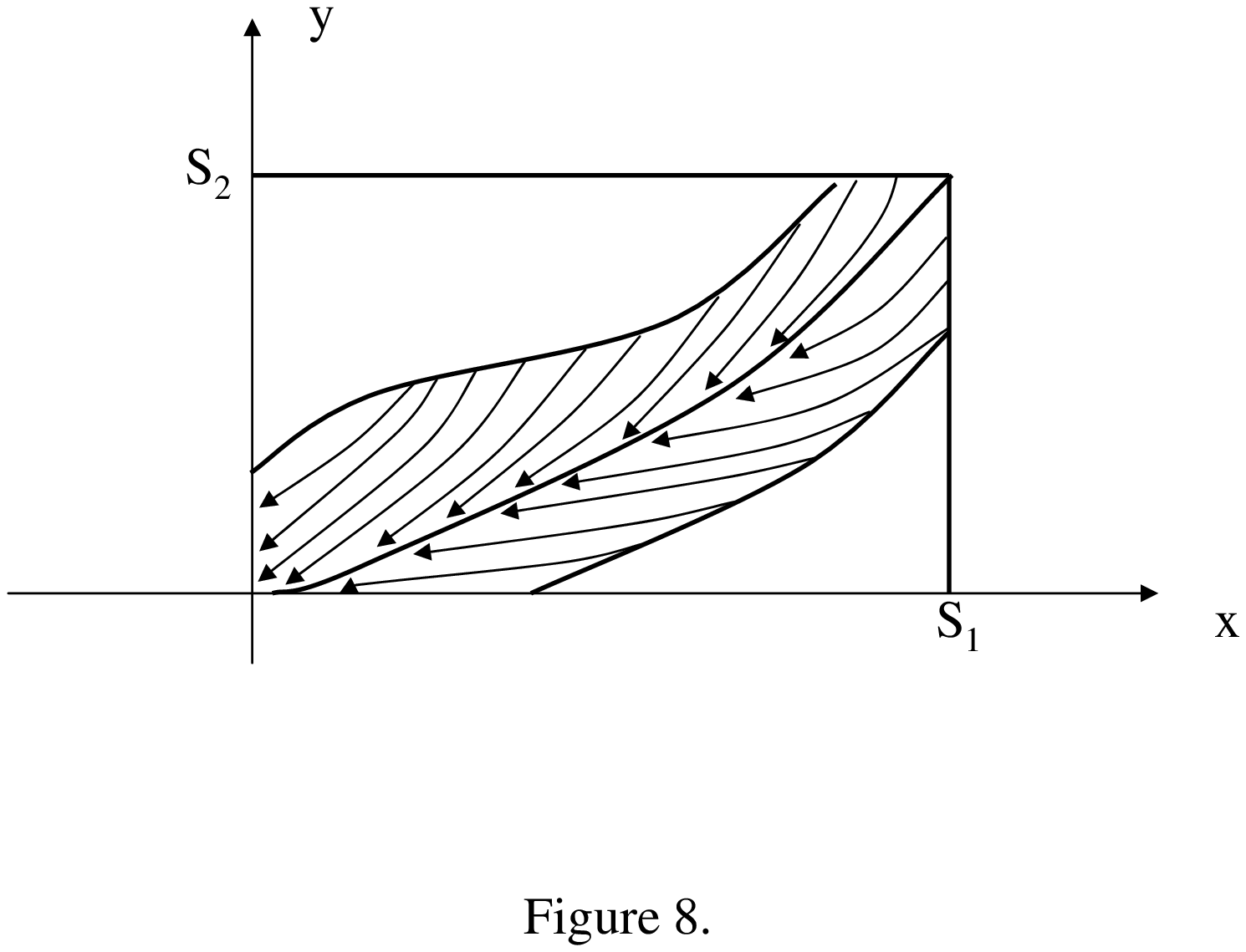}
\newpage
\includegraphics*[scale=0.9]{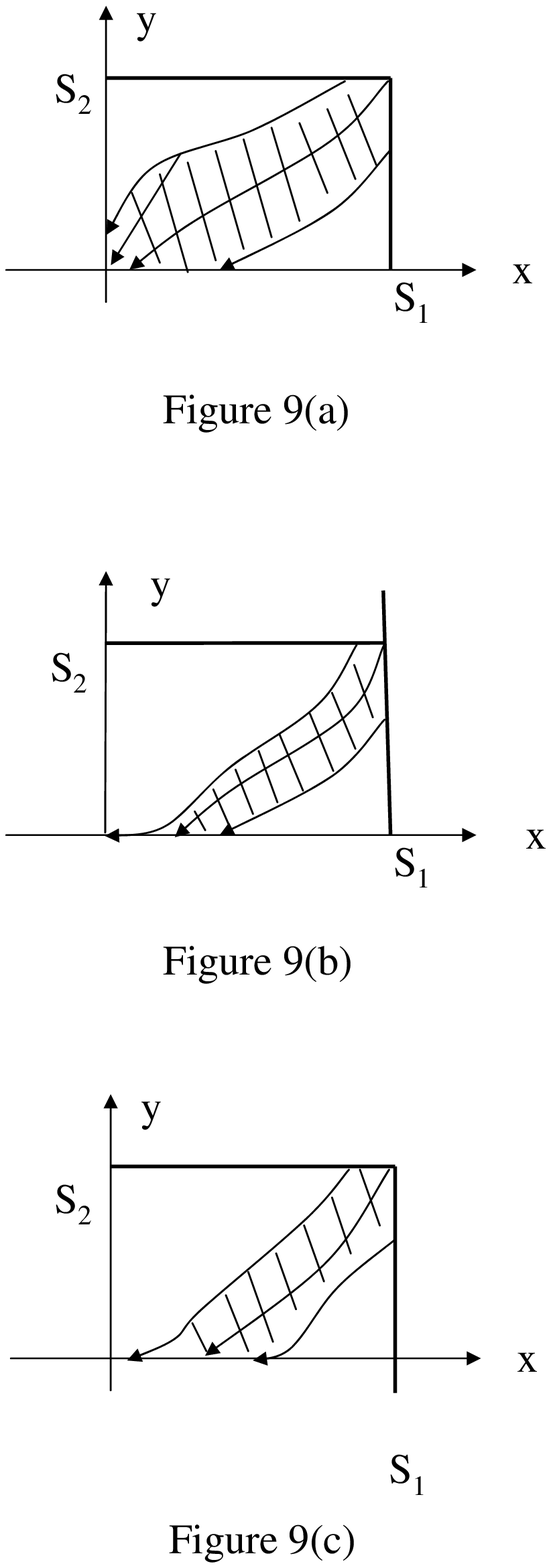}
\newpage
\includegraphics*[scale=.7]{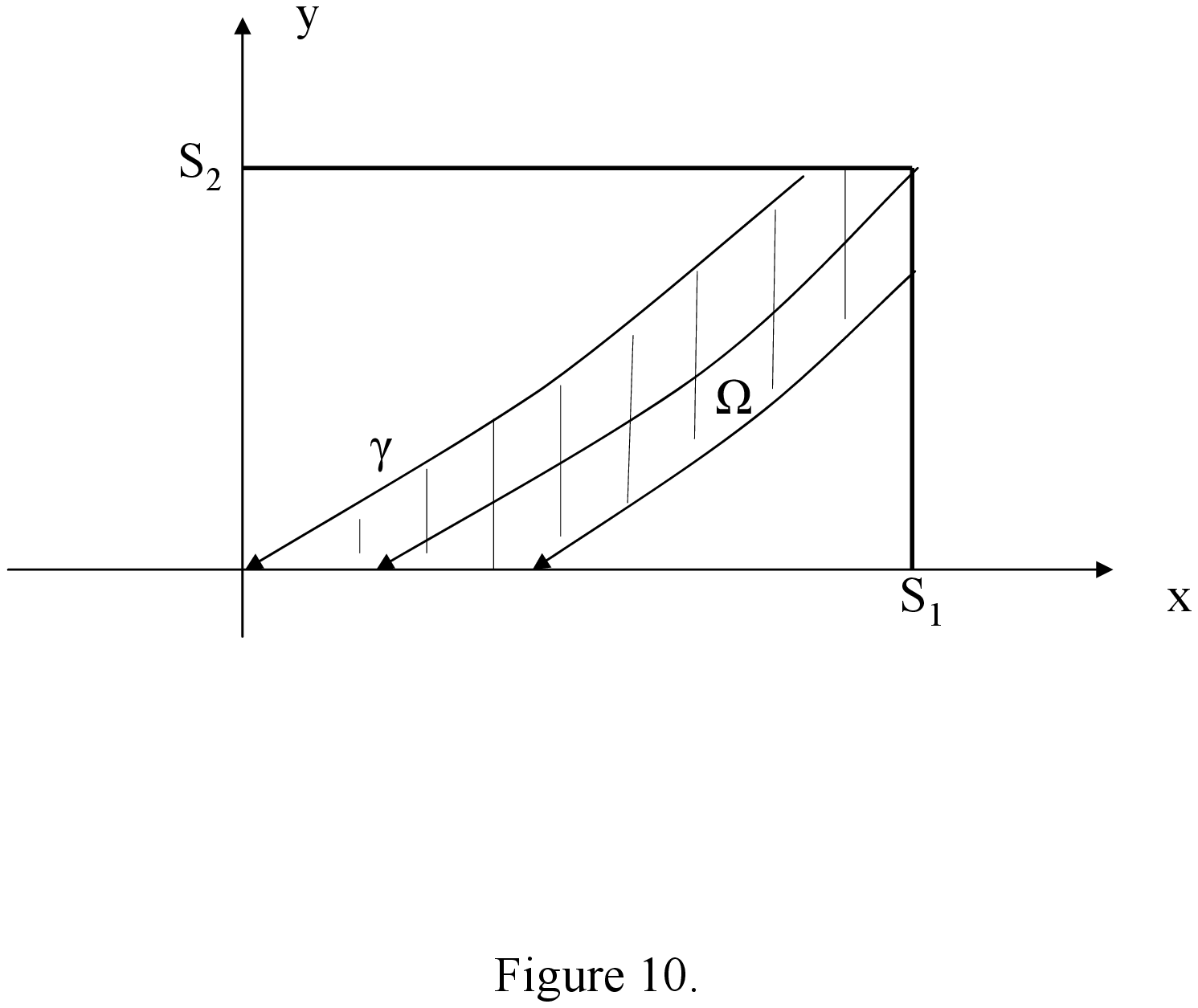}
\newpage
\includegraphics*[scale=.6]{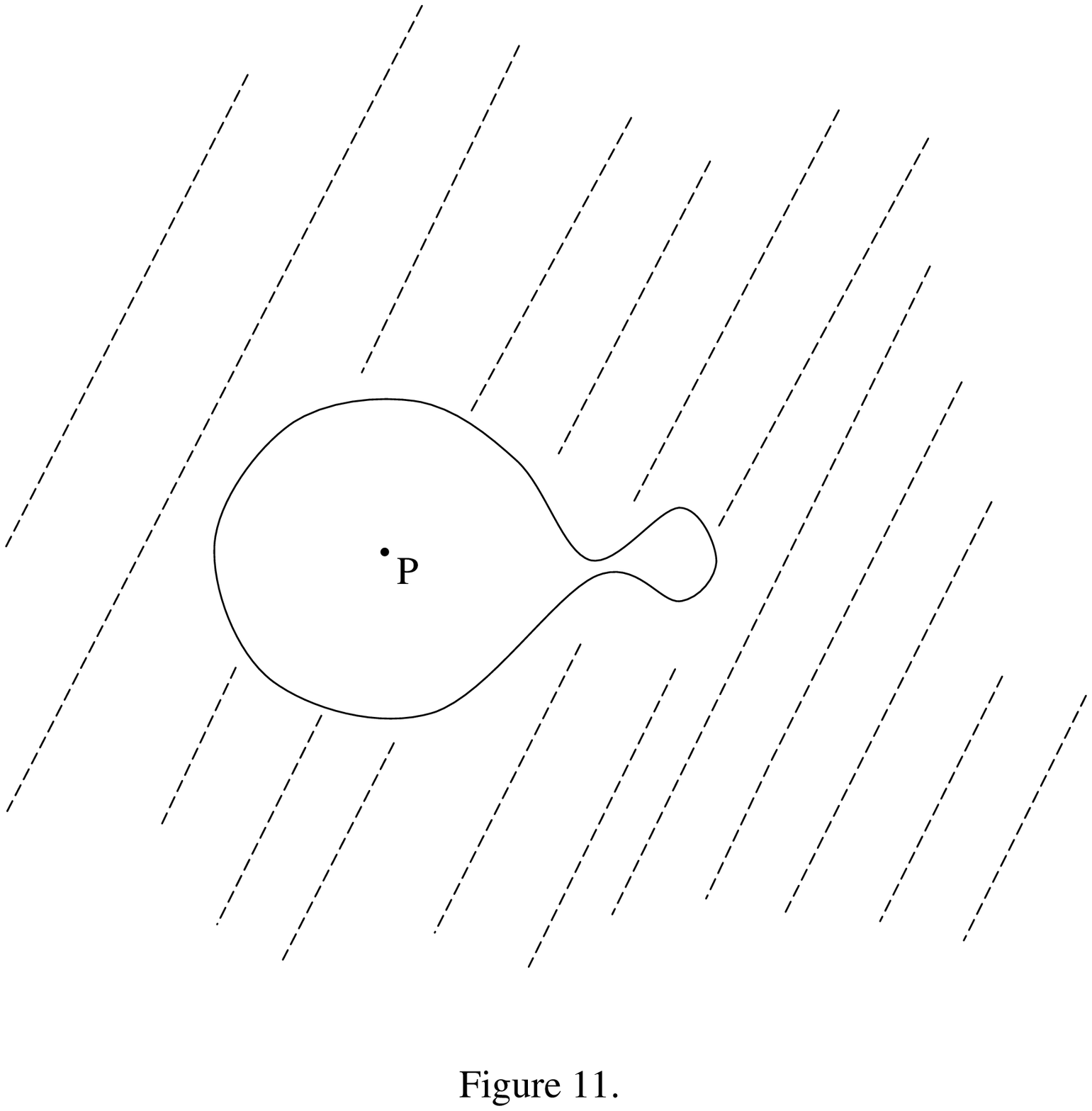}
\end{center}

\newpage

\end{document}